\newcommand{\commentout}[1]{}
\newcommand{\alert}[1]{\textbf{\color{red}
[[[#1]]]}\marginpar{\textbf{\color{red}**}}\typeout{ALERT:
\the\inputlineno: #1}}
\def\hE{\hat{E}}
\def\hX{\hat{X}}
\def\cP{{\cal P}}
\def\MathF{\hbox{\rm I\kern-2pt F}}
\def\MathP{\hbox{\rm I\kern-2pt P}}
\def\MathR{\hbox{\rm I\kern-2pt R}}
\def\MathZ{\hbox{\sf Z\kern-4pt Z}}
\def\MathN{\hbox{\rm I\kern-2pt I\kern-3.1pt N}}
\def\MathC{\hbox{\rm \kern0.7pt\raise0.8pt\hbox{\footnotesize I}
\kern-4.2pt C}}
\def\MathQ{\hbox{\rm I\kern-6pt Q}}
\newcommand{\polylog}{{\rm polylog}}
\newcommand{\R}{\mathbb{R}}
\newcommand{\etal}{\emph{et. al. }}
\newcommand{\E}{{\mathbb{E}}}
\newcommand{\mommit}[1]{}
\newcommand{\namedref}[2]{\hyperref[#2]{#1~\ref*{#2}}}
\newtheorem{theorem}{Theorem}
\newtheorem{lemma}{Lemma}
\newtheorem{corollary}[lemma]{Corollary}
\newtheorem{remark}{Remark}
\newtheorem{claim}[lemma]{Claim}
\newcommand{\ex}{{\cal EXP}}
\def\chS{\hat{\cal S}}
\def\chP{\hat{\cal P}}
\def\cP{{\cal P} }
\def\chU{\hat{\cal U}}
\def\exp{\mathit{exp}}
\def\tO{\tilde{O}}
\def\nin{\not \in}
\def\etal{{et al.~}}
\def\Rad{\mathit{Rad}}
\def\exp{\mathit{exp}}
\def\deg{\mathit{deg}}
\def\cS{{\cal S}}
\def\hC{{\hat C}}
\newcommand{\half}{\frac{1}{2}}
\def\eps{\epsilon}
\begin{document}
\title{Efficient Algorithms for Constructing Very Sparse Spanners and Emulators\footnote{A preliminary version \cite{EN17} of this paper appeared in SODA'17.}}
\author[1]{Michael Elkin\thanks{This research was supported by the ISF grant No. (724/15).}}
\author[1]{Ofer Neiman\thanks{Supported in part by ISF grant No. (523/12) and by BSF grant No. 2015813.}}

\affil[1]{Department of Computer Science, Ben-Gurion University of the Negev,
Beer-Sheva, Israel. Email: \texttt{\{elkinm,neimano\}@cs.bgu.ac.il}}

\date{}
\maketitle
\begin{abstract}
Miller et al. \cite{MPVX15} devised a distributed\footnote{They actually showed a PRAM algorithm. The distributed algorithm with these properties is implicit in \cite{MPVX15}.} algorithm in the CONGEST model, that given a parameter $k = 1,2,\ldots$, constructs an $O(k)$-spanner of an input unweighted $n$-vertex graph with $O(n^{1+1/k})$ expected edges in $O(k)$ rounds of communication. In this paper we improve the result of \cite{MPVX15}, by showing a $k$-round distributed algorithm in the same model, that constructs a $(2k-1)$-spanner with $O(n^{1+1/k}/\eps)$ edges,  with probability $1- \eps$, for any  $\eps>0$. Moreover, when $k = \omega(\log n)$, our algorithm produces (still in $k$ rounds) {\em ultra-sparse} spanners, i.e., spanners of size $n(1+ o(1))$, with probability $1- o(1)$. To our knowledge, this is the first distributed algorithm in the CONGEST or in the PRAM models  that constructs spanners or skeletons (i.e.,  connected spanning subgraphs) that sparse.
Our algorithm can also be implemented in linear time in the standard centralized model, and for large $k$, it provides spanners that are sparser than any other spanner given by a known (near-)linear time algorithm.


We also devise improved bounds (and algorithms realizing these bounds) for $(1+\epsilon,\beta)$-spanners and emulators. In particular, we show that for any unweighted $n$-vertex graph and any $\epsilon > 0$, there exists a $(1+ \epsilon, ({{\log\log n} \over \epsilon})^{\log\log n})$-emulator with $O(n)$ edges. All previous constructions of $(1+\epsilon,\beta)$-spanners and emulators employ a superlinear number of edges, for all choices of parameters.

Finally, we provide some applications of our results to approximate shortest paths' computation in unweighted graphs.
\end{abstract}


\section{Introduction}

\subsection{Setting, Definitions}

We consider unweighted undirected $n$-vertex graphs $G = (V,E)$. For a parameter $\alpha \ge 1$, a subgraph $H = (V,E')$, $E' \subseteq E$, is called an {\em $\alpha$-spanner} of $G$, if for every pair $u,v \in V$ of vertices, we have $d_H(u,v) \le \alpha \cdot d_G(u,v)$.
Here $d_G(u,v)$ (respectively, $d_H(u,v)$) stands for the distance between $u$ and $v$ in $G$ (resp., in $H$).  The parameter $\alpha$ is called the {\em stretch} of the spanner $H$. More generally, if for a pair of parameters $\alpha \ge 1$, $\beta \ge 0$, for every pair $u,v \in V$ of vertices, it holds that $d_H(u,v) \le \alpha \cdot d_G(u,v) + \beta$, then the subgraph $H$ is said to be an {\em $(\alpha,\beta)$-spanner} of $G$. Particularly important is the case $\alpha =1+\eps$, for some small $\eps > 0$. Such spanners are called {\em near-additive}.  If $H= (V,E'',\omega)$, where $\omega: E'' \rightarrow \R^+$, is not a subgraph of $G$, but nevertheless satisfies that for every pair $u,v \in V$ of original vertices, $d_G(u,v) \le d_H(u,v) \le (1+\eps) d_G(u,v) + \beta$, then $H$ is called a {\em near-additive $\beta$-emulator} of $G$, or a {\em $(1+\eps,\beta)$-emulator} of $G$.

Graph spanners have been introduced in \cite{Awe85,PS89,PU89a}, and have been intensively studied ever since \cite{ADDJS93,ABCP93,Coh93,ACIM99,DHZ00,BS03,E04,E06,EZ06,TZ06,W06,E07,Pet09,DGPV08,P10,BW15,MPVX15,AB16}. They were found useful for computing approximately shortest paths \cite{ABCP93,Coh93,E04,EZ06}, routing \cite{PU89b}, distance oracles and labeling schemes \cite{P99,TZ01,EP15}, synchronization \cite{Awe85}, and in other applications.

The simplest and most basic algorithm for computing a multiplicative $\alpha$-spanner, for a parameter $\alpha \ge 1$, is the {\em greedy} algorithm \cite{ADDJS93}.
The algorithm starts with an empty spanner, and examines the edges of the input graph $G = (V,E)$ one after another. It tests if there is a path in $H$ of length at most $\alpha$ between the endpoints $u$ and $v$ of $e$. If it is not the case, the edge is inserted into the spanner. Otherwise the edge is dropped.

It is obvious that the algorithm produces an $\alpha$-spanner $H$. Moreover, the spanner  $H$ has no cycles of length $\alpha + 1$ or less, i.e., the {\em girth} of $H$, denoted $g(H)$, satisfies $g(H) \ge \alpha + 2$. Denote $m = m(n,g)$ the maximum number of edges that a girth-$g$ $n$-vertex graph may contain. It follows that $|H| \le m(n,\alpha+2)$. The function $m(n,g)$ is known to be at most $n^{1+ {2 \over {g-2}}}$,
 when $g \le 2 \log_2 n$, and for larger $g$ (i.e., for $m \le 2n$), it is given by $m(n,g) \le n(1 + (1+o(1)) {{\ln (p+1)} \over g})$, where $p = m- n$, \cite{AHL02,BR10}. These bounds are called ``Moore's bounds for irregular graphs", or shortly, {\em (generalized) Moore's bounds}.

Any construction of multiplicative $\alpha$-spanners for $n$-vertex graphs with at most $m'(n,\alpha+2)$ edges implies an upper bound $m(n,\alpha+2) \le m'(n,\alpha+2)$ for the function $m(n,g)$. (As running the construction on the extremal girth-$(\alpha+2)$ $n$-vertex graph can eliminate no edge.) Hence
 the greedy algorithm produces multiplicative spanners with optimal tradeoff between stretch and number of edges. (See also \cite{FS16}.) However, the greedy algorithm  is problematic from algorithmic perspective. In the centralized model of computation, the best-known implementation of it \cite{RZ04} requires $O(\alpha  \cdot n^{2 + 1/\alpha})$ time.
 Moreover, the greedy algorithm is inherently sequential, and as such, it is generally hard\footnote{In the sequel we discuss a distributed setting, specifically, the LOCAL model, in which a relatively efficient implementation of the greedy is known.}  to implement it in distributed and parallel models of computation.

In the distributed model \cite{P00} we have processors residing in vertices of the graph. The processors communicate with their graph neighbors in synchronous rounds. In each round, messages of bounded length can be sent. (This is the assumption of the CONGEST model. In the LOCAL model, messages' size is arbitrary.)  The running time of an algorithm is this model is the number of rounds that it runs. By "parallel" model we mean here PRAM EREW model \cite{R93}, and we are interested in algorithms with small {\em running time} (aka {\em depth}) and {\em work} complexities. (The latter measures the overall number of operations performed by all processors.)

Dubhashi \etal \cite{DMPRS03} devised a distributed implementation of the greedy algorithm in the LOCAL model of distributed computation.
Their algorithm runs in $O(\alpha \cdot \log^2 n)$ rounds, i.e., suboptimal by a factor of $\log^2 n$. Moreover, it collects graph topology to depth $O(\alpha)$, and conducts heavy local computations. To our knowledge, there is no distributed-CONGEST or PRAM implementation of the greedy algorithm known. There is also no known efficient\footnote{By ``efficient" centralized algorithm in this paper we mean an algorithm with running time close to $O(|E|)$. By efficient distributed or parallel algorithm we mean an algorithm that runs in polylogarithmic, or nearly-polylogarithmic, time.}  centralized, distributed-CONGEST, or PRAM algorithm that constructs {\em ultra-sparse} spanners, i.e., spanners with $n + o(n)$ edges.

In the distributed and parallel settings it is often enough to compute a sparse {\em skeleton}  $H$ of the input graph $G$, where a {\em skeleton} is a connected subgraph that spans all the vertices of $G$, i.e., the stretch requirement is dropped. Dubhashi \etal \cite{DMPRS03} devised a distributed-LOCAL algorithm  that computes ultra-sparse skeletons of size $m(n,\alpha) \le n + O(n \cdot {{\log n} \over \alpha})$ in $O(\alpha)$ rounds. Like their algorithm for constructing spanners, this algorithm also collects topologies to depth $O(\alpha)$, and involves heavy local computations. To our knowledge, no efficient distributed-CONGEST or PRAM algorithm for computing {\em ultra}-sparse skeletons is currently known. In this paper we devise the first such algorithms.

\subsection{Prior Work and Our Results}

In the centralized model of computation the best-known efficient algorithm for constructing multiplicative spanners (for unweighted graphs) is due to Halperin and Zwick \cite{HZ96}. Their deterministic
algorithm, for an integer parameter $k \ge 1$, computes a $(2k-1)$-spanner with $n^{1+1/k} + n$ edges in $O(|E|)$ time.
(Their result improved previous pioneering work  by \cite{PS89,Coh93}.) Note that their bound on the number of edges is always at least $2n$,
 i.e., in the range $k = \Omega(\log n)$ it is very far from Moore's bound.

Our centralized randomized algorithm computes (with probability close to 1), a $(2k-1)$-spanner with $n\cdot (1 + O({{\log n} \over k}))$ edges in $O(|E|)$ time, whenever $k = \Omega(\log n)$. Note that when $k=\omega(\log n)$, the number of edges is $n(1+o(1))$, i.e., in this range the algorithm computes an ultra-sparse spanner in $O(|E|)$ time.
Moreover, whenever $k \le n^{1-\eps}$, for any constant $\eps>0$, up to a constant factor in the lower-order term, our bound matches Moore's bound. In fact, our algorithm and its analysis can be viewed as an alternative proof of (a slightly weaker version of) the generalized Moore's bound. Note that it is not the case for the greedy algorithm and its implementations \cite{ADDJS93,RZ04,DMPRS03}: the analysis of these algorithms relies on  Moore's bounds, but these algorithms cannot be used to derive them.

Another variant of our algorithm, which works for any $k \ge 2$, computes with high probability a $(2k-1)$-spanner with $n^{1+1/k}(1 + O({{\log k} \over k}))$ edges, in $\tO(k|E|)$ time.\footnote{As usual, $\tO(f(n))$ stands for $O(f(n) \polylog f(n))$.} In particular, for the range $k\ge {{2\ln n} \over {\ln\ln n}}$ the number of edges in our spanner is $n^{1+1/k}+o(n)$, improving the result of \cite{HZ96} (albeit with a somewhat worse running time for ${{2\ln n} \over {\ln\ln n}}\le k\le\log n$). Note that for any $k\ge 2$ we have $O(n^{1+1/k})$ edges.

Yet another variant of our algorithm computes a $(2k-1)$-spanner with $O(n^{1+1/k})$ edges, in expected $O(|E|)$ time.

In the distributed-CONGEST and PRAM models, efficient algorithms for computing linear-size spanners were given in \cite{Pet09,MPVX15}. Specifically, \cite{MPVX15} devised an $O(k)$-round distributed-CONGEST randomized algorithm for constructing $O(k)$-spanner (with high probability) with expected $O(n^{1+1/k})$ edges. In the PRAM model their algorithm has depth $O(k \log^* n)$ and work $O(|E|)$.
There are also $k$-round distributed-CONGEST randomized algorithms for constructing $(2k-1)$-spanner with expected $O(k \cdot n^{1+1/k})$ edges \cite{BS07,E06}.
It is known that at least $k$ rounds are required for this task, under Erd\H{o}s' girth conjecture \cite{E06,DGPV08}.

Our randomized algorithm  uses $k$ rounds in the distributed-CONGEST model, and with probability at least $1-\epsilon$ it constructs a $(2k - 1)$-spanner with $O(n^{1+1/k}/\epsilon)$ edges (for any desired, possibly sub-constant, $\epsilon>0$).
 In the PRAM model the depth and work complexities of our algorithm  are the same as in \cite{MPVX15}. Furthermore, when $k\ge \log n$ we can bound the number of edges by $n\cdot(1 + O({{\log n} \over \epsilon\cdot k}))$, again matching Moore's bound up to a constant factor in the lower-order term.

This result improves the previous state-of-the-art in the entire range of parameters. In particular, it is also the first efficient algorithm in the distributed-CONGEST or PRAM models that constructs an ultra-sparse skeleton. Specifically, in $\tO(\log n)$ time it computes an $\tO(\log n)$-spanner with $n(1 + o(1))$ edges, with probability $1 - o(1)$.

We also use our algorithm for unweighted graphs to devise an improved algorithm for {\em  weighted} graphs as well. Specifically, our algorithm computes $(2k-1)(1+\eps)$-spanner with $O(n^{1+1/k} \cdot (\log k)/\eps)$ edges, within expected $O(|E|)$ time. See Theorem \ref{thm:wt}, and the discussion that follows it, for further details.

\subsection{Near-Additive Spanners and Emulators}

It was shown in \cite{EP04} that for any $\eps > 0$ and $\kappa =1,2,\ldots$, and any (unweighted) $n$-vertex graph $G = (V,E)$, there exists a $(1+\eps,\beta)$-spanner with $O(\beta \cdot n^{1+1/\kappa})$ edges, where $\beta = \beta(\kappa,\eps) \le O({{\log \kappa} \over \eps})^{\log \kappa}$.
Additional algorithms for constructing such spanners were later given in \cite{E04,EZ06,TZ06,DGPV08,Pet09,P10}.
Abboud and Bodwin \cite{AB16} showed that multiplicative error of $1+\eps$ in \cite{EP04}'s theorem cannot be eliminated, while still keeping a constant (i.e., independent of $n$) additive error $\beta$, and more recently \cite{ABP17} showed that any such $(1+\eps,\beta)$-spanner of size $O(n^{1+1/\kappa-\delta})$, $\delta>0$, has $\beta=\Omega\left(\frac{1}{\epsilon\cdot\log \kappa}\right)^{\log \kappa-1}$.

 In the regime of constant $\kappa$, the bound of \cite{EP04} remains the state-of-the-art. Pettie \cite{Pet09} showed that one can construct a $(1+\eps,\beta)$-spanner with  $O(n \log\log (\eps^{-1} \log\log n))$ edges and $\beta = O({{\log\log n} \over \eps})^{\log\log n}$. 
This result of \cite{Pet09} is not efficient in the sense considered in this paper, i.e., no distributed or parallel implementations of it are known, and also no efficient (that is, roughly $O(|E|)$-time) centralized algorithm computing it is known. Also,  this result does not extend (\cite{Pettie-privcomm}) to a general tradeoff between $\beta$ and the number of edges.

Improving upon previous results by \cite{E05,EZ06}, Pettie \cite{P10} also devised an efficient distributed-CONGEST algorithm, that for a parameter $\rho > 0$, constructs in $\tO(n^\rho)$ rounds a $(1+ \eps,\beta)$-spanner with $O(n^{1+1/\kappa}  (\eps^{-1} \log\kappa)^\phi)$ edges and $\beta =O
\left({{\log \kappa + 1/\rho} \over \eps}\right)^{\log_\phi \kappa + 1/\rho}$, for $\phi = {{1  +\sqrt{5}} \over 2}$ being the golden ratio.
\footnote{
In the range of $\kappa =  o ({{\log n} \over {\log\log n}})$, the result of \cite{P10} is incomparable with \cite{EP04}, as spanners of \cite{EP04} provide  smaller $\beta$, while spanners of \cite{P10} are slightly sparser.}
Independently and simultaneously to our work, \cite{ABP17} showed that there exist $(1+ \eps,\beta)$-spanners with $O((\eps^{-1}\log\kappa)^h\cdot \log\kappa\cdot n^{1+1/\kappa})$ edges and $\beta =O\left({{\log \kappa } \over \eps}\right)^{\log \kappa-2}$, where $h=\frac{(3/4)\kappa -1-\log\kappa}{\kappa}<3/4$. This spanner has improved dependence on $\eps$ in the number of edges (at the cost of worse dependence on $\kappa$).

In this paper we improve all of the tradeoffs \cite{EP04,P10} in the entire range of parameters.
Specifically, for any $\eps > 0$, $\rho>0$ and $\kappa = 1,2,\ldots, {{\log n}\over{\log(1/\epsilon)+\log\log\log n}}$, our distributed-CONGEST algorithm constructs in $\tO(n^\rho)$ rounds a $(1+\eps,\beta)$-spanner with $O(n^{1+1/\kappa})$ edges and
$$\beta \le O\left({{\log \kappa  +1/\rho} \over \eps}\right)^{\log \kappa +1/\rho}~.$$
Our algorithm also admits efficient implementations in the streaming and standard models of computation, see Section \ref{sec:sp_central}.
Our spanners are sparser and have polynomially smaller $\beta$ than the previous best efficient constructions. They are even sparser than the state-of-the-art existential ones (with essentially the same $\beta$), with the following exceptions: whenever $\eps<1/\log^3\log n$ our result and that of \cite{ABP17} are incomparable,\footnote{The $i$-iterated logarithm is defined by $\log^{(i+1)} n = \log (\log^{(i)} n)$, for all $i \ge 0$, and $\log^{(0)} n =n$.} and
the spanner from \cite{Pet09} that has $O(n \log^{(4)} n)$ edges, while ours never gets sparser than $O(n\log\log n)/\eps$.
In the complementary range, $\eps > 1/\log^3\log n$, our result is strictly stronger than that of \cite{ABP17}.

Moreover, a variant of our algorithm efficiently constructs very sparse $(1 + \eps,\beta)$-emulators. In particular, we can obtain a {\em linear-size} $(1 + \eps,({{\log\log n} \over \eps})^{\log\log n})$-emulator. (We stress that the number of edges does not depend even on $\epsilon$.) All previous constructions of $(1+\eps,\beta)$-spanners or emulators  employ a superlinear number of edges, for all choices of parameters.

We use our new algorithms for constructing near-additive spanners and emulators to improve approximate shortest paths' algorithms, in the centralized and streaming models of computation. One notable result in this context is a streaming algorithm that for any constant $\eps > 0$ and any subset $S \subseteq V$ with $|S| = n^{\Omega(1)}$, computes $(1+\eps)$-approximate shortest paths for $S \times V$ within $O(|S|)$ passes over the stream, using $O(n^{1+\eps})$ space. See Section \ref{sec:appls} for more details, and additional applications of our spanners.

\subsection{Technical Overview}

Linial and Saks \cite{LS93} were the first to employ exponential random variables to build {\em network decompositions}, i.e., partitions of graphs into clusters of small diameter, which possess some useful properties. This technique was found useful for constructing padded partitions, hierarchically-separated trees,  low-stretch spanning trees \cite{Bar96,B98,B04,EEST05,ABN06,AN12} and spanners \cite{Coh93,BS03,E07}. In \cite{LS93} every vertex $v$ tosses a random variable $r_v$ from an exponential distribution, and broadcasts to all vertices within distance $r_v$ from $v$.
Every vertex $v$ joins the cluster of a vertex $u$ with largest identity number, whose broadcast $u$ heard.

Blelloch \etal \cite{BGKMPT11} introduced a variant of this technique in which, roughly speaking,  every vertex $v$ starts to broadcast at time $-r_v$, and broadcasts indefinitely.
A vertex $x$ joins the cluster centered at a vertex $v$, whose broadcast reaches $x$ first.
They called the resulting partition ``exponential start time clustering",  and it was demonstrated in \cite{BGKMPT11,MPX13,EN16b}  that this approach leads to very efficient distributed and parallel algorithms for constructing padded partitions and network decompositions.

Miller \etal \cite{MPVX15} used this approach to devise an efficient parallel and distributed-CONGEST $O(k)$-time algorithm for  constructing $O(k)$-spanner with $O(n^{1+1/k})$ edges. Specifically, they build the exponential time clustering, add the spanning trees of the clusters into the spanner, and then every vertex $x$ adds into the spanner one edge $(x,y)$ connecting $x$ to every adjacent cluster $C_y$, $y \in C_y$.

The main property of the partition exploited by \cite{MPVX15} in the analysis of their algorithm is that any unit-radius ball in the input graph $G$ intersects just $O(n^{2/k})$ clusters, in expectation. Note also that their algorithm is doomed to use at least $n^{1+1/k} + (n-1)$ edges, because it starts with inserting the spanning trees of all clusters (amounting to up to $n-1$ edges), and then inserts the $O(n^{1+2/k})$ edges crossing between different clusters into the spanner.
To get $O(n^{1+1/k})$ edges, one rescales $k' = 2k$.

In our algorithm we do not explicitly construct the exponential start time clustering. Rather we run the procedure that builds it, but every vertex $x$ connects not just to the neighbor $y$ through which $x$ received its first broadcast message at time, say, $t_y$, but also to all neighbors $z$ whose broadcast $x$ received witin time interval $[t_y,t_y+1]$. We show that, in expectation, $x$ connects to $n^{1/k}$ neighbors altogether, and not just to that many adjacent clusters.
As a result we obtain both a sparser spanner, a smaller stretch, and a smaller running time. The stretch and running time are smaller roughly by a factor of 2 than in \cite{MPVX15}, because we do not need to consider unit balls, that have diameter 2. Rather we tackle individual edges (of length  1).

In the context of {\em weighted} graphs, \cite{MPVX15} showed how their efficient algorithm for constructing sparse $(4k-2)$-spanners for unweighted graphs can be converted into an efficient algorithm that constructs $(16k-8)$-spanners with $O(n^{1+1/k} \log k)$ edges for weighted graphs. By using their scheme naively on top of our algorithm for unweighted graphs, one gets an efficient algorithm for computing $(4k-2)(1+\eps)$-spanners with $O(n^{1+1/k} \cdot (\log k)/\eps)$ edges. Roughly speaking, the overhead of 2 in the stretch is because in the analysis of \cite{MPVX15}, every vertex contributes expected $O(n^{1/k})$ edges to the spanner on each of roughly $O(n^{1/k})$ phases of the algorithm in which it participates. By employing a more delicate probabilistic argument, we argue that in fact, the expected total contribution of every vertex in {\em all phases altogether} is $O(n^{1/k})$, rather than $O(n^{2/k})$. This enables us to eliminate another factor of 2 from the stretch bound. See Section \ref{sec:wt} for details.

Our constructions of $(1+\eps,\beta)$-spanners and emulators follow the \cite{EP04} {\em superclustering and interconnection} approach.
 One starts with a base partition $\cP_0$. In \cite{EP04}  this was the partition of \cite{Awe85,PS89,AP92}, obtained via region-growing technique. Then every cluster $C \in \cP_0$ that has ``many" unclustered clusters of $\cP_0$ ``nearby", creates a supercluster around it. The ``many" and the ``nearby" are determined by degree threshold $\deg_0$ and distance threshold $\delta_0$, respectively. Once the superclustering phase is over, the remaining unclustered clusters enter an interconnection phase, i.e., every pair of participating nearby clusters gets interconnected by a shortest path in the spanner.  This completes one iteration of the process. The resulted superclustering $\cP_1$ is the input for the next iteration of this process, which runs with different, carefully chosen thresholds $\deg_1$ and $\delta_1$. Such iterations continue until only very few clusters survive. The latter are interconnected without further superclustering.

One bottleneck in devising efficient distributed algorithm based on this approach is the base partition. Known algorithms for constructing a region-growing partition of \cite{Awe85} require almost linear distributed time \cite{DMZ06}. We demonstrate that one can bypass it completely, and start from the base partition $\cP_0 = \{\{v\} \mid v \in V\}$. This requires some modfication of the algorithm, and a more nuanced analysis. In addition, we show that the superclustering and interconnection steps themselves can be implemented efficiently. This part of the algorithm is based on our recent work on hopsets \cite{EN16c}, where we showed that \cite{EP04} approach is extremely useful in that context as well, and that it can be made efficient.

\subsection{Related Work}

Efficient algorithms for constructing $(1+\eps,\beta)$-spanners were also devised in \cite{E05,EZ06,TZ06,Pet09,P10}.  These  algorithms are based, however, on different approaches than that of the current paper. The latter is based on \cite{EP04}. Specifically, the approach of \cite{E05,EZ06} is based on \cite{Coh93,C00} construction of pairwise covers and hopsets, i.e., the algorithm works top-down. It recurses in small clusters, and eliminates large ones. The approach of \cite{TZ06,Pet09,P10} is based on \cite{TZ01} collection of trees, used originally for distance oracles.

Streaming algorithms for constructing multiplicative spanners were given in \cite{FK+05,E07,B08}, and near-additive spanners in \cite{E05,EZ06}.
Spanners and emulators with sublinear error were given in \cite{TZ06,Pet09}. Spanners with purely additive error and lower bounds concerning them were given in \cite{ACIM99,EP04,BCE05,BKMP10,C13,W06,BW15,AB16}.

\subsection{Organization}

In Section \ref{sec:spanner} we present our algorithm for constructing multiplicative spanners and its analysis.
In Section \ref{sec:wt} we use this algorithm to provide improved spanners for weighted graphs as well.
Our near-additive spanners and emulators are presented in Section \ref{sec:sp_central}. 

\section{Sparse Multiplicative Spanners and Skeletons}\label{sec:spanner}

Let $G=(V,E)$ be a graph on $n$ vertices, and let $k\ge 1$ be an integer.
Let $c>3$ be a parameter governing the success probability, and set $\beta=\ln(cn)/k$. Recall the exponential distribution with parameter $\beta$, denoted $\ex(\beta)$, which has density
\[
f(x)=\left\{\begin{array}{ccc} \beta\cdot e^{-\beta x} & x\ge 0\\
0 & \text{otherwise.} \end{array} \right.
\]
\paragraph{Construction.}
Each vertex $u\in V$ samples a value $r_u$ from $\ex(\beta)$, and broadcasts it to all vertices within distance $k$. Each vertex $x$ that received a message originated at $u$, stores $m_u(x)=r_u-d_G(x,u)$, and also a neighbor $p_u(x)$ that lies on a shortest path from $x$ to $u$ (this neighbor sent $x$ the message from $u$, breaking ties arbitrarily if there is more than one). Let $m(x)=\max_{u\in V}\{m_u(x)\}$, then for every $x\in V$ we add to the spanner $H$ the set of edges
\[
C(x)=\left\{(x,p_u(x))~:~ m_u(x)\ge m(x)-1\right\}~.
\]

The following lemma is implicit in \cite{MPVX15}. We provide a proof for completeness.
\begin{lemma}[\cite{MPVX15}]\label{lem:MPVX}
Let $d_1\le\ldots\le d_n$ be arbitrary values and let $\delta_1,\dots,\delta_n$ be independent random variables sampled from $\ex(\beta)$. Define the random variables $M=\max_i\{\delta_i-d_i\}$ and $I=\{i~:~\delta_i-d_i\ge M-1\}$. Then for any $1\le t\le n$,
\[
\Pr[|I|\ge t]= (1-e^{-\beta})^{t-1}~.
\]
\end{lemma}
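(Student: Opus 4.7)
The plan is to prove the equality by induction on $t$, combining memorylessness of the exponential distribution with the classical identity on spacings of order statistics.

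Set $Y_i := \delta_i - d_i$, with order statistics $Y_{(1)} \le \cdots \le Y_{(n)}$. The event $|I|\ge t$ is equivalent to $Y_{(n)} - Y_{(n-t+1)} \le 1$, so it suffices to compute the CDF of the top-$t$ gap at $1$. The base case $t=1$ is immediate, since the argmax always lies in $I$: $\Pr[|I|\ge 1] = 1 = (1-e^{-\beta})^0$.

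For the inductive step, I would condition on the argmax index $i^*$ and on $M = Y_{i^*}$, and argue that the residuals $(M - Y_j)_{j\ne i^*}$ are jointly distributed as $n-1$ iid $\ex(\beta)$ variables. The tool is memorylessness of each $\delta_j$: conditioning on $\delta_j \le M+d_j$ (equivalently $Y_j \le M$), the quantity $M + d_j - \delta_j = M - Y_j$ should be absorbed into a fresh $\ex(\beta)$ variable, with the shift $d_j$ being swallowed by the conditioning event. Granted this iid reduction, the top-$t$ gap of $(Y_i)_i$ is distributionally identical to the top-$t$ gap of $n-1$ iid $\ex(\beta)$ samples.

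To compute the latter, invoke the classical spacings identity: for iid $V_1,\ldots,V_m \sim \ex(\beta)$, consecutive order-statistic gaps are independent with $V_{(j+1)} - V_{(j)} \sim \ex((m-j)\beta)$. Hence $Y_{(n)} - Y_{(n-t+1)}$ is distributed as $G_1 + \cdots + G_{t-1}$ with $G_j \sim \ex(j\beta)$ independent. A short induction on $t$, using the convolution formula $F_{t}(x) = \int_0^x F_{t-1}(x-y)\cdot t\beta e^{-t\beta y}\,dy$ together with the substitution $z = e^{-\beta y}$, collapses the integral cleanly to $F_t(x) = (1-e^{-\beta x})^{t-1}$. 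Setting $x=1$ yields $\Pr[|I|\ge t] = (1-e^{-\beta})^{t-1}$.

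The main obstacle is justifying the iid reduction in the inductive step. When $d_1 = \cdots = d_n$, the $(Y_i)_i$ are iid up to a common translation and the reduction is automatic. When the $d_i$'s are not all equal, the conditional distribution of $\delta_j$ given $\delta_j \le M+d_j$ is a truncated exponential whose parameters depend explicitly on $d_j$ and $M$, so it is not a priori clear that the residuals $M - Y_j$ decouple into iid $\ex(\beta)$ variables. Closing this gap — either by a coupling that identifies the joint distribution of the top-$t$ gap with its iid counterpart, or by a Poisson-process construction that absorbs the shifts $d_i$ by translating each process in time — is the most delicate part of the proof.
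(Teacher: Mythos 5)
Your reduction to the iid case is the crux of the argument, and as stated it is a genuine gap --- in fact the claimed reduction is false, not merely unproven. Conditioned on the argmax $i^*$ and on $M=Y_{i^*}$, each remaining $\delta_j$ is conditioned on the event $\delta_j\le M+d_j$, i.e.\ truncated from \emph{above}; the residual $M-Y_j$ is then supported on the bounded interval $[0,M+d_j]$, is not exponential, and is not identically distributed across $j$ when the $d_j$ differ. This already fails in the unshifted case $d_1=\cdots=d_n$: given the maximum, the other values are exponentials truncated above at $M$, so $M-Y_j$ is certainly not $\ex(\beta)$. Memorylessness controls the overshoot above a threshold, never the undershoot below a conditioned maximum. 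Moreover, even granting the iid-residual claim, the top-$t$ gap $Y_{(n)}-Y_{(n-t+1)}$ would be the $(t-1)$-st \emph{smallest} of the $n-1$ residuals, whose law depends on $n$ (for $t=2$ it would be $\ex((n-1)\beta)$, whereas the target answer corresponds to $\ex(\beta)$); so the identification with ``the top-$t$ gap of $n-1$ iid samples'' does not follow either.

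The paper closes exactly this point by conditioning in the opposite direction: condition on the $t$-th largest value, $X^{(t)}=a$, rather than on the maximum. Given $X^{(t)}=a$, the event $|I|\ge t$ is precisely that the top $t-1$ values lie in $[a,a+1]$; each of those variables is conditioned on $\delta_j\ge a+d_j$, a tail conditioning where memorylessness does apply, so their excesses above $a$ are independent $\ex(\beta)$ variables (the shifts $d_j$ are absorbed into the conditioning threshold), and the conditional probability is $(1-e^{-\beta})^{t-1}$ independently of $a$. The law of total probability then yields the unconditional identity, with no need for the spacings/convolution machinery (your computation of the iid top-$t$ gap is correct but becomes unnecessary). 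If you want to salvage your route, replace the conditioning on the maximum by conditioning at the $t$-th order statistic, or set up a genuine coupling (e.g.\ via time-shifted Poisson processes); as written, the decoupling step you flag as ``delicate'' is where the proof breaks.
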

\begin{proof}
Denote by $X^{(t)}$ the random variable which is the $t$-th largest among $\{\delta_i-d_i\}$. Then for any value $a\in \R$, if we condition on $X^{(t)}=a$, then the event $|I|\ge t$ is exactly the event that all the remaining $t-1$ values $X^{(1)},\dots,X^{(t-1)}$ are at least $a$ and at most $a+1$. Using the memoryless property of the exponential distribution and the independence of the $\{\delta_i\}$, we have that
\[
\Pr[|I|\ge t\mid X^{(t)}=a]=(1-e^{-\beta})^{t-1}~.
\]
Since this bound does not depend on the value of $a$, applying the law of total probability we conclude that
\[
\Pr[|I|\ge t] = (1-e^{-\beta})^{t-1}~.
\]
\end{proof}
Using this lemma, we can bound the expected size of the spanner.
\begin{lemma}\label{lem:size}
The expected size of $H$ is at most $(cn)^{1/k}\cdot n$.
\end{lemma}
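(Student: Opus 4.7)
The plan is to bound $|H|$ by the sum $\sum_{x\in V}|C(x)|$ (edges may be double-counted across both endpoints, but this can only inflate the count, so this simple union bound is sufficient), and then to bound $\mathbb{E}[|C(x)|]$ for each fixed $x$ by a direct application of \lemmaref{lem:MPVX}. So the first step I would write down is
\[
\mathbb{E}[|H|] \;\le\; \sum_{x\in V}\mathbb{E}[|C(x)|].
\]

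Next, I would fix a vertex $x\in V$ and apply \lemmaref{lem:MPVX} to the collection of vertices $u$ from which $x$ received a broadcast, i.e.\ the vertices $u$ with $d_G(x,u)\le k$. For each such $u$, set $\delta_u := r_u$ and $d_u := d_G(x,u)$; the $r_u$'s are independent $\ex(\beta)$ variables by construction, and the $d_u$'s are arbitrary non-negative numbers (we may re-sort them in nondecreasing order, as in the hypothesis of the lemma). With these identifications, $M=m(x)$ and the index set $I$ of the lemma is precisely $\{u:m_u(x)\ge m(x)-1\}$. Since $|C(x)|$ is at most the size of this index set (different $u$'s might give the same parent edge $(x,p_u(x))$), \lemmaref{lem:MPVX} yields $\Pr[|C(x)|\ge t]\le(1-e^{-\beta})^{t-1}$ for every $t\ge 1$.

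Now I would compute the expectation via the standard tail-sum identity:
\[
\mathbb{E}[|C(x)|] \;=\; \sum_{t\ge 1}\Pr[|C(x)|\ge t] \;\le\; \sum_{t\ge 1}(1-e^{-\beta})^{t-1} \;=\; \frac{1}{e^{-\beta}} \;=\; e^{\beta}.
\]
Plugging in $\beta=\ln(cn)/k$ gives $e^{\beta}=(cn)^{1/k}$, and summing over the $n$ vertices $x$ completes the proof:
\[
\mathbb{E}[|H|]\;\le\; n\cdot (cn)^{1/k}.
\]

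There is really no serious obstacle here, since \lemmaref{lem:MPVX} does the probabilistic heavy lifting; the only minor thing to be careful about is the double-counting of edges and the observation that restricting attention to vertices within distance $k$ of $x$ does not affect the applicability of the lemma (the lemma is stated for an arbitrary finite collection). If anything, the subtle step is noticing that $|C(x)|\le|I|$ rather than equal, because distinct source vertices $u$ can route through the same neighbor $p_u(x)$; this inequality only strengthens the bound and so is harmless.
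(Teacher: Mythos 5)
Your proof is correct and follows essentially the same route as the paper: bound $\E[|H|]$ by $\sum_x \E[|C(x)|]$, identify the event $|C(x)|\ge t$ with at least $t$ shifted exponentials lying within $1$ of the maximum, apply \lemmaref{lem:MPVX}, and sum the geometric tail to get $e^{\beta}=(cn)^{1/k}$. The extra care you take about double-counting, about $|C(x)|\le|I|$, and about restricting to vertices within distance $k$ matches the paper's own (briefer) remarks and introduces no gap.
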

\begin{proof}
Fix any $x\in V$, and we analyze $\E[|C(x)|]$. Note that the event $|C(x)|\ge t$ happens when there are at least $t$ shifted random variables $r_u-d_G(u,x)$ that are within 1 of the maximum. By Lemma \ref{lem:MPVX} this happens with probability at most $(1-e^{-\beta})^{t-1}$ (we remark that if $x$ did not hear at least $t$ messages, then trivially $\Pr[|C(x)|\ge t]=0$). We conclude that
\begin{eqnarray*}
\E[|C(x)|]&=&\sum_{t=1}^n\Pr[|C(x)|\ge t]
\le\sum_{t=0}^\infty(1-e^{-\beta})^t=e^{\beta} = (cn)^{1/k}~,
\end{eqnarray*}
and the lemma follows by linearity of expectation.
\end{proof}

We now argue about the stretch of the spanner.

\begin{claim}\label{claim:radius}
With probability at least $1-1/c$, it holds that $r_u< k$ for all $u\in V$.
\end{claim}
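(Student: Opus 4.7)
The plan is to use the tail bound of the exponential distribution together with a union bound over the $n$ vertices. Since each $r_u \sim \ex(\beta)$ independently, a direct integration gives
\[
\Pr[r_u \ge k] \;=\; \int_k^\infty \beta e^{-\beta x}\, dx \;=\; e^{-\beta k}.
\]
Substituting the chosen parameter $\beta = \ln(cn)/k$ makes the exponent collapse to $-\ln(cn)$, so $\Pr[r_u \ge k] = 1/(cn)$.

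Next I would apply a union bound over the $n$ vertices: the probability that \emph{some} vertex $u \in V$ has $r_u \ge k$ is at most $n \cdot 1/(cn) = 1/c$. Taking complements yields the claim.

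There is no real obstacle here; the only subtlety to flag is that the choice $\beta = \ln(cn)/k$ was engineered precisely so that each individual tail $e^{-\beta k}$ equals $1/(cn)$, making the union bound come out to $1/c$. I would also note (without elaborating) that this is the reason the construction is meaningful: if $r_u < k$ for every $u$, then the broadcast of each $u$'s value is not truncated by the distance-$k$ cutoff, so the random variables $m_u(x)$ and the set $C(x)$ are determined exactly as in the idealized (uncapped) process, which is what \lemmaref{lem:MPVX} and \lemmaref{lem:size} implicitly assume.
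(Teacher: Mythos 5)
Your proof is correct and is essentially identical to the paper's: compute the exponential tail $\Pr[r_u \ge k] = e^{-\beta k} = 1/(cn)$ using $\beta = \ln(cn)/k$, then union-bound over the $n$ vertices. The extra remark about why the event matters for the construction is fine but not part of the claim itself.
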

\begin{proof}
For any $u\in V$, $\Pr[r_u\ge k] = e^{-\beta k} = 1/(cn)$. By the union bound, $\Pr[\exists u,~r_u\ge k]\le 1/c$.
\end{proof}
Assume for now that the event of Claim \ref{claim:radius} holds, i.e., that $r_u< k$ for all $u\in V$.
\begin{corollary}\label{cor:max}
For any $x\in V$, if $u\in V$ is the vertex maximizing $m_u(x)$, then $d_G(u,x)<k$.
\end{corollary}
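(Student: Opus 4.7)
The plan is to exploit two facts simultaneously: the vertex $x$ always hears its own broadcast, and under the event of Claim \ref{claim:radius} every sampled radius is strictly less than $k$. These together force the maximizer to be close to $x$.

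First I would observe that $x$ is always in its own broadcast range since $d_G(x,x)=0\le k$, so the value $m_x(x)=r_x-d_G(x,x)=r_x\ge 0$ is always present in the set over which $m(x)$ is taken. Therefore $m(x)\ge 0$.

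Next, let $u$ be the vertex achieving the maximum $m(x)=m_u(x)=r_u-d_G(u,x)$. Combining with the previous step, $r_u-d_G(u,x)\ge 0$, i.e., $d_G(u,x)\le r_u$. Under the assumption of Claim \ref{claim:radius} we have $r_u<k$, so $d_G(u,x)<k$, which is exactly the desired conclusion.

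There is essentially no obstacle here; the only subtlety worth stating explicitly is the justification that the maximum is taken over a nonempty set (so that ``the vertex maximizing $m_u(x)$'' is well defined), which is precisely why including $u=x$ in the analysis matters. Once that is noted, the argument is a two-line chain of inequalities.
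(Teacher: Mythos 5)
Your proof is correct and follows exactly the paper's argument: establish $m(x)\ge m_x(x)\ge 0$, then combine $0\le m_u(x)=r_u-d_G(u,x)$ with $r_u<k$ from Claim \ref{claim:radius} to get $d_G(u,x)<k$. No gaps.
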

\begin{proof}
First note that $m(x)\ge m_x(x)\ge 0$, and using Claim \ref{claim:radius} we have $r_u<k$. So $0\le m(x)=m_u(x)=r_u-d_G(u,x)<k-d_G(u,x)$.
\end{proof}
\begin{claim}\label{claim:path}
For any $u,x\in V$, if $x$ adds an edge to $p_u(x)$, then there is a shortest path $P$ between $u$ and $x$ that is fully contained in the spanner $H$.
\end{claim}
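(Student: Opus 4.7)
The plan is to prove the claim by induction on $d_G(u,x)$, with the base case $x = u$ being trivial. For the inductive step, let $y = p_u(x)$, which is by construction a neighbor of $x$ lying on a shortest path from $x$ to $u$, so $d_G(y,u) = d_G(x,u)-1$. The edge $(x,y)$ is already in $H$. It will suffice to show that $y$ itself adds an edge to $p_u(y)$ to its own neighbor set $C(y)$; then, by the inductive hypothesis applied at $y$, a shortest $u$--$y$ path lies in $H$, and concatenating with $(x,y)$ yields a shortest $u$--$x$ path in $H$.

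The key inequality I need is $m_u(y) \ge m(y) - 1$, which is exactly the criterion for $y$ to add the edge $(y, p_u(y))$ to $C(y)$. First, I would observe the ``shift-by-one'' identity
\[
m_u(y) \;=\; r_u - d_G(u,y) \;=\; r_u - d_G(u,x) + 1 \;=\; m_u(x) + 1,
\]
and combine this with the hypothesis that $x$ added $(x,p_u(x))$, which gives $m_u(x) \ge m(x) - 1$. These together yield $m_u(y) \ge m(x)$. So the entire proof reduces to establishing the comparison $m(y) \le m(x) + 1$ between the maxima at neighbors $x$ and $y$.

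The heart of the argument — and the step that I expect to be the main obstacle, since it is the only place the broadcast radius enters — is precisely this comparison. I would proceed as follows: let $v^*$ be a vertex that attains $m(y) = m_{v^*}(y)$. By Corollary~\ref{cor:max} (applied at $y$), under the high-probability event of Claim~\ref{claim:radius} we have $d_G(v^*,y) < k$. Hence $d_G(v^*,x) \le d_G(v^*,y) + 1 \le k$, so $x$ also received $v^*$'s broadcast, and therefore
\[
m(x) \;\ge\; m_{v^*}(x) \;=\; r_{v^*} - d_G(v^*,x) \;\ge\; r_{v^*} - d_G(v^*,y) - 1 \;=\; m(y) - 1,
\]
which is the desired inequality $m(y) \le m(x) + 1$. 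Chaining this with the previous paragraph gives $m_u(y) \ge m(x) \ge m(y) - 1$, as required.

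Putting it together: the edge $(y, p_u(y))$ belongs to $H$, and since $d_G(y,u) < d_G(x,u)$, the inductive hypothesis produces a shortest $u$--$y$ path inside $H$. Appending the edge $(x,y)$ to this path gives a $u$--$x$ walk of length $d_G(x,u)$ in $H$, which must be a shortest path. The only subtlety is that the induction implicitly re-uses the event $\{r_u < k \text{ for all } u\}$ of Claim~\ref{claim:radius}, which has already been assumed globally in the paragraph preceding the claim, so no additional probabilistic bookkeeping is needed.
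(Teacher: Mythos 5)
Your proof is correct and follows essentially the same route as the paper's: induction on $d_G(u,x)$, reducing to the key inequality $m(y)\le m(x)+1$, which you establish directly via Corollary~\ref{cor:max} while the paper phrases the identical argument as a proof by contradiction. The only other (immaterial) difference is that you start the induction at $x=u$ rather than at $d_G(u,x)=1$.
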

\begin{proof}
We prove by induction on $d_G(u,x)$. In the base case $d_G(x,u)=1$, then $p_u(x)=u$, so $(x,u)$ is in the spanner. Assume that every vertex $y\in V$ with $d_G(u,y)=t-1$ which added an edge to $p_u(y)$ has a shortest path to $u$ in $H$, and we prove for $x$ that has $d_G(u,x)=t$. We know that $x$ added an edge to $y=p_u(x)$, which lies on a shortest path to $u$, and thus satisfies $d_G(u,y)=t-1$. It remains to show that this $y$ added an edge to $p_u(y)$. First we claim that
\begin{equation}\label{eq:1}
m(y)\le m(x)+1~.
\end{equation}
Seeking contradiction, assume that \eqref{eq:1} does not hold, and let $v\in V$ be the vertex maximizing $m_v(y)$. By Corollary \ref{cor:max} we have $d_G(v,y)<k$, and thus $d_G(v,x)\le k$. Hence  $x$ will hear the message of $v$. This means that $m_v(x)\ge m_v(y)-1=m(y)-1>m(x)$, which is a contradiction. This establishes \eqref{eq:1}. Now, since $x$ added an edge to $y=p_u(x)$, by construction
\begin{equation}\label{eq:2}
m_u(x)\ge m(x)-1~.
\end{equation}
We conclude that
\[
m_u(y)= m_u(x)+1\stackrel{\eqref{eq:2}}{\ge} m(x)-1+1\stackrel{\eqref{eq:1}}{\ge} m(y)-1~,
\]
so $y$ indeed adds an edge to $p_u(y)$, and by the induction hypothesis we are done.
\end{proof}

\begin{lemma}\label{lem:stretch}
The spanner $H$ has stretch at most $2k-1$.
\end{lemma}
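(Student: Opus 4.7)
The plan is to reduce the claim to showing that every edge $(x,y) \in E$ satisfies $d_H(x,y) \le 2k-1$; once this is established for single edges, applying it along a shortest $G$-path gives the stretch bound for all pairs. I will assume throughout the event of \claimref{claim:radius}, so that \corollaryref{cor:max} applies to every vertex.

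Fix an arbitrary edge $(x,y) \in E$. By symmetry, assume $m(x) \le m(y)$. Let $v \in V$ be the vertex attaining $m(y) = m_v(y)$. The key observation is that the broadcast from $v$ also reaches $x$: by \corollaryref{cor:max} we have $d_G(v,y) < k$, so $d_G(v,x) \le d_G(v,y) + 1 \le k$, hence $x$ stores $m_v(x)$. Moreover
\[
m_v(x) \;=\; r_v - d_G(v,x) \;\ge\; r_v - d_G(v,y) - 1 \;=\; m_v(y) - 1 \;=\; m(y) - 1 \;\ge\; m(x) - 1.
\]
Thus $x$ satisfies the inclusion criterion for $v$ and adds the edge to $p_v(x)$. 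Trivially, $y$ also adds the edge to $p_v(y)$, since $m_v(y) = m(y) \ge m(y) - 1$.

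Now I invoke \claimref{claim:path} twice: once at $x$ and once at $y$. It yields that some shortest path in $G$ from $v$ to $x$ (of length $d_G(v,x) \le k$) lies entirely in $H$, and similarly some shortest path from $v$ to $y$ (of length $d_G(v,y) \le k-1$) lies in $H$. Concatenating these two paths gives
\[
d_H(x,y) \;\le\; d_G(v,x) + d_G(v,y) \;\le\; k + (k-1) \;=\; 2k-1,
\]
as desired. For an arbitrary pair $u,w \in V$, summing the bound along a shortest $u$--$w$ path of length $d_G(u,w)$ shows $d_H(u,w) \le (2k-1)\cdot d_G(u,w)$.

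The main obstacle I anticipate is the direction of the case split: one must be careful to choose the vertex $v$ as the maximizer for the endpoint with the \emph{larger} value of $m(\cdot)$, since only then does the inequality $m_v(x) \ge m(x) - 1$ propagate to the other endpoint. Once this choice is made, the rest is essentially a bookkeeping argument combining \corollaryref{cor:max} (to ensure $v$'s message reached $x$) with \claimref{claim:path} (to realize the paths from $v$ to $x$ and to $y$ inside $H$). Everything relies on the high-probability event of \claimref{claim:radius}, which must be invoked once at the start.
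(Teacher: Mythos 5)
Your proof is correct and follows essentially the same route as the paper's: take the maximizer $v$ of $m(\cdot)$ at the endpoint with the larger value, use \corollaryref{cor:max} to show its broadcast reached the other endpoint within an additive $1$ of that endpoint's maximum, and then invoke \claimref{claim:path} at both endpoints to realize the two shortest paths to $v$ (of lengths at most $k-1$ and $k$) inside $H$. The only difference is notational (you condition on the endpoint with the smaller $m$, the paper on the larger), so there is nothing to add.
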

\begin{proof}
Since $H$ is a subgraph of $G$, it suffices to prove for any $(x,y)\in E$, that $d_H(x,y)\le 2k-1$. Let $u$ be the vertex maximizing $m(x)=m_u(x)$, and w.l.o.g assume $m(x)\ge m(y)$. By Corollary \ref{cor:max} we have $d_G(u,x)\le k-1$, so $d_G(u,y)\le k$, and $y$ heard the message of $u$ (which was sent to distance $k$). This implies that $m_u(y)\ge m_u(x)-1=m(x)-1\ge m(y)-1$, so $y$ adds the edge $(y,p_u(y))$ to the spanner. By applying Claim \ref{claim:path} on $x$ and $y$, we see that both have shortest paths to $u$ that are fully contained in $H$. Since $d_G(x,u)\le k-1$ and $d_G(y,u)\le k$, these two paths provide stretch $2k-1$ between $x,y$.
\end{proof}

\subsection{Main Theorem}

We now state our main theorem, from which we will derive several interesting corollaries in various settings.

\begin{theorem}\label{thm:MPX-spanner}
For any unweighted graph $G=(V,E)$ on $n$ vertices, any integer $k\ge 1$, $c>3$ and $\delta>0$, there is a randomized algorithm that with probability at least $(1-1/c)\cdot\delta/(1+\delta)$ computes a spanner with stretch $2k-1$ and number of edges at most
\[
(1+\delta)\cdot\frac{(cn)^{1+1/k}}{c-1}-\delta(n-1)~.
\]
\end{theorem}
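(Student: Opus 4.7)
The plan is to combine the stretch guarantee with a \emph{conditional} Markov estimate on $|H|$, shifting by $n-1$ to recover the additive improvement. Let $A$ denote the event that $r_u < k$ for every $u \in V$; by Claim \ref{claim:radius} we have $\Pr[A] \ge 1 - 1/c$, and on $A$ Lemma \ref{lem:stretch} certifies that the stretch of $H$ is at most $2k-1$. The unconditional bound $\E[|H|] \le n(cn)^{1/k}$ of Lemma \ref{lem:size}, together with the standard inequality $\E[|H|] \ge \E[|H|\,\mathbf{1}_A] = \E[|H| \mid A]\cdot\Pr[A]$, yields
\[
\E[|H| \mid A] \;\le\; \frac{n(cn)^{1/k}}{1 - 1/c} \;=\; \frac{(cn)^{1+1/k}}{c-1}.
\]

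To extract the additive $-\delta(n-1)$ term, I would apply Markov not to $|H|$ itself but to the shifted non-negative variable $|H|-(n-1)$. Assume without loss of generality that $G$ is connected (else argue per connected component); then on $A$ the subgraph $H$ is a $(2k-1)$-spanner of a connected graph and hence itself connected, so $|H|\ge n-1$. Consequently $\E[|H|-(n-1)\mid A] \le \tfrac{(cn)^{1+1/k}}{c-1}-(n-1)$, and Markov gives
\[
\Pr\!\left[|H|-(n-1) \ge (1+\delta)\!\left(\tfrac{(cn)^{1+1/k}}{c-1}-(n-1)\right) \,\Big|\, A\right] \;\le\; \tfrac{1}{1+\delta}.
\]
The complementary event, after the simplification $(n-1)+(1+\delta)\big(\tfrac{(cn)^{1+1/k}}{c-1}-(n-1)\big) = (1+\delta)\tfrac{(cn)^{1+1/k}}{c-1} - \delta(n-1)$, is exactly the claimed size bound, and it holds with conditional probability at least $\delta/(1+\delta)$.

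Finally, writing $B$ for the size event, $\Pr[A\cap B] = \Pr[A]\cdot\Pr[B\mid A] \ge (1-1/c)\cdot\delta/(1+\delta)$, which matches the theorem. The main subtlety is insisting on the \emph{conditional} Markov step: the factor $c/(c-1)$ that appears in $\E[|H|\mid A]$ is precisely the price paid for restricting to $A$, and it is exactly what turns Lemma \ref{lem:size}'s estimate $n(cn)^{1/k}$ into the $(cn)^{1+1/k}/(c-1)$ appearing in the statement. A naive union bound over $A$ and an unconditional Markov event on $|H|$ would instead produce an additive success probability of $\delta/(1+\delta)-1/c$, strictly weaker than the product $(1-1/c)\cdot\delta/(1+\delta)$ required.
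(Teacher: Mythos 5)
Your proposal is correct and follows essentially the same route as the paper: condition on the event that all $r_u<k$ (giving stretch $2k-1$ via Claim \ref{claim:radius} and Lemma \ref{lem:stretch}), transfer the unconditional expectation of Lemma \ref{lem:size} to a conditional one at the cost of a factor $c/(c-1)$, and apply Markov to the shifted nonnegative variable $|H|-(n-1)$. The only (immaterial) difference is that the paper applies the conditioning inequality directly to $|H|-(n-1)$ rather than to $|H|$, which yields a marginally tighter bound than the one stated in the theorem, whereas yours reproduces the stated bound exactly.
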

\begin{proof}
Let ${\cal Z}$ be the event that $\{\forall u\in V,~ r_u<k\}$. By Claim \ref{claim:radius} we have $\Pr[{\cal Z}]\ge 1-1/c$. Note that conditioning on ${\cal Z}$, by Lemma \ref{lem:stretch} the algorithm produces a spanner $H=(V,E')$ with stretch $2k-1$. In particular, it must have at least $n-1$ edges. Let $X$ be the random variable $|E'|-(n-1)$, which conditioned on ${\cal Z}$ takes only nonnegative values. By Lemma \ref{lem:size} we have $\E[X]=(cn)^{1/k}\cdot n-(n-1)$.
We now argue that conditioning on ${\cal Z}$ will not affect this expectation by much. Indeed, by the law of total probability, for any $t$, $\Pr[X=t]\ge \Pr[X=t\mid{\cal Z}]\cdot\Pr[{\cal Z}]$. Thus

\begin{equation}\label{eq:expect}
\E[X\mid{\cal Z}]\le \frac{\E[X]}{\Pr[{\cal Z}]}\le\frac{c}{c-1}\cdot\left[(cn)^{1/k}\cdot n-(n-1)\right]~.
\end{equation}
By Markov inequality,
\[
\Pr\left[X\ge (1+\delta)\E[X\mid{\cal Z}]\mid{\cal Z}\right]\le \frac{1}{1+\delta}~.
\]

We conclude that

\begin{eqnarray*}
\Pr\left[\left(X< (1+\delta)\E[X\mid{\cal Z}]\right)\wedge{\cal Z}\right]= \Pr\left[X<(1+\delta) \E[X\mid{\cal Z}]\mid{\cal Z}\right]\cdot\Pr[{\cal Z}]\ge \left(1-\frac{1}{c}\right)\cdot\frac{\delta}{1+\delta}~.
\end{eqnarray*}
If this indeed happens, then
\begin{eqnarray*}
|E'| &=& X+n-1\\
&\stackrel{\eqref{eq:expect}}{\le}&\!\!\!\! (1+\delta)\cdot\frac{c}{c-1}\cdot\left[(cn)^{1/k}\cdot n-(n-1)\right]+n-1\\
&=&\!\!\!\! (1+\delta)\cdot\frac{(cn)^{1+1/k}-(n-1)}{c-1}-\delta (n-1)~.
\end{eqnarray*}

\end{proof}


\subsubsection{Implementation Details}\label{sec:implement}

\paragraph{Distributed Model.}
It is straightforward to implement the algorithm in the LOCAL model of computation, it will take $k$ rounds to execute it -- in each round, every vertex sends to its neighbors all the messages it received so far. We claim that the algorithm can be implemented even when bandwidth is limited, i.e., in the CONGEST model. This will require a small variation: in each round, every vertex $v\in V$ will send to all its neighbors the message $(r_u,d_G(u,v))$ for the vertex $u$ that currently maximizes $m_u(v)=r_u-d_G(u,v)$. We note that omitting all the other messages will not affect the algorithm, since if one such message would cause some neighbor of $v$ to add an edge to $v$, then the message about $u$ will suffice, as the latter has the largest $m_u(v)$ value. 
(Also recall that all vertices start their broadcast simultaneously, and do so for $k$ rounds, so any omitted message could not have been sent to further distance than the message from $u$, which implies dropping it will have no effects on farther vertices as well.)

\paragraph{PRAM Model.}
In the parallel model of computation, we can use a variant of the construction that appeared in \cite{MPX13,MPVX15}. Roughly speaking, vertex $u$ will start its broadcast at time $k-\lceil r_u\rceil$, and every vertex $x$ will send only the first message that arrives to it (which realizes $m(x)$). As argued in \cite{MPVX15}, the algorithm can be implemented in $O(k\log^*n)$ depth and $O(|E|)$ work.

\paragraph{Standard Centralized Model.}
Note that in the standard centralized model of computation, the running time is at most the work of the PRAM algorithm, which is $O(m)$. By taking constant $c$ and $\delta$, and repeating the algorithm until the first success (we can easily check the number of edges of the spanner and that all $r_u<k$), we get a spanner with stretch $2k-1$ and $O(n^{1+1/k})$ edges in expected time $O(|E|)$.

\subsection{Implications of Theorem \ref{thm:MPX-spanner}}

\subsubsection{Standard Centralized Model and PRAM}

The currently sparsest spanners which can be constructed in linear time are those of Halperin and Zwick \cite{HZ96}. They provide for any $k\ge 1$, a deterministic algorithm running in $O(m)$ time, that produces a spanner with $n^{1+1/k}+n$ edges. We can improve their result for a wide range of $k$, albeit with a randomized algorithm. First we show a near-linear time algorithm (which can be also executed in parallel), that provides a spanner sparser than Halperin and Zwick in the range $k\ge 2\ln n/\ln\ln n$.\footnote{In fact, the factor 2 can be replaced by any $1+\epsilon$ for constant $\epsilon>0$.}

\begin{corollary}\label{cor:sequential}
For any unweighted graph $G=(V,E)$ on $n$ vertices and $m$ edges, and any integer $k\ge 2$, there is a randomized algorithm, that with high probability\footnote{By high probability we mean probability at least $1-n^{-C}$, for any desired constant $C$.} computes a spanner for $G$ with stretch $2k-1$ and $n^{1+1/k}\cdot\left(1+\frac{O(\ln k)}{k}\right)$ edges. The algorithm has $O(k^2\ln n\ln^*n)$ depth and the running time (or work) is $\tilde{O}(k|E|)$.
\end{corollary}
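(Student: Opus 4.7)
The plan is to obtain the corollary by invoking Theorem \ref{thm:MPX-spanner} with carefully chosen parameters and then repeating independent trials to amplify the success probability. I will set $c = \Theta(k)$ (ensuring $c>3$; for the few small values of $k$ a constant suffices) and $\delta = \Theta(1/k)$. With these choices, the expected edge bound from Theorem \ref{thm:MPX-spanner} becomes
\[
(1+\delta)\cdot\frac{(cn)^{1+1/k}}{c-1}-\delta(n-1)
= \left(1 + O(1/k)\right) \cdot (cn)^{1/k}\cdot\frac{c}{c-1}\cdot n - O(n/k),
\]
and the key algebraic step is to observe that $(cn)^{1/k} = n^{1/k}\cdot c^{1/k} = n^{1/k}\bigl(1 + O(\ln k/k)\bigr)$ since $c^{1/k} = e^{\ln c/k} = 1 + O(\ln c/k) = 1 + O(\ln k /k)$ for $c = \Theta(k)$, and $c/(c-1) = 1 + O(1/k)$. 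Combining these factors yields the target bound $n^{1+1/k}\bigl(1 + O(\ln k/k)\bigr)$ on the number of edges.

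The success probability guaranteed by Theorem \ref{thm:MPX-spanner} under these settings is $(1-1/c)\cdot\delta/(1+\delta) = \Omega(1/k)$. To boost this to a high probability, I would repeat the construction independently $\Theta(k\ln n)$ times. Since each trial can be internally audited by checking (i)~that all sampled $r_u$ satisfy $r_u < k$ and (ii)~that the output has at most the claimed number of edges — both of which conditions together certify, via Claim \ref{claim:radius} and Lemma \ref{lem:stretch}, that the output is a $(2k-1)$-spanner of the desired size — the algorithm can simply return the first trial that passes both checks. A standard Chernoff/union-bound argument shows that the probability that all $\Theta(k\ln n)$ trials fail is at most $(1-\Omega(1/k))^{\Theta(k\ln n)} = n^{-\Omega(1)}$, which can be tuned to achieve any polynomial high-probability guarantee.

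For the resource bounds, I would invoke the Standard Centralized implementation described in Section \ref{sec:implement}, where each single trial runs in $O(|E|)$ sequential time (being bounded by the work of the PRAM version). Multiplying by the number of trials yields a total running time of $O(k|E|\ln n) = \tilde{O}(k|E|)$. For the depth bound in PRAM, each trial has depth $O(k\log^* n)$, and since the $\Theta(k\ln n)$ trials are run sequentially (or alternatively in parallel, but the corollary is stated in terms of the sequential composition), the total depth is $O(k^2\ln n\log^* n)$, matching the statement. The main delicacy, and the only step that requires care rather than direct substitution, is verifying that the two small perturbation factors — $(1+\delta)$ and $c^{1/k}\cdot c/(c-1)$ — combine to give a multiplicative overhead of $1 + O(\ln k /k)$ with $\delta, c$ chosen so that the success probability does not drop below $\Omega(1/k)$; the choice $c=\Theta(k)$, $\delta=\Theta(1/k)$ balances these two requirements.
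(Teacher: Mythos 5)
Your proposal is correct and follows essentially the same route as the paper, which instantiates Theorem \ref{thm:MPX-spanner} with $c=k$ and $\delta=1/k$, notes the $\Omega(1/k)$ per-trial success probability, and repeats $O(k\ln n)$ verifiable trials to amplify. The only (welcome) extra care you take is handling $k\le 3$ where $c=k$ would violate $c>3$, a detail the paper's proof glosses over.
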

\begin{proof}
Apply Theorem \ref{thm:MPX-spanner} with parameters $c=k$ and $\delta=1/k$.  So with probability at least $\frac{k-1}{k}\cdot\frac{1}{k+1}\ge \frac{1}{3k}$ we obtain a spanner whose number of edges is at most
\begin{equation}\label{eq:qqw}
(1+1/k)\cdot \frac{(kn)^{1+1/k}}{k-1}\le n^{1+1/k}\cdot\left(1+\frac{O(\ln k)}{k}\right)~.
\end{equation}

Run the algorithm $C\cdot k\ln n$ times for some constant $C$. We noted in Section \ref{sec:implement} that each run takes $O(k\ln^*n)$ depth and $O(|E|)$ work, so the time bounds are as promised. Now, with probability at least $1 - (1-1/(3k))^{C\cdot k\ln n} \ge 1 - n^{-C/3}$,  we achieved a spanner with number of edges as in \eqref{eq:qqw} in one of the executions.
\end{proof}

\begin{remark}
Whenever $k\ge 2\ln n/\ln\ln n$, we have $n^{1/k}\le \sqrt{\ln n}$, so the number of edges in Corollary \ref{cor:sequential} is $n^{1+1/k}+o(n)$, and the running time is $\tilde{O}(k|E|)$.
\end{remark}

\subsubsection{Distributed Model}
In a distributed setting we have the following result.
\begin{corollary}\label{cor:dist-spanner}
For any unweighted graph $G=(V,E)$ on $n$ vertices, any $k\ge 1$ and $0<\epsilon<1$, there is a randomized distributed algorithm that with probability at least $1-\epsilon$ computes a spanner with stretch $2k-1$ and $O(n/\epsilon)^{1+1/k}$ edges, within $k$ rounds.
\end{corollary}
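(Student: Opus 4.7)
The plan is simply to instantiate Theorem~\ref{thm:MPX-spanner} with parameters chosen so that its success probability reaches $1-\epsilon$, and then to invoke the CONGEST implementation described in Section~\ref{sec:implement}, which already runs in exactly $k$ rounds.

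Concretely, I would set $c=\delta=2/\epsilon$. Substituting into the success-probability expression from Theorem~\ref{thm:MPX-spanner} gives
\[
\left(1-\frac{1}{c}\right)\cdot\frac{\delta}{1+\delta}=\left(1-\frac{\epsilon}{2}\right)\cdot\frac{2}{2+\epsilon}=\frac{2-\epsilon}{2+\epsilon}~,
\]
and a one-line manipulation, $(2-\epsilon)-(1-\epsilon)(2+\epsilon)=\epsilon^{2}\ge 0$, shows that this is at least $1-\epsilon$ for every $\epsilon\in(0,1)$. For the same choice of parameters, the edge-count bound from Theorem~\ref{thm:MPX-spanner} becomes
\[
(1+\delta)\cdot\frac{(cn)^{1+1/k}}{c-1}-\delta(n-1)\;\le\;\frac{\epsilon+2}{2-\epsilon}\cdot\left(\frac{2n}{\epsilon}\right)^{\!1+1/k}\;=\;O(n/\epsilon)^{1+1/k}~,
\]
where we dropped the nonpositive term $-\delta(n-1)$ and used $(\epsilon+2)/(2-\epsilon)=O(1)$ for $\epsilon\in(0,1)$.

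No real obstacle arises. The CONGEST implementation in Section~\ref{sec:implement} terminates in $k$ rounds regardless of the values of $c$ and $\delta$, since the only quantity that scales with $c$ is $\beta=\ln(cn)/k$, which affects the sampling step but not the number of rounds, and Claim~\ref{claim:radius} continues to guarantee $r_{u}<k$ for all $u$ with the required probability. The only tiny arithmetic check is the inequality $(2-\epsilon)/(2+\epsilon)\ge 1-\epsilon$ noted above, which is immediate. Hence the corollary falls out of Theorem~\ref{thm:MPX-spanner} with essentially no additional work beyond this parameter selection.
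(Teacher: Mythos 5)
Your proposal follows exactly the paper's route: instantiate Theorem~\ref{thm:MPX-spanner} with $c,\delta=\Theta(1/\epsilon)$ and invoke the $k$-round CONGEST implementation of Section~\ref{sec:implement}. Your arithmetic (the probability bound $\frac{2-\epsilon}{2+\epsilon}\ge 1-\epsilon$ and the edge count) is correct. The one flaw is the choice $c=2/\epsilon$: Theorem~\ref{thm:MPX-spanner} is stated only for $c>3$, and $2/\epsilon>3$ fails once $\epsilon\ge 2/3$, so for that part of the claimed range $0<\epsilon<1$ you are applying the theorem outside its stated hypotheses. The paper avoids this by taking $c=3/\epsilon$ and $\delta=2/\epsilon$, which gives $c>3$ for every $\epsilon\in(0,1)$, success probability $(1-\epsilon/3)\cdot\bigl(1-\epsilon/(\epsilon+2)\bigr)>1-\epsilon$, and the same $O(n/\epsilon)^{1+1/k}$ edge bound. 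Your argument is easily repaired — either switch to $c=3/\epsilon$, or note that for $\epsilon\ge 2/3$ one can simply run the algorithm with $\epsilon'=1/2$ — but as written the parameter selection does not cover the full range of $\epsilon$ in the corollary.
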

\begin{proof}
Apply Theorem \ref{thm:MPX-spanner} with $c=3/\epsilon$ and $\delta =2/\epsilon$, so the success probability is at least
\[
(1-\epsilon/3)\cdot(1-\epsilon/(\epsilon+2))>1-\epsilon~.
\]
With these parameters, by Theorem \ref{thm:MPX-spanner}, the number of edges in spanner will be bounded by $O(n/\epsilon)^{1+1/k}$.
\end{proof}

\subsubsection{Ultra-Sparse Spanners and Skeletons}

We now show that in the regime $k\ge\ln n$, our algorithm (that succeeds with probability close to 1) provides a spanner whose number of edges is very close to $n$ (as a function of $k$ and the success probability). This will hold in all computational models we considered. We note that for the centralized and PRAM models, Corollary \ref{cor:sequential} gives high probability with roughly the same sparsity, albeit with larger depth and work.

\begin{corollary}\label{cor:sequential-linear}
For any unweighted graph $G=(V,E)$ on $n$ vertices, and any integer $k\ge\ln n$ and parameter $2/k<\epsilon<1$, there is a randomized algorithm, that with probability at least $1-\epsilon$ computes a spanner for $G$ with stretch $2k-1$ and $n\cdot\left(1+\frac{O(\ln n)}{\epsilon\cdot k}\right)$ edges.
The number of rounds in distributed model is $k$, in PRAM it is $O(k \log^* n)$ depth and $O(|E|)$ work, and in the centralized model it is $O(|E|)$ time.
\end{corollary}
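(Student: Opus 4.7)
The plan is to apply Theorem \ref{thm:MPX-spanner} with $c=\delta=2/\epsilon$, but to work with the sharper \emph{intermediate} inequality
\[
|E'|\le n-1+(1+\delta)\cdot\frac{c}{c-1}\cdot\bigl[(cn)^{1/k}\cdot n-(n-1)\bigr]
\]
that the proof of Theorem \ref{thm:MPX-spanner} produces immediately after the Markov step, rather than the slightly looser displayed form. For the success probability, Theorem \ref{thm:MPX-spanner} guarantees at least
\[
(1-1/c)\cdot \delta/(1+\delta)=(1-\epsilon/2)\bigl(1-\tfrac{\epsilon}{\epsilon+2}\bigr)\ge (1-\epsilon/2)^2\ge 1-\epsilon,
\]
where the middle step uses $\epsilon/(\epsilon+2)\le \epsilon/2$.

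For the edge bound, let $\tau=(cn)^{1/k}-1$, so $(cn)^{1/k}n-(n-1)=n\tau+1$. The hypotheses $k\ge\ln n$ and $\epsilon>2/k$, together with $c=2/\epsilon$, yield $\ln(cn)\le \ln n+\ln(k/2)=O(\ln n)$ (assuming WLOG $k\le n$, since otherwise any spanning connected subgraph is a $(2k-1)$-spanner). Thus $\ln(cn)/k=O(1)$, and since $(e^x-1)/x$ is bounded on any bounded interval, $\tau=O(\ln(cn)/k)=O(\ln n/k)$. Also $(1+\delta)\cdot c/(c-1)=2(\epsilon+2)/[\epsilon(2-\epsilon)]=O(1/\epsilon)$ for $\epsilon\in(0,1)$. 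Substituting into the intermediate bound,
\[
|E'|\le n-1+O(1/\epsilon)\cdot(n\tau+1)=n+O\!\left(\tfrac{n\ln n}{\epsilon k}\right)+O(1/\epsilon)=n\!\left(1+O\!\left(\tfrac{\ln n}{\epsilon k}\right)\right),
\]
where the additive $O(1/\epsilon)$ is absorbed using $1/\epsilon\le n\ln n/(\epsilon k)$, which holds whenever $k\le n\ln n$.

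The model-specific round/depth/time bounds are not recomputed---they come immediately from the implementation observations in Section \ref{sec:implement}, applied to the algorithm underlying Theorem \ref{thm:MPX-spanner} (which does not depend on the particular values of $c,\delta$): $k$ rounds in CONGEST, $O(k\log^* n)$ depth and $O(|E|)$ work in PRAM, and $O(|E|)$ time centrally.

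The main subtlety, and the step I expect to require the most care, is choosing to bypass the \emph{displayed} bound of Theorem \ref{thm:MPX-spanner}. With $c=\delta=2/\epsilon$ that bound has leading coefficient $(c+\delta)/(c-1)=4/(2-\epsilon)\approx 2$, so plugging in directly yields $\approx 2n$ edges, not $n(1+o(1))$. The gap $(n-1)\cdot\bigl[(1+\delta c)/(c-1)-\delta\bigr]=(n-1)(1+\delta)/(c-1)$ between the intermediate and final bounds is precisely what collapses the leading term from $2n$ down to $n$, so the whole argument hinges on not discarding that term.
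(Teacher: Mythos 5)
Your proposal is correct in substance, but it diverges from the paper in the choice of parameters, and that divergence is exactly what creates the complication you flag. The paper applies \theoremref{thm:MPX-spanner} with $c=k$ and $\delta=2/\epsilon$ (not $c=\delta=2/\epsilon$). With $c=k$ the \emph{stated} bound of the theorem already suffices: since $(kn)^{1/k}\le 1+O(\ln n)/k$ for $k\ge\ln n$, one gets $(1+2/\epsilon)\frac{(kn)^{1+1/k}}{k-1}-2(n-1)/\epsilon = n+O\bigl(\frac{n\ln n}{\epsilon k}\bigr)$, because the $\frac{2n}{\epsilon}$ terms cancel and the error from $\frac{k}{k-1}$ versus $1$ is only $O(n/k)$. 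So the paper never needs to reopen the proof of the theorem; your route does, because with $c=2/\epsilon$ the coefficient $\frac{c+\delta}{c-1}=\frac{4}{2-\epsilon}\approx 2$ in the stated bound really does lose a factor of $2$, as you correctly diagnose, and only the intermediate inequality recovers it. Your version of the argument is valid (the intermediate inequality is indeed what the proof establishes on the success event, and your estimates for $\tau$, the success probability, and the absorption of the $O(1/\epsilon)$ term all check out), but it is strictly more work than necessary. Two small caveats: first, \theoremref{thm:MPX-spanner} is stated for $c>3$, and your choice $c=2/\epsilon$ violates this for $\epsilon\ge 2/3$; this is harmless (the theorem's proof only needs $c>1$, or one can simply run the argument with $\epsilon'=\min\{\epsilon,1/2\}$), but it should be said. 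Second, $\epsilon>2/k$ gives $c=2/\epsilon<k$, hence $\ln(cn)<\ln n+\ln k$, not $\ln n+\ln(k/2)$ as you wrote --- immaterial, since either way $\ln(cn)=O(\ln n)$ under your WLOG $k\le n$.
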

\begin{proof}
Apply Theorem \ref{thm:MPX-spanner} with parameters $c=k$ and $\delta=2/\epsilon$, so with probability at least $\frac{k-1}{k}\cdot\frac{2/\epsilon}{2/\epsilon+1}\ge 1-\epsilon$ we obtain a $(2k-1)$-spanner. In the regime $k\ge\ln n$ we have $(kn)^{1/k}\le e^{(2\ln n)/k}\le 1+O(\ln n)/k$, so the number of edges is at most
\begin{equation}\label{eq:qqqw}
(1+2/\epsilon)\cdot \frac{(kn)^{1+1/k}}{k-1}-2(n-1)/\epsilon\le n\cdot\left(1+\frac{O(\ln n)}{\epsilon\cdot k}\right) ~.
\end{equation}
\end{proof}

\begin{remark}
The spanner of Corollary \ref{cor:sequential-linear} can be used as a skeleton. E.g., one can take  $\epsilon=o(1)$ and $k=\tilde{O}(\log n)$, to obtain with probability $1-o(1)$, a skeleton with $n(1+o(1))$ edges, which is computed in $\tilde{O}(\log n)$ rounds.
\end{remark}

\subsection{Weighted Graphs}
\label{sec:wt}


Miller et al. \cite{MPVX15} used their efficient algorithm for constructing $O(k)$-spanners with $O(n^{1+1/k})$ edges for unweighted graphs, to provide an efficient algorithm for constructing $O(k)$-spanners with $O(n^{1+1/k} \log k)$ edges for weighted graphs. In this section we argue that their scheme can be used to convert our algorithm for constructing $(2k-1)$-spanners with $O(n^{1+1/k})$ edges for unweighted graphs (Theorem \ref{thm:MPX-spanner}; see also Section \ref{sec:implement}) into an efficient algorithm for constructing $(2k-1)(1+\eps)$-spanners with $O(n^{1+1/k} \cdot {{\log k} \over \eps})$ edges for weighted graphs.

The scheme of \cite{MPVX15} works in the following way.  It partitions all edges of $G = (V,E)$ into $\lceil \log_{1+\eps} \omega_{max} \rceil = \lambda$ categories $E_t = \{e \in E \mid \omega(e) \in [(1+\eps)^{t-1},(1+\eps)^t)\}$, $ t = 1,2,\ldots,\lambda$. (We assume that the minimum weight is 1, and the maximum weight is $\omega_{max}$. The last category $E_\lambda$ should also contain edges of weight exactly $\omega_{max}$.) Now one defines $\ell = \lceil \log_{1+ \eps} k^c \rceil$, for a sufficiently large constant $c$, graphs $G_j = (V,\hE_j)$, $j = 0,1,\ldots, \ell-1$, $\hE_j = \bigcup \{E_t \mid t \equiv j (\mod \ell)\}$.

Observe that the edge weights in (each) $G_j$ are {\em well-separated}, i.e., $\hE_j$ is a disjoint union of at most  $q = \lceil \lambda/\ell \rceil$ edge sets $E^{(1)},\ldots,E^{(q)}$, such that the edge weights within each set are within a factor of $1+ \eps$ from one another. Moreover, if edge weights in $E^{(i)}$, $i = 1,2,\ldots,q$, are in the range $[w^{(i)},(1+\eps)w^{(i)})$, then we have $w^{(i)} = w^{(1)} (k^c)^{i-1}$, for every $i = 1,2,\ldots,q$.

For each graph $G_j$, the scheme of \cite{MPVX15} constructs an $O(k)$-spanner with $O(n^{1+1/k})$ edges. It then takes a union of $\ell = O({{\log k}  \over \eps})$ such spanners as the ultimate $O(k)$-spanner of the original graphs. (In fact, \cite{MPVX15} used specifically $\eps=1$.)  We will next outline the way in which \cite{MPVX15} construct $O(k)$-spanner with $O(n^{1+1/k})$ edges for each $G_j$, and show how to modify it to provide a $(2k-1)(1+\eps)$-spanner.

The scheme starts with running a routine of \cite{MPVX15} that constructs an $O(k)$-spanner $H^{(1)}$ with $O(n^{1+1/k})$ edges for the {\em unweighted} graph $(V^{(1)},E^{(1)})$, $V^{(1)} = V$, and constructing the exponential start time partition $\cP^{(1)}$ for it. It then contracts each of the clusters of
$\cP^{(1)}$
(which have unweighted radii at most $k-1$) into single vertices of $V^{(2)}$, and runs the unweighted spanner routine on $(V^{(2)},E^{(2)})$. As a result, it constructs an $O(k)$-spanner $H^{(2)}$ and a partition $\cP^{(2)}$ of $V^{(2)}$, contracts all clusters of $\cP^{(2)}$ to get $V^{(3)}$, etc. The final spanner returned by the scheme is $H = \bigcup_{i=1}^q H^{(i)}$.

The scheme guarantees stretch $(1+\eps)(1 + O(k^{-(c-1)}))O(k)$, because the blackbox routine for unweighted graphs provides stretch $O(k)$ for each category of weights, but the weights are uniform only up to a factor of $1 + \eps$. Also, the factor of $1 + O(k^{-(c-1)})$ appears, because one contracts clusters of unweighted diameter $O(k)$ of lower scales, on which all edge weights are a factor of roughly $k^{-c}$ smaller than the edge weights on the current scale.

In the analysis of $|H|$, \cite{MPVX15} show that every vertex $u$ is active (i.e., non-isolated vertex which is not yet contracted into a larger super-vertex) for expected $O(n^{1/k})$ phases, and when it is active, it contributes expected $O(n^{1/k})$ edges to the spanner of the current phase. Hence the overall size of the spanner is $O(n^{1+2/k})$, and by rescaling $k' = 2k$, they ultimately get their result.
(See the proof of Theorem 3.3 in \cite{MPVX15} for full details of this proof. We have sketched it for the sake of completeness.)

While the stretch analysis of \cite{MPVX15} is sufficiently precise for our purposes, this is not the case with the size analysis. Indeed, even when one plugs in stretch $2k-1$ 
of our unweighted spanner routine instead of stretch $O(k)$ of their routine, still one obtains a $(2k-1)(1+\eps)^2(1 +  k^{-(c-1)})$-spanner with $O(n^{1+2/k})$ edges, i.e., a $(4k-2)(1+O(\eps))$-spanner with $O(n^{1+1/k})$ edges (for each $G_j$).

In what follows we refine their size analysis, and show that, in fact, every vertex $u$ contributes expected $O(n^{1/k})$ edges in {\em all phases} of the algorithm {\em altogether} (for a single graph $G_j$ with well-separated edge weights). Denote by $r_u^{(i)}$ the radius that $u$ tosses from $\ex(\beta)$ in the $i$th phase, $i = 1,2,\ldots,q$, assuming that it is active on that phase. We say that a vertex $v$ (which is active on phase $i$) is a {\em candidate} vertex of phase $i$ if its broadcast message reaches $u$ no later than within one time unit after the time $-r_u^{(i)}$, i.e., $-r_u^{(i)} + 1 \ge -r_v^{(i)} + d(v,u)$, where $d(v,u)$ is the unweighted distance between $v$ and $u$ in the graph on which the unweighted spanner routine is invoked on phase $i$.

Let $X^{(i)}$ denote the random variable counting the number of such candidate vertices on phase $i$, for $i = 1,2,\ldots,q$. (Recall that these are the vertices which might cause $u$ to add an edge to the spanner.)
Let $j$ denote the random variable which is the phase in which $u$ was contracted (and $j=q$ if there is no such phase). That is, $j$ indicates the level in which the broadcast of some candidate vertex $v$ has $ -r_v^{(i)} + d(v,u)<-r_u^{(i)}$.
Denote also by $\hX^{(i)}$, $i = 1,2,\ldots,q$, the total number of candidates $u$ sees between the beginning of phase $i$, and up until phase $j$, where a candidate vertex $v$ reaches $u$ before time $-r_u^{(j)}$. On that phase $j$, $\hX^{(i)}$ counts the number of candidates with index not larger than that of the candidate vertex $v$ (assume every vertex has an arbitrary distinct index in $\{1,\ldots n\}$).
Note that for $i > j$, by definition, $\hX^{(i)} = 0$.
Also, in particular, $\hX = \hX^{(1)}$ is at least the total contribution of edges $u$ adds to the spanner, except for up to  expected $O(n^{1/k})$ edges that it might contribute on phase $j$ (as shown in Lemma~\ref{lem:size}).

We next argue that for any $t = 0,1,2,\ldots$,
\begin{equation}
\label{eq:hX}
\Pr(\hX > t) \le (1 - e^{-\beta})^t~.
\end{equation}
This implies that
$$\E(\hX) ~=~ \sum_{t=0}^\infty \Pr(\hX > t) ~\le \sum_{t=0}^\infty (1 - e^{-\beta})^t ~=~ e^\beta ~=~ O(n^{1/k})~.$$
First, note that
$$\Pr(\hX > t) ~=~ \sum_{t^{(1)} = 0}^\infty \Pr(X^{(1)} = t^{(1)}) \cdot \Pr(\hX > t \mid X^{(1)} = t^{(1)})~.$$
For $t^{(1)} > t$, we have
$$\Pr(\hX > t \mid X^{(1)} = t^{(1)}) ~=~ (1 - e^{-\beta})^t~.$$
To justify this equation, note that the left-hand side is exactly the probability that on the first phase, all the $t$ first candidate vertices $v$ have
$-r_v^{(1)} + d(v,u) \ge - r_u^{(1)}$, conditioned on them being candidates, i.e., on $-r_v + d(v,u) \le -r_u^{(1)} + 1$. Since these are independent shifted exponential random variables, the equation follows from the memoryless property of the exponential distribution.

For $t^{(1)} \le t$, we have
\begin{eqnarray*}
\Pr(\hX > t \mid X^{(1)} = t^{(1)}) ~&=&~ (1 - e^{-\beta})^{t^{(1)}} \cdot \Pr(\hX^{(2)} > t - t^{(1)} \mid X^{(1)} = t^{(1)}, \hX^{(1)} \ge t^{(1)})\\
&=&(1 - e^{-\beta})^{t^{(1)}} \cdot \Pr(\hX^{(2)} > t - t^{(1)} \mid \hX^{(1)} \ge X^{(1)})~.
\end{eqnarray*}

(Again, $(1 - e^{-\beta})^{t^{(1)}}$ is the probability that no candidate vertex of the first phase reached $u$ before time $-r_u^{(1)}$, and so $u$ was not contracted away at this phase.)

We conduct an induction on the phase, where the induction base is the last phase $i = q$. On the last phase, for any $h$,
$$\Pr(\hX^{(q)} > h \mid \hX^{(1)} \ge X^{(1)},\hX^{(2)} \ge X^{(2)},\ldots,\hX^{(q-1)} \ge X^{(q-1)}) \le (1 - e^{-\beta})^h~,$$
because it is just the probability that none of the first $h$ candidates (that have $-r_v^{(q)} + d(v,u) \le - r_u^{(1)}+1$) reaches $u$ before time $-r_u^{(q)}$. (If there are fewer candidates or $u$ was contracted in a previous phase, then this probability is 0.)

Hence by the inductive hypothesis,
$$\Pr(\hX^{(2)} > t - t^{(1)} \mid \hX^{(1)} \ge X^{(1)}) ~\le~ (1 - e^{-\beta})^{t - t^{(1)}}~,$$
and so
$$\Pr(\hX^{(1)}  > t \mid X^{(1)} = t^{(1)}) ~\le ~ (1 - e^{-\beta})^t~,$$ for any $t^{(1)} \le t$ as well.
Hence
\begin{eqnarray*}
\Pr(\hX > t) &=& \sum_{t^{(1)} = 0}^\infty \Pr(X^{(1)} = t^{(1)}) \cdot \Pr(\hX > t \mid X^{(1)} = t^{(1)}) \\
&\le& \sum_{t^{(1)} = 0}^\infty \Pr(X^{(1)} = t^{(1)}) \cdot (1 - e^{-\beta})^t ~=~ (1 - e^{-\beta})^t~,
\end{eqnarray*}
as required.

Hence the expected contribution of every vertex is $O(n^{1/k})$, and the overall spanner size for each $G_j$ is, in expectation, $O(n^{1+1/k})$. The running time of the algorithm is expected to be $O(|E|)$, following the analysis of \cite{MPVX15}. Moreover, as in \cite{MPVX15}, our algorithm can be implemented in PRAM model, in $O(\log n \cdot \log^* n \cdot \log \Lambda )$ depth, and $O(|E|)$ work, where $\Lambda$ is the aspect ratio of the input graph. We summarize the result below.

\begin{theorem}
\label{thm:wt}
Given a weighted $n$-vertex graph $G$, and a pair of parameters $k \ge 1$, $0<\eps<1$, our algorithm computes a $(2k-1)(1+\eps)$-spanner of $G$ with $O(n^{1+1/k} \cdot (\log k)/\eps)$ edges, in expected $O(|E|)$ centralized  time, or in
$O(\log n \cdot \log^* n \cdot \log \Lambda )$ depth and $O(|E|)$ work.
\end{theorem}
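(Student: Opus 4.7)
The plan is to combine the weight-bucketing reduction of \cite{MPVX15} with our unweighted spanner construction from Theorem \ref{thm:MPX-spanner}, and to sharpen the per-vertex size analysis. First, I would bucket the edges by weight into $\lambda = O(\log_{1+\eps} \omega_{\max})$ categories $E_t = \{e : \omega(e) \in [(1+\eps)^{t-1},(1+\eps)^t)\}$, and group them modulo $\ell = \lceil \log_{1+\eps} k^c \rceil$ for a sufficiently large constant $c$, obtaining $\ell$ graphs $G_j$ whose edge weights decompose into ``scales'' separated by a factor of $k^c$. The final output is the union of spanners built over the $\ell = O((\log k)/\eps)$ graphs $G_j$, so every per-$G_j$ bound will be multiplied by this factor.

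For each $G_j$, I would run the unweighted routine of Theorem \ref{thm:MPX-spanner} scale by scale: on scale $i$, apply it to the current (contracted) graph, treating within-scale weights as uniform, add the produced edges to the spanner, then contract the obtained clusters of unweighted radius at most $k-1$ into super-vertices for scale $i+1$. The stretch analysis closely follows the outline preceding the theorem: an edge at scale $i$ is routed through the scale-$i$ spanner with stretch $2k-1$, inflated by $(1+\eps)$ from within-scale weight uniformity and by $(1+O(k^{-(c-1)}))$ from traversals of super-vertices representing lower-scale clusters, so choosing $c$ large enough yields overall stretch $(2k-1)(1+\eps)$.

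The heart of the argument, and the main obstacle, is the refined size analysis. A naive bound gives $O(n^{1/k})$ edges per active phase per vertex over $O(n^{1/k})$ active phases, costing $O(n^{1+2/k})$ per $G_j$. To save a factor of $n^{1/k}$, I would define $\hX$ as the cumulative count of candidate vertices $v$ (those whose broadcast reaches $u$ within one time unit after $u$'s own start, i.e.\ $-r_v + d(v,u) \le -r_u + 1$), summed over all scales up to and including the scale $j$ on which $u$ is contracted, with only the candidates of index at most that of the contracting vertex counted on scale $j$. I would then prove by induction on the scale that $\Pr(\hX > t) \le (1 - e^{-\beta})^t$, from which $\E[\hX] \le e^\beta = O(n^{1/k})$ per vertex, giving $O(n^{1+1/k})$ expected spanner edges per $G_j$.

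The inductive step is the delicate part. Condition on the number $t^{(1)}$ of candidates at the first scale. If $t^{(1)} > t$, the memoryless property of the exponential distribution gives that the probability all first $t$ candidates of this scale have $-r_v^{(1)} + d(v,u) \ge -r_u^{(1)}$ (so $u$ survives past them) is exactly $(1 - e^{-\beta})^t$. If $t^{(1)} \le t$, $u$ must first survive all $t^{(1)}$ candidates at scale $1$ (a factor $(1 - e^{-\beta})^{t^{(1)}}$ by the same memoryless argument) and then, conditioned on surviving, the remaining $t - t^{(1)}$ candidates live on scale $2$ or later, where the fresh independent draws $r_u^{(2)}, r_v^{(2)}, \ldots$ allow a clean invocation of the inductive hypothesis on the contracted graph. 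Summing over all vertices of $G_j$ yields the required $O(n^{1+1/k})$ expected edges, and over the $\ell$ graphs $G_j$ we obtain $O(n^{1+1/k}(\log k)/\eps)$ in total. The centralized and PRAM running-time bounds then follow exactly as in \cite{MPVX15}: each phase does expected work proportional to its current edge set, which shrinks geometrically under contraction, and the parallelization of the exponential start-time clustering carries over verbatim, with the $\log\Lambda$ factor reflecting the number of scales.
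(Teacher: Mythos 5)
Your proposal is correct and follows essentially the same route as the paper: the same weight-bucketing into $\ell = O((\log k)/\eps)$ well-separated graphs $G_j$, the same scale-by-scale contraction scheme, and the same refined size analysis via the cumulative candidate count $\hX$ with the conditional/memoryless induction over phases showing $\Pr(\hX > t) \le (1-e^{-\beta})^t$. The only detail worth adding is that $\hX$ undercounts the edges $u$ contributes on its contraction phase after the contracting candidate arrives, which the paper handles by charging an extra expected $O(n^{1/k})$ per vertex via Lemma~\ref{lem:size}.
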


The result of Theorem \ref{thm:wt} can be used in conjunction with the scheme of \cite{ES16} to devise an algorithm that computes $(2k-1)(1+\eps)$-spanners of size $O(n^{1+1/k}(\log k/\eps) 1/\eps)$, with {\em lightness} (i.e., weight of the spanner divided by the weight of the MST of the input graph) $O(k \cdot n^{1/k} (1/\eps)^{2+1/k})$, in expected time $O(|E| + \min\{n\log n,|E|\alpha(n)\})$, where $\alpha(\cdot)$ is an inverse-Ackermann function. This improves a result of \cite{ES16} that provides spanners with the same stretch and lightness, but with more edges (specifically, $O((k + (1/\eps)^{2+1/k}) n^{1+1/k})$, and using $O(k \cdot |E| + \min \{n \log n,|E| \alpha(n)\})$ time. Recently, consequently to our work, Alstrup et al. \cite{ADFSW17} further improved these bounds. We thus omit the details of our argument that provides the aforementioned bounds.

\section{An Efficient Centralized Construction of Nearly-Additive Spanners and Emulators}
\label{sec:sp_central}

\subsection{A Basic Variant of the Algorithm}
\label{sec:sp_basic}

In this section we present an algorithm for constructing $(1+\epsilon,\beta)$-spanners, which can be efficiently implemented in various settings.
We start with the centralized setting. In this setting we present two variants of our construction. The first variant presented in this section is somewhat simpler, while the second variant presented in the next section provides better bounds.

Let $G=(V,E)$ be an unweighted graph on $n$ vertices, and let $k\ge 1$, $\eps>0$ and $0<\rho<1/2$ be parameters.
Unlike the algorithm of \cite{EP04}, our algorithm does not employ sparse partitions of \cite{AP92}.
The algorithm initializes the spanner $H$ as an empty set, and proceeds in phases. It starts with setting $\chP_0 = \{\{v\} \mid v \in V\}$ to be the partition of $V$ into singleton clusters.
The partition $\chP_0$ is the input of phase 0 of our algorithm. More generally, $\chP_i$ is the input of phase $i$, for every index $i$ in a certain appropriate range, which we will specify in the sequel.

Throughout the algorithm, all clusters $C$ that we will construct will be centered at designated centers $r_C$. In particular, each singleton cluster $C= \{v\} \in \chP_0$ is centered at $v$. We define $\Rad(C) = \max \{d_{G(C)}(r_C,v) \mid v \in C\}$, and $\Rad(\chP_i) = \max_{C \in \chP_i} \{\Rad(C)\}$.

All phases of our algorithm except for the last one consist of two steps. Specifically, these are the {\em superclustering} and the {\em interconnection} steps.
The last phase contains only the interconnection step, and the superclustering step is skipped.
We also partition the phases
into two {\em stages}. The first stage consists of phases  $0,1, \ldots,i_0=\lfloor\log(\kappa\rho)\rfloor$, and the second stage consists of all the other phases $i_0+1,\ldots,i_1$ where $i_1=i_0+\left\lceil\frac{\kappa+1}{\kappa\rho}\right\rceil-2$, except for the last phase $\ell = i_1+1$. The last phase will be referred to as the {\em concluding phase}. 

Each phase $i$ accepts as input two parameters, the distance threshold parameter $\delta_i$, and the degree parameter $\deg_i$. The difference between stage 1 and 2 is that in stage 1 the degree parameter grows exponentially, while in stage 2 it is fixed. The distance threshold parameter
grows in the same steady rate (increases by a factor of $1/\epsilon$) all through the algorithm.

Next we describe the first stage of the algorithm. We start with describing its superclustering step.
We set $\deg_i = n^{2^i/\kappa}$, for all $i = 0,1,\ldots,i_0$.
Let $R_0=0$, and $\delta_i = (1/\epsilon)^i + 4 \cdot R_i$, where $R_i$ is determined by the following recursion: $R_{i+1}=\delta_i+R_i=(1/\epsilon)^i + 5 \cdot R_i$. We will show that the inequality $\Rad(\chP_i) \le R_i$ will hold for all $i$.

On phase $i$, each cluster $C \in \chP_i$ is sampled  i.a.r. with probability $1/\deg_i$. Let $\cS_i$ denote the set of sampled clusters.
We now conduct a BFS exploration to depth $\delta_i$ in $G$ rooted at the set $S_i = \bigcup_{C \in \cS_i} \{r_C\}$.
As a result,  a forest $F_i$ is constructed, rooted at vertices of $S_i$.
For a cluster center $r' = r_{C'}$ of a cluster $C' \in \chP_i \setminus \cS_i$ such that $r'$ is spanned by $F_i$,  let $r_C$ be the root of the forest tree of $F_i$ to which $r'$ belongs. (The vertex $r_C$ is by itself a cluster center of a cluster $C \in \cS_i$.)  The cluster $C'$ becomes now superclustered in a cluster $\hat{C}$ centered around the cluster $C$. (We also say that $C'$ is {\em  associated} with $C$. We will view association as a transitive relation, i.e., if $C'$ is associated with $C$ and $C''$ is associated with $C'$, we will think of $C''$ as associated with $C$ as well.)

The cluster center $r_C$ of $C$ becomes the new cluster center of $\hat{C}$, i.e., $r_{\hC} = r_C$. The vertex set of the new supercluster $\hat{C}$ is the union of the vertex set of $C$ with the  vertex sets of all clusters $C'$ which are superclustered into $\hC$. The edge set $T_{\hC}$ of the new cluster $\hC$ contains the BFS spanning trees of all these clusters, and, in addition, it contains shortest paths from the forest $F_i$ between $r_C$ and each $r_{C'}$ as above. $\chS_i$ is the set of superclusters created by this process. We set $\chP_{i+1} = \chS_i$.
All edges that belong to the edgeset of one of these superclusters are now added to the spanner $H$.

For each supercluster $\hC$, we write $\Rad(\hC) =  \Rad(T_{\hC},r_{\hC}).$
Observe that $\Rad(\chP_0) \le R_0=0$,
and
$\Rad(\chS_0) = \max \{\Rad(\hC) \mid \hC \in \chS_0\} \le \delta_0 + R_0 = R_1=1$.
More generally we have

\begin{eqnarray}
\label{eq:rad}
\Rad(\chS_i) = \max \{\Rad(\hC) \mid \hC \in \chS_i\}\le \delta_i + \Rad(\chP_i) \le \delta_i + R_i \le (1/\epsilon)^i + 5 R_i=R_{i+1}.
\end{eqnarray}

Denote by $\chU_i$ the set of clusters of $\chP_i$ which were not superclustered into clusters of $\chS_i$.
In the interconnection step for $i\ge 1$, every cluster center $r_C$ of a cluster $C \in \chU_i$ initiates a BFS exploration to depth $\half \delta_i$, i.e., half the depth of the exploration which took place in the superclustering step.
For each cluster center $r_{C'}$ for $C' \in \chP_i$ which is discovered by the exploration initiated in $r_C$, the shortest path between $r_C$ and $r_{C'}$ is inserted into the spanner $H$. The first phase $i=0$ is slightly different: the exploration depth is set to be 1, and we add an edge from $\{v\}\in \chU_0$ to all neighbors that are in $\chU_0$. This completes the description of the interconnection step.

\begin{lemma}
\label{lm:explorations}
For any vertex $v \in V$, the expected number of explorations that visit $v$ at the interconnection step of phase $i$ is at most $\deg_i$.
\end{lemma}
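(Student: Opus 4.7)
The plan is to fix $v\in V$ and reduce the expected count of explorations visiting $v$ to a geometric sum of sampling probabilities via a stochastic-domination argument on the cluster centers near $v$. First I would enumerate the cluster centers $r_{C_1},r_{C_2},\ldots,r_{C_m}$ of $\chP_i$ whose $G$-distance to $v$ is at most $\half\delta_i$, ordered by non-decreasing distance. Since the interconnection BFS from any such $r_C$ reaches depth only $\half\delta_i$, only these centers can initiate an exploration that visits $v$, and the initiation does occur precisely when $C_j\in\chU_i$. Thus
$$\E[\,\text{number of explorations visiting } v\,]\;=\;\sum_{j=1}^{m}\Pr[C_j\in\chU_i].$$

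The heart of the argument is the bound $\Pr[C_j\in\chU_i]\le(1-1/\deg_i)^j$. I would argue its contrapositive: whenever some $C_s$ with $s\le j$ lies in the sampled set $\cS_i$, the cluster $C_j$ leaves $\chU_i$. The case $s=j$ is immediate since then $C_j\in\cS_i$. For $s<j$, both $r_{C_s}$ and $r_{C_j}$ lie within $\half\delta_i$ of $v$, so by the triangle inequality $d_G(r_{C_s},r_{C_j})\le\delta_i$; since the superclustering BFS is grown from $S_i$ to depth exactly $\delta_i$, the forest $F_i$ spans $r_{C_j}$, and $C_j$ is absorbed into a cluster of $\chS_i$. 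Consequently the event $\{C_j\in\chU_i\}$ is contained in the event that none of $C_1,\ldots,C_j$ is sampled. Because clusters are sampled independently with probability $1/\deg_i$ each, this containing event has probability exactly $(1-1/\deg_i)^j$, which therefore dominates $\Pr[C_j\in\chU_i]$.

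Summing the geometric tail yields
$$\E[\,\text{number of explorations visiting } v\,]\;\le\;\sum_{j=1}^{\infty}(1-1/\deg_i)^j\;=\;\deg_i-1\;\le\;\deg_i,$$
as required. The one delicate point is the triangle-inequality step: it crucially exploits the algorithmic choice of pairing a superclustering radius of $\delta_i$ with an interconnection radius of exactly $\half\delta_i$, and this factor-of-two gap is precisely what guarantees that any sampled cluster center at least as close to $v$ as $r_{C_j}$ has already absorbed $C_j$ in the superclustering step. I expect no additional obstacle; the base phase $i=0$ fits the same mould once one replaces the BFS-depth predicate ``$d_G(\cdot,\cdot)\le\half\delta_i$'' by the corresponding adjacency predicate inherited from the special interconnection rule.
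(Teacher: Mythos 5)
Your proof is correct, and it rests on the same geometric observation as the paper's: any cluster center within $\frac{1}{2}\delta_i$ of $v$ that gets sampled will, via the depth-$\delta_i$ superclustering BFS and the triangle inequality through $v$, absorb every other such center. Where you diverge is in the counting. The paper exploits this observation in its strongest, all-or-nothing form --- if \emph{any} of the $l$ relevant centers is sampled then \emph{no} exploration visits $v$ --- so the count is at most $l$ with probability $(1-1/\deg_i)^{l}$ and $0$ otherwise, and the bound follows from the pointwise inequality $l(1-1/\deg_i)^{l}\le\deg_i$. You instead apply linearity of expectation to the indicators $\mathbf{1}[C_j\in\chU_i]$, bound the $j$-th by $(1-1/\deg_i)^{j}$, and sum a geometric series to get $\deg_i-1$. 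Both routes are valid; yours trades the all-or-nothing structure for a cleaner summation that avoids the (mild) optimization fact $\max_l l(1-p)^l\le 1/p$, and the ordering of the centers by distance that you introduce is not actually needed, since the triangle inequality through $v$ is symmetric in the two centers. One caveat on your closing remark about $i=0$: there the factor-of-two gap is absent (superclustering and interconnection both run to depth $1$), so two neighbors of $v$ may be at distance $2>\delta_0$ and a sampled one need not absorb the other. The per-index geometric bound survives only because the phase-$0$ interconnection rule adds an edge $(v,u_j)$ only when $\{v\}$ itself lies in $\chU_0$, which already forces all of $v$'s neighbors to be unsampled; with that substitution (conditioning on $v$'s neighborhood rather than on $C_j$'s status) your mould goes through, and this is essentially how the paper treats the base phase as well.
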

\begin{proof}
For $i\ge 1$, assume that there are $l$ clusters of $\chP_i$ whose centers are within distance $\delta_i/2$ from $v$.  If at least one of them is sampled to $\cS_i$, then no exploration will visit $v$ (since in the superclustering phase the sampled center will explore to distance $\delta_i$, and thus will supercluster all these centers). The probability that none of them is sampled is $(1-1/\deg_i)^l$, in which case we get that $l$ explorations visit $v$, so the expectation is $l\cdot(1-1/\deg_i)^l\le \deg_i$ (which holds for any $l$).

For $i=0$, we note that we add an edge touching $v$ iff none of its neighbors were sampled at phase 0 (as otherwise it would be clustered and thus not in $\chU_0$). The expected number of edges added is once again $l\cdot(1-1/\deg_i)^l\le \deg_i$ (here $l$ is the number of neighbors).
\end{proof}

We also note the following lemma for future use, its proof follows from a simple Chernoff bound.
\begin{lemma}
\label{lm:property}
For any constant $c > 1$, with probability at least $1 - 1/n^{c-1}$, for every vertex $v \in V$, at least one among the $\deg_i \cdot c \cdot \ln n$ closest cluster centers $r_{C'}$ with $C' \in \chP_i$ to $v$ is sampled, i.e., satisfies
$C' \in \cS_i$.
\end{lemma}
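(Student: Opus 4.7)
The plan is a direct calculation followed by a union bound; no genuine Chernoff machinery is needed, though it can be phrased that way.

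First I would fix a vertex $v \in V$ and set $N = \lceil \deg_i \cdot c \cdot \ln n \rceil$. If $|\chP_i| < N$, I would take the entire collection $\chP_i$ in place of the ``$N$ closest cluster centers'' (the statement is clearly stronger in the other direction, and the argument below is unaffected since each of these clusters is independently sampled with probability $1/\deg_i$). Let $B_v$ denote the set consisting of the $N$ closest cluster centers $r_{C'}$ with $C' \in \chP_i$ to $v$ (breaking ties arbitrarily).

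Next I would use the fact that the clusters of $\chP_i$ are sampled independently and identically with probability $p = 1/\deg_i$. Let $\Event_v$ be the event that no cluster in $B_v$ is sampled into $\cS_i$. Then
\[
\Pr[\Event_v] \;=\; \left(1 - \frac{1}{\deg_i}\right)^{N} \;\le\; \exp\!\left(-\frac{N}{\deg_i}\right) \;\le\; e^{-c \ln n} \;=\; n^{-c}.
\]
(This is the ``simple Chernoff'' step: it is exactly the bound $\Pr[Y=0] \le e^{-\E[Y]}$ applied to $Y$, the number of sampled clusters in $B_v$, which has mean at least $c \ln n$.)

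Finally I would take a union bound over the $n$ vertices:
\[
\Pr\!\left[\bigcup_{v \in V} \Event_v\right] \;\le\; n \cdot n^{-c} \;=\; n^{-(c-1)}.
\]
The complementary event is precisely the conclusion of the lemma. I do not anticipate any real obstacle here; the only subtlety is just the mild bookkeeping for the edge case $|\chP_i| < N$, which is handled by noting that a smaller set of forced candidates can only decrease the failure probability.
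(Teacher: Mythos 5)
Your proof is correct and matches the paper's approach exactly: the paper gives no details beyond saying the lemma ``follows from a simple Chernoff bound,'' and your computation $(1-1/\deg_i)^{N}\le e^{-c\ln n}=n^{-c}$ followed by a union bound over the $n$ vertices is precisely that argument. One small quibble: in your edge-case remark, fewer than $N$ available clusters would \emph{increase} (not decrease) the probability that none is sampled; this does not matter because the lemma is only meaningful, and only ever invoked, when at least $\deg_i\cdot c\cdot\ln n$ cluster centers exist near $v$.
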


Observe that no vertex $v \in V$ is explored by more than $c \cdot \ln n \cdot \deg_i$ explorations, with probability at least $1 - n^{-(c-1)}$. Indeed, otherwise when $i\ge 1$ there would be more than $c \cdot \ln n \cdot \deg_i$ cluster centers $r_C$ of  unsampled clusters $C \in \chP_i$ at pairwise distance at most $\delta_i$.
Applying  Lemma \ref{lm:property} to any of them we conclude that that the particular cluster $C$ was superclustered  by a nearby sampled cluster, i.e., $C \nin \chU_i$, contradiction. In the case $i=0$, we would have that $v$ has at least $c \cdot \ln n \cdot \deg_i$ unsampled neighbors, which occurs with probability at most $n^{-c}$.
Hence, by union-bound, every vertex $v$ is explored by at most $c \cdot \ln n \cdot \deg_i$ explorations, with probability at least $1 - n^{-(c-1)}$.

Lemma \ref{lm:explorations} suggests that the interconnection step of phase $i$ can be carried out in expected $O(|E| \cdot \deg_i)$ time. Clearly, the superclustering step can be carried out in just $O(|E|)$ time, and thus the running time of the interconnection step dominates the running time of phase $i$. In order to control the running time, we terminate stage 1 and move on to stage 2 when $i_0=\lfloor\log(\kappa\rho)\rfloor$, so that $\deg_{i_0}\le n^\rho$.

Observe also that the superclustering step inserts into the spanner at most $O(n)$ edges (because we insert a subset of edges of $F_i$, and $F_i$ is a forest),
and by Lemma \ref{lm:explorations} the interconnection step inserts in expectation at most $O(|\chP_i| \cdot \deg_i \cdot ((1/\epsilon)^i + R_i) ) = O(|\chP_i| \cdot \deg_i \cdot (1/\epsilon)^i )$ edges.
(We will soon show that $R_i  = O((1/\epsilon)^{i-1})$.)
A more detailed argument providing an upper bound on the number of edges inserted by the interconnection step
will be given below.
\begin{lemma}
\label{lm:clusters}
For all $i$, $1 \le i \le \ell$,
for every pair of clusters $C \in \chU_i$, $C' \in \chP_i$ at distance at most $\half (1/\epsilon)^i$ from one another, a shortest path between the cluster centers of $C$ and $C'$ was inserted into the spanner $H$.
Moreover, for any pair $\{v\} \in \chU_0$, $\{v'\} \in \chP_0$, such that $e = (v,v') \in E$, the edge $e$ belongs to $H$.
\end{lemma}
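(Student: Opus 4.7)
The plan is to reduce everything to a distance computation between cluster centers, combined with the definition of the interconnection step. Specifically, I will argue that whenever two clusters are within the promised distance, their centers lie within the BFS radius $\delta_i/2$ used at the interconnection step, so the algorithm will necessarily have added a shortest path between them.

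As a preliminary, I would verify by induction on $i$ that $\Rad(\chP_i) \le R_i$. The base case $\Rad(\chP_0)=0=R_0$ is immediate since $\chP_0$ consists of singletons. For the inductive step, recall $\chP_{i+1}=\chS_i$, so \eqnref{eq:rad} gives $\Rad(\chP_{i+1})=\Rad(\chS_i)\le \delta_i+\Rad(\chP_i)\le \delta_i+R_i=R_{i+1}$. This is the only place where I rely on the recursion for $R_i$ already established in the paper.

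Now fix $i\ge 1$, $C\in\chU_i$ and $C'\in\chP_i$ with $\dist(C,C')\le \half(1/\eps)^i$, and choose $u\in C$, $u'\in C'$ realizing this distance. Using the cluster radius bound (distances inside clusters dominate distances in $G$) and the triangle inequality,
\[
d_G(r_C,r_{C'})\;\le\;\Rad(C)+d_G(u,u')+\Rad(C')\;\le\;2R_i+\tfrac{1}{2}(1/\eps)^i\;=\;\tfrac{1}{2}\bigl((1/\eps)^i+4R_i\bigr)\;=\;\delta_i/2.
\]
Hence $r_{C'}$ is within the BFS radius $\delta_i/2$ explored by $r_C$ at the interconnection step of phase $i$. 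Since $C'\in\chP_i$, the rule of that step guarantees that a shortest $r_C$–$r_{C'}$ path in $G$ is inserted into $H$, which is precisely what we need to prove.

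For the base case $i=0$: if $\{v\}\in\chU_0$ and $(v,v')\in E$, the neighbor $v'=r_{\{v'\}}$ is a cluster center of $\chP_0$ discovered by the depth-1 BFS from $v$ executed in the interconnection step of phase $0$, so the (unique) shortest path $(v,v')$ is added to $H$. I do not anticipate a real obstacle here; the whole proof is essentially bookkeeping. The key design observation, already baked into the definition $\delta_i=(1/\eps)^i+4R_i$, is that $\delta_i/2$ was chosen exactly to equal the triangle-inequality bound $\half(1/\eps)^i+2R_i$, which is why the interconnection radius suffices to catch every pair of nearby cluster centers.
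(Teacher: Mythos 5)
Your proof is correct and follows essentially the same route as the paper's: bound $d_G(r_C,r_{C'})$ by $\Rad(C)+d_G(C,C')+\Rad(C')\le \half(1/\epsilon)^i+2R_i=\half\delta_i$ using $\Rad(\chP_i)\le R_i$, and then invoke the depth-$\half\delta_i$ BFS of the interconnection step (with the phase-$0$ case handled by the depth-$1$ exploration). The only difference is that you spell out the induction establishing $\Rad(\chP_i)\le R_i$, which the paper treats as already following from inequality (\ref{eq:rad}).
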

\begin{proof}
We start with proving the first assertion of the lemma. For some index $i$, $1 \le i \le \ell$, and a pair $C \in \chU_i$, $C' \in \chP_i$ of clusters,
let $r_C,r_{C'}$ be the respective cluster centers. Then
we have
\begin{eqnarray*}
d_G(r_C,r_{C'}) &\le& \Rad(C) + d_G(C,C') + \Rad(C')\\
&\le& d_G(C,C') + 2\cdot R_i\\
&\le& \half (1/\epsilon)^i + 2 \cdot R_i = \half  \delta_i,
\end{eqnarray*}
 and so a shortest path between   $r_C$ and $r_{C'}$ was inserted into the spanner $H$.

The second assertion of the lemma is guaranteed by the interconnection step of phase 0.
\end{proof}

Next we analyze the radii of clusters' collections $\chP_i$, for $i = 1,2,\ldots$. 

\begin{lemma}
\label{lm:Ri}
For $i = 0.1,\ldots,\ell$, the value of $R_i$ is given by
$$R_i = \sum_{j=0}^{i-1} (1/\epsilon)^j \cdot 5^{i-1- j} .$$
\end{lemma}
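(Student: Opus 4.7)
The plan is to prove the closed form by straightforward induction on $i$, using the recursion $R_0 = 0$ and $R_{i+1} = (1/\epsilon)^i + 5 R_i$ that was established just before the lemma statement (recall $\delta_i = (1/\epsilon)^i + 4R_i$ and $R_{i+1} = \delta_i + R_i$).

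For the base case $i=0$, the claimed formula is the empty sum, which equals $0 = R_0$, as required. For the inductive step, I would assume $R_i = \sum_{j=0}^{i-1} (1/\epsilon)^j \cdot 5^{i-1-j}$ and compute
\begin{equation*}
R_{i+1} \;=\; (1/\epsilon)^i + 5 R_i \;=\; (1/\epsilon)^i + \sum_{j=0}^{i-1} (1/\epsilon)^j \cdot 5^{i-j}.
\end{equation*}
The first term $(1/\epsilon)^i$ is exactly the $j=i$ term of the desired sum $\sum_{j=0}^{i} (1/\epsilon)^j \cdot 5^{i-j}$, and the remaining $j=0,\ldots,i-1$ terms line up with the summation on the right, completing the induction.

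There is no real obstacle here: the lemma is a routine solution of a linear first-order recurrence with forcing term $(1/\epsilon)^i$, and the induction is a one-line computation. The only thing to be careful about is the indexing convention (empty sum at $i=0$), which is resolved above. In practice the statement will be used in the sequel mainly through the easy corollary that when $1/\epsilon > 5$ the sum is dominated by its largest term, giving $R_i = O((1/\epsilon)^{i-1})$, as already anticipated in the discussion preceding the lemma.
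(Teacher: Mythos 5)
Your proof is correct and is essentially identical to the paper's: both establish the closed form by induction on $i$, with the base case $R_0=0$ given by the empty sum and the inductive step absorbing $(1/\epsilon)^i$ as the $j=i$ term of $\sum_{j=0}^{i}(1/\epsilon)^j\cdot 5^{i-j}$. Nothing further is needed.
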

\begin{proof}
The proof is by induction of the index $i$.
The basis ($i=0$) is immediate as $R_0=0$.
For the induction hypothesis, note that
\begin{eqnarray*}
R_{i+1} &=&  \delta_i +  R_i = (1/\epsilon)^i + 5 \cdot R_i\\
&=& (1/\epsilon)^i + 5\cdot \left(\sum_{j=0}^{i-1} (1/\epsilon)^j \cdot 5^{i-1 - j} \right) \\
& = & \sum_{j=0}^i (1/\epsilon)^j \cdot 5^{i-j}~,
\end{eqnarray*}

as required.
\end{proof}

Observe that Lemma \ref{lm:Ri} implies that for $\epsilon < 1/10$, we have  $R_i =5^{i-1}\cdot\frac{1/(5\epsilon)^i-1}{1/(5\epsilon)-1}\le {1 \over {1 - 5\epsilon}} \cdot(1/\epsilon)^{i-1} \le 2 \cdot (1/\epsilon)^{i-1}$.
Recall that $\chP_i = \chS_{i-1}$.
Hence $\Rad(\chP_i) = \Rad(\chS_{i-1})$.
By inequality (\ref{eq:rad}), we have $\Rad(\chP_i) \le (1/\epsilon)^{i-1} + 5 \cdot R_{i-1} = R_i$, for all $i = 0,1,\ldots,\ell$.

We analyze the number of clusters in collections $\chP_i$ in the following lemma.

\begin{lemma}
\label{lm:Pi}
For $i = 0,1,\ldots,i_0$,
\begin{equation}
\label{eq:Pi}
|\chP_i| ~\le~ 2 \cdot n^{1 - {{2^i -1} \over \kappa}} ~,
\end{equation}
with probability at least $1 - \exp\{-\Omega(n^{1 - {{2^i -1} \over \kappa}} )\}$.
\end{lemma}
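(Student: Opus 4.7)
The plan is to express $|\chP_i|$ as a sum of $n$ mutually independent Bernoulli indicators and apply a multiplicative Chernoff bound. The key combinatorial observation is that the superclustering step defines an injection $\chS_j \hookrightarrow \chP_j$ sending each supercluster $\hat C$ to the unique sampled cluster $C \in \cS_j$ at its core; since $\chP_{j+1} = \chS_j$, the center of any cluster in $\chP_{j+1}$ coincides with the center of some cluster of $\chP_j$ that was sampled into $\cS_j$. Iterating this, a vertex $v \in V$ serves as the center of some cluster in $\chP_i$ if and only if its singleton $\{v\} \in \chP_0$ survives the sampling at every phase $0, 1, \ldots, i-1$.

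First I would couple the randomness of the algorithm with a family of mutually independent Bernoulli variables $\{X_v^{(j)} : v \in V,\ 0 \le j < i\}$, where $X_v^{(j)} \sim \mathrm{Bernoulli}(1/\deg_j)$, declaring that a cluster $C \in \chP_j$ is sampled into $\cS_j$ precisely when $X_{r_C}^{(j)} = 1$. This coupling is faithful to the algorithm, because the sampling of different clusters at a fixed phase is i.a.r.\ and the samplings across distinct phases are independent. Setting $Y_v = \prod_{j=0}^{i-1} X_v^{(j)}$, the observation above gives $|\chP_i| = \sum_{v \in V} Y_v$, and the $Y_v$'s are mutually independent because the underlying $X_v^{(j)}$'s are independent across $v$.

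A direct calculation yields $\Pr[Y_v = 1] = \prod_{j=0}^{i-1} n^{-2^j/\kappa} = n^{-(2^i - 1)/\kappa}$, so $\mu := \E[|\chP_i|] = n^{\,1 - (2^i - 1)/\kappa}$. Applying the multiplicative Chernoff bound with deviation $\delta = 1$ gives
$$\Pr\bigl[|\chP_i| \ge 2\mu\bigr] \;\le\; \exp(-\mu/3) \;=\; \exp\bigl(-\Omega(n^{\,1-(2^i-1)/\kappa})\bigr),$$
which is exactly the claimed tail estimate; the base case $i = 0$ is immediate since $|\chP_0| = n$ deterministically. The one point that demands care is the coupling together with the justification that the $Y_v$'s are mutually independent — once that is in place, the Chernoff step is entirely routine.
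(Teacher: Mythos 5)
Your proposal is correct and follows essentially the same route as the paper: the paper likewise observes that a vertex is a center of a cluster in $\chP_i$ with probability $\prod_{j=0}^{i-1}1/\deg_j = n^{-(2^i-1)/\kappa}$, so $\E[|\chP_i|]=n^{1-(2^i-1)/\kappa}$, and then applies a Chernoff bound. Your write-up merely makes explicit the coupling and the mutual independence of the indicators, which the paper leaves implicit.
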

\begin{proof}
The probability that a vertex $v\in V$ will be a center of a cluster in $\chP_i$ is $\prod_{j=0}^{i-1}1/\deg_j=n^{-(2^i-1)/\kappa}$. Thus the expected size of $\chP_i$ is $n^{1-(2^i-1)/\kappa}$, and by Chernoff bound,
\begin{eqnarray*}
\Pr[|\chP_i|\ge 2\E[|\chP_i|]]\le \exp\{-\Omega(\E[|\chP_i|])\}=\exp\{-\Omega(n^{1 - {{2^i -1} \over \kappa}})\}~.
\end{eqnarray*}



\end{proof}


Since for $\rho < 1/2$ and $i \le i_0 = \lfloor \log (\kappa\rho) \rfloor$, we have
$n^{1 - {{2^{i+1} - 1} \over \kappa}}  \ge n^{1-2\rho} = \omega(\log n)$, we conclude that whp for all $0\le i\le i_0$,
$|\chP_{i+1}| =  |\chS_i| = O( n^{1 - {{2^{i+1} - 1} \over \kappa}})$.
Hence in particular, $|\chP_{i_0+1}| = O( n^{1 - \rho +1/\kappa})$, whp.

The total expected  running time of the first stage is  at most
$$O(|E|) \sum_{i=1}^{i_0} \deg_i ~=~ O(|E|  \cdot n^{{2^{i_0} } \over \kappa})  ~=~  O(|E| \cdot n^\rho)~.$$

Since each superclustering step inserts at most $O(n)$ edges into the spanner,
the overall number of edges inserted by the $i_0$ superclustering steps of stage 1 is $O(n \cdot i_0) = O(n \cdot \log (\kappa\rho))$.
The expected number of edges added to the spanner by the interconnection step of phase $i$ is at most
\begin{eqnarray*}
O(|\chP_i| \cdot \deg_i  \cdot  (1/\epsilon)^i) = O(n^{1 - {{2^i - 1} \over \kappa}} \cdot  (1/\epsilon)^{\log (\kappa\rho)} \cdot n^{2^i \over \kappa})= O(n^{1+ 1/\kappa}  \cdot (1/\epsilon)^{\log (\kappa\rho)})~.
\end{eqnarray*}

Next we describe stage 2 of the algorithm, i.e., phases $i = i_0+1,i_0+2,\ldots,i_1$, where $i_1 = i_0 + \lceil {{\kappa + 1} \over {\kappa\rho}} \rceil - 2$.
All these phases are executed with the same fixed degree parameter $\deg_i = n^\rho$. On the other hand, the distance threshold keeps growing in the same steady rate as in stage 1, i.e., it is given by $\delta_i = (1/\epsilon)^i + 4 R_i$.
The sets $\chP_{i_0+1},\chP_{i_0+2},\ldots,\chP_{i_1}$ on which phases $i_0+1,i_0+2,\ldots,i_1$, respectively, operate are defined by
$\chP_{i_0+i} ~=~ \chS_{i_0+i-1}$,
for $1 \le i \le i_1 - i_0$.

Lemma \ref{lm:Ri}  keeps holding for these additional $i_1 - i_0$ phases, i.e.,
$$\Rad(\chP_i) ~\le~ R_i ~\le ~ 2 \cdot (1/\epsilon)^{i-1}~.$$
(We assume all through that $\epsilon < 1/10$.)

Also, for every pair of
clusters $C \in \chU_i$ and $C' \in \chP_i$ which are at
 distance at most $\half (1/\epsilon)^i$ from one another, their centers are interconnected in the spanner by a shortest path between them.

In addition, for every $i \in [i_0,i_1]$, the expected size of $\chP_{i+1}$ is
\[
\E[|\chP_{i+1}|]=n\cdot\prod_{j=0}^{i}1/\deg_j\le n^{1 + 1/\kappa  - (i + 1 - i_0) \rho}~.
\]
By Chernoff bound, for every such $i$, with probability at least $1 - \exp\{-\Omega(n^\rho)\}$,
we have
$$|\chP_{i+1}|= |\chS_i| ~\le~ 2 \cdot n^{1 + 1/\kappa -\rho - (i - i_0)\rho} ~.$$

Assuming that $n^\rho=\omega(1)$, we conclude that whp
\begin{eqnarray}
\label{eq:finalPi}
|\chP_{i_1 + 1}| = |\chS_{i_1}| ~\le~ O(n^{1 + 1/\kappa  - (i_1 + 1 - i_0) \rho})
= O(n^{1 +1/\kappa - (\lceil {{\kappa + 1} \over {\kappa\rho}} \rceil - 1)\rho} ) ~=~ O(n^\rho)~.
\end{eqnarray}
(For the assumption above to hold we will need to assume that $\rho \ge {{\log\log n} \over {2\log n}}$, say. We will show soon that this assumption is valid in our setting.)

The time required to perform these $\lceil {{\kappa + 1} \over {\kappa\rho}} \rceil - 2\le 1/\rho$ additional phases is expected to be at most $O(|E| \cdot \deg_i  \cdot (1/\rho)) = O(|E| n^\rho/\rho)$.

The final collection of clusters $\chP_{i_1 + 1}$ is created by setting
$\chP_{i_1 + 1} ~=~ \chS_{i_1}$.

We will next bound the expected number of edges inserted into the spanner during stage 2 of the algorithm. Each of the forests $F_i$, $i \in [i_0+1,i_1]$, created during the superclustering steps contributes at most $n-1$ edges.
The interconnection step of phase $i+i_0$ contributes in expectation
at most $O(|\chP_{i+i_0}| \cdot \deg_{i+i_0} \cdot (1/\epsilon)^i ) \le O(n^{1+1/\kappa-i\rho} \cdot (1/\epsilon)^{\log(\kappa\rho)+i})$ edges. Assuming that $1/\epsilon<n^{\rho}/2$, this becomes a geometric progression, so the overall expected number of edges inserted into the spanner on stage 2 is $O(n^{1+1/\kappa} \cdot (1/\epsilon)^{\log (\kappa\rho)})$. (We will show the validity of this assumption in the end of this section.)

Finally, we describe the concluding phase of the algorithm, i.e., phase $\ell = i_1 +1$. In this phase we skip the superclustering step (as the number of clusters is already sufficiently small), and proceed directly to the interconnection step.

On this  step each of the cluster centers  $r_C$ for $C \in \chP_\ell$ conducts a BFS exploration in $G$ to depth $\half \delta_\ell = \half (1/\epsilon)^\ell + 2R_\ell$.
(Essentially, we define $\chU_\ell = \chP_\ell$, and perform the usual interconnection step of the algorithm.)
By (\ref{eq:finalPi}), the number of edges inserted by this step into the spanner is whp only
$O( |\chP_\ell|^2 \cdot (1/\epsilon)^\ell) = O(|\chP_{i_1 +1}|^2 \cdot (1/\epsilon)^{i_1+1}) = O(n^{2\rho}  \cdot (1/\epsilon)^{\log (\rho\kappa) + 1/\rho})$.
Recall that we assume that $\rho < 1/2$. Hence this number of edges is sublinear in $n$.

Hence the overall expected number of edges in the spanner is $|H| = O(n^{1+1/\kappa}   \cdot (1/\epsilon)^{\log (\kappa\rho)})$. Observe also that the running time of the last phase is $O(|E| \cdot n^\rho )$.
Hence the overall expected running time of the algorithm is $O(|E| n^\rho/\rho)$.
It remains to analyze the stretch of the resulting spanner $H$.

Let $\chU = \bigcup_{j=0}^\ell \chU_j$.
Observe that every singleton cluster $\{v\} \in \chP_0$ is associated with exactly  one cluster of $\chU$, i.e., $\chU$ is a partition of $V$.
Note that $\Rad(\chU_0) = 0$, $\Rad(\chU_1) \le 1=R_1$, and for every $j \in [\ell]$,
 we have $\Rad(\chU_j) \le \Rad(\chP_j) \le R_j \le 2 \cdot (1/\epsilon)^{j-1}$. Denote $c= 2$.
Recall also (see Lemma \ref{lm:clusters}) that for $j \ge 1$, for  every  pair of clusters $C,C' \in \chU_j$ at distance at most $\half (1/\epsilon)^j$ from one another,  a shortest path between the cluster centers of this pair of clusters in $G$ was added to the spanner $H$.
Moreover, neighboring clusters of $\chU_0$ are also interconnected by a spanner edge.

\begin{lemma}
\label{lm:stretch_aux}
Consider  a pair of indices $0 \le j < i \le \ell$, and a pair of neighboring (in $G$) clusters $C' \in \chU_j$, $C \in \chU_i$, and a vertex $w' \in C'$ and the center $r$ of $C$.
Then  the spanner $H$ contains a path of length at most $3 \Rad(\chU_j) + 1 + \Rad(\chU_i)$ between $w'$ and $r$.
\end{lemma}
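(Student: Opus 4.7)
The plan is to exhibit a path in the spanner $H$ from $w'$ to $r$ built from three segments, using both the intra-cluster BFS trees stored in $H$ and the inter-cluster shortest paths inserted during the interconnection step. Since $C'$ and $C$ are neighboring in $G$, I fix an edge $(u',u)\in E$ with $u'\in C'$ and $u\in C$. Because $\chP_j$ is a partition of $V$, there is a unique cluster $D\in\chP_j$ containing $u$; and since $u$ ends up in the level-$i$ cluster $C\in\chU_i$ with $i>j$, the cluster $D$ must be superclustered through phases $j,j+1,\ldots,i-1$ into $C$. Consequently $D\subseteq C$, and in particular $r_D\in C$.

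The first segment walks from $w'$ to $r_{C'}$ along the BFS spanning tree $T_{C'}$ of $C'$, which lies in $H$ by construction of the supercluster edge sets, contributing length at most $\Rad(C')\le\Rad(\chU_j)$. The second segment walks from $r_{C'}$ to $r_D$: for $j\ge 1$, since $d_G(C',D)\le 1\le\half(1/\epsilon)^j$, the hypothesis of Lemma~\ref{lm:clusters} is satisfied and the interconnection step of phase $j$ has inserted a shortest path of length $d_G(r_{C'},r_D)\le\Rad(C')+1+\Rad(D)$ into $H$; for the special case $j=0$ the singletons $C'=\{u'\}$ and $D=\{u\}$ are joined directly by the edge $(u',u)\in H$ via the second assertion of Lemma~\ref{lm:clusters}, and both cluster radii vanish. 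The third segment walks from $r_D$ to $r=r_C$ inside the BFS tree $T_C$ of the supercluster $C$ (which lies in $H$); since $r_D\in C$, this contributes length at most $\Rad(C)\le\Rad(\chU_i)$.

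Summing the three segments yields a total length of at most $2\Rad(C')+\Rad(D)+1+\Rad(C)$. Plugging in $\Rad(C')\le\Rad(\chU_j)$, $\Rad(C)\le\Rad(\chU_i)$, and the uniform level-$j$ radius bound $\Rad(D)\le\Rad(\chP_j)$ (interpreted as $\Rad(\chU_j)$ in the statement), the claimed bound $3\Rad(\chU_j)+1+\Rad(\chU_i)$ follows. The main subtlety is exactly the bound on $\Rad(D)$: because $u$ lives in a strictly higher-level supercluster, the level-$j$ cluster $D$ was necessarily superclustered at phase $j$ and thus lies in $\chP_j\setminus\chU_j$, so formally one relies on the uniform bound $\Rad(\chP_j)\le R_j$ shared by all level-$j$ clusters, which the lemma absorbs into the symbol $\Rad(\chU_j)$. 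All remaining steps are mechanical given Lemma~\ref{lm:clusters} and the fact that each supercluster's spanning tree (including the shortest paths from the central cluster's root to the roots of its absorbed subclusters) is entirely contained in $H$.
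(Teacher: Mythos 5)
Your proof is correct and follows essentially the same route as the paper's: pick a crossing edge, descend to the level-$j$ subcluster of $C$ containing its endpoint, concatenate the tree path in $C'$, the phase-$j$ interconnection path between the two level-$j$ centers, and the tree path in $C$ down from that subcluster's center to $r$. Your remark that the subcluster $D$ lies in $\chP_j\setminus\chU_j$, so its radius is formally bounded by $\Rad(\chP_j)\le R_j$ rather than literally by $\Rad(\chU_j)$, is a fair observation about the paper's notation; the subsequent application of the lemma only uses the uniform bound $R_j$, so nothing is affected.
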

\begin{proof}
Let $(z',z) \in E \cap (C' \times C)$ be an edge connecting this pair of clusters. There exists a subcluster $C'' \subseteq C$, $C'' \in \chP_j$ such that $z \in C''$. Hence the interconnection step of phase $j$ inserted a shortest path $\pi(r',r'')$
in $G$ between the cluster centers $r'$ of $C'$ and $r''$ of $C''$ into the spanner $H$.
Note that the distance between $r',r''$ is at most $1+2\Rad(\chP_j)\le 1+4(1/\epsilon)^{j-1}<1/2\cdot(1/\epsilon)^j$, since we assume $\epsilon<1/10$.
 Hence a path between $w'$ and $r$ in $H$ can be built by concatenating a path $\pi(w',r')$ between $w'$ and $r'$ in the spanning tree $T(C')$ of $C'$ with the path $\pi(r',r'')$ in $H$, and
with the path $\pi(r'',r)$ in the spanning tree $T(C)$ of $C$. (Note that both $r''$ and $r$ belong to $C$.) Its length is at most
\begin{eqnarray*}
|\pi(w',r')| + |\pi(r',r'')| + |\pi(r'',r)|&\le& \Rad(C') + (\Rad(C') + 1 + \Rad(C'')) + \Rad(C)\\& \le&  3 \Rad(\chU_j) + 1 + \Rad(\chU_i)~.
\end{eqnarray*}
\end{proof}

Now we are ready to analyze the stretch of our spanner.

\begin{lemma}
\label{lm:stretch}
Suppose $\epsilon \le 1/10$.
Consider a pair of vertices $u,v \in V$. Fix a shortest path $\pi(u,v)$ between them in $G$, and suppose that for some index $i \in [0,\ell]$, all vertices of $\pi(u,v)$ are clustered in the set $\chU^{(i)}$ defined by $\chU^{(i)} = \bigcup_{j=0}^i\chU_j$.
Then
$$
d_H(u,v) \le (1  + 16 c \cdot \epsilon \cdot i ) d_G(u,v)  + 4 \sum_{j=1}^i R_j \cdot 2^{i-j}~.
$$
\end{lemma}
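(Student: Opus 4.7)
I plan to prove the lemma by induction on $i$. The base case $i=0$ is immediate: since $\chU_0$ is the singleton partition and the level-$0$ interconnection step puts every edge between two adjacent $\chU_0$ singletons into $H$, every edge of $\pi(u,v)$ already lies in $H$ and $d_H(u,v)=d_G(u,v)$, with both the multiplicative excess and the empty additive sum vanishing.

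For the inductive step from $i-1$ to $i$, a trivial sub-case is when no vertex of $\pi(u,v)$ lies in a $\chU_i$ cluster, so $\pi(u,v)\subseteq\chU^{(i-1)}$ and the inductive hypothesis already yields a bound that is weaker than the target (both $16c\epsilon i$ and $4\sum_{j=1}^{i}R_j 2^{i-j}$ are monotone in $i$). In the main sub-case, let $C_1,\ldots,C_k\in\chU_i$ be the $\chU_i$ clusters met by $\pi(u,v)$ in the order of first appearance, and let $x_l,y_l\in C_l$ denote the first and last path-vertices in $C_l$, so that $d_G(x_l,y_l)\le 2R_i$ and each of the subpaths $\pi(u,x_1^-)$, $\pi(y_l^+,x_{l+1}^-)$, $\pi(y_k^+,v)$ lies entirely in $\chU^{(i-1)}$.

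The route in $H$ I construct goes $u\to r_{C_1}\to r_{C_2}\to\cdots\to r_{C_k}\to v$. For each center-to-center transition I split on whether the gap is \emph{short}, meaning $d_G(r_{C_l},r_{C_{l+1}})\le (1/2)(1/\epsilon)^i$ so that Lemma~\ref{lm:clusters} puts a shortest interconnection path of length at most $2R_i+d_G(y_l,x_{l+1})$ into $H$, or \emph{long}, in which case I reroute via $r_{C_l}\to y_l\to x_{l+1}\to r_{C_{l+1}}$ with spanning-tree detours of length $\le R_i$ plus the inductive hypothesis on the (purely $\chU^{(i-1)}$) subpath $\pi(y_l,x_{l+1})$. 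Long gaps automatically satisfy $d_G(y_l,x_{l+1})\ge (1/10)(1/\epsilon)^i$ thanks to $R_i\le 2(1/\epsilon)^{i-1}$ and $\epsilon\le 1/10$. The prefix $u\to r_{C_1}$ and the suffix $r_{C_k}\to v$ are handled by the inductive hypothesis on their $\chU^{(i-1)}$-portion plus a single application of Lemma~\ref{lm:stretch_aux} at the edge connecting the last $\chU^{(i-1)}$ vertex into the boundary $\chU_i$ cluster.

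Summing these contributions yields a multiplicative term $(1+16c\epsilon(i-1))d_G(u,v)$; an inductive additive $\alpha_{i-1}:=4\sum_{j=1}^{i-1}R_j 2^{i-1-j}$ charged twice at the boundary (giving exactly the $2\alpha_{i-1}$ appearing in $\alpha_i=4R_i+2\alpha_{i-1}$) plus once per long gap; and an $O(R_i)$ detour per cluster and per short transition. The per-long-gap $\alpha_{i-1}=O((1/\epsilon)^{i-2})$ is absorbed by the new multiplicative jump $16c\epsilon\cdot d_G(y_l,x_{l+1})$ because the long-gap length is $\Omega((1/\epsilon)^i)$; the hypothesis $\epsilon\le 1/10$ is precisely what makes this accounting tight. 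The hardest step is controlling the remaining $\Theta(kR_i)$ cluster/short-transition overhead, which has no a~priori bound against $d_G(u,v)$ when many $\chU_i$ clusters sit tightly on the path. I plan to resolve this by grouping consecutive $\chU_i$ clusters into maximal \emph{clumps} whose extreme centers remain within interconnection range, routing each clump by a single interconnection jump (so the overhead becomes $O(R_i)$ per clump rather than per cluster), and using the long-gap separation between distinct clumps to bound the number of clumps by $1+O(\epsilon\cdot d_G(u,v)/(1/\epsilon)^{i-1})$. This collapses the total clump overhead to $4R_i+O(\epsilon)\cdot d_G(u,v)$, fitting exactly into $(1+16c\epsilon i)d_G(u,v)+4R_i+2\alpha_{i-1}$ with $c=2$.
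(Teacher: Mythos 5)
Your overall strategy is sound and would yield the lemma, but it is organized quite differently from the paper's proof, and the extra machinery you introduce is exactly what the paper's organization avoids. The paper first proves the bound only for subpaths $\pi(x,y)$ of length at most $\half(1/\epsilon)^i$: on such a short segment the leftmost and the rightmost $\chU_i$-clustered vertices $z_1,z_2$ automatically lie in clusters whose centers are within interconnection range (Lemma~\ref{lm:clusters}), so a \emph{single} interconnection jump $r_1\to r_2$ bypasses everything between them, and the only other ingredients are Lemma~\ref{lm:stretch_aux} at the two boundary edges and the induction hypothesis on the two $\chU^{(i-1)}$-end-pieces $\pi(x,w_1)$, $\pi(w_2,y)$. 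The general case then follows by tiling $\pi(u,v)$ with segments of length $\lfloor\half(1/\epsilon)^i\rfloor$ and absorbing the per-segment additive cost $O((1/\epsilon)^{i-1})$ into the multiplicative term, since each segment has length $\Theta((1/\epsilon)^i)$. Your clump decomposition performs the same amortization (one jump and $O(R_i)$ overhead per clump, clumps of span $\Theta((1/\epsilon)^i)$), but at the price of the long/short gap case analysis and the clump bookkeeping, neither of which is needed in the paper's version.

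Two points need repair if you carry your route through. First, your stated reason for the clump-count bound --- ``the long-gap separation between distinct clumps'' --- is not true: consecutive maximal clumps can be separated by a gap of length $1$ when $\chU_i$-clusters are densely packed along the path. The correct argument is that maximality forces the path-distance between the \emph{first} vertices of consecutive clumps to be at least $\half(1/\epsilon)^i - O(R_i)=\Omega((1/\epsilon)^i)$, which gives the same count $1+O(\epsilon^i\, d_G(u,v))$. Second, the interconnection guarantee of Lemma~\ref{lm:clusters} is stated in terms of the distance between the \emph{clusters} (at most $\half(1/\epsilon)^i$), not between their centers, so your clumps should be defined accordingly; with $R_i\le 2(1/\epsilon)^{i-1}$ and $\epsilon\le 1/10$ this is a cosmetic adjustment. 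With these fixes, and a careful check that the per-clump and per-gap overheads sum to at most $16c\epsilon\, d_G(u,v)$ beyond the inductive $(1+16c\epsilon(i-1))$ factor (the paper's closing computation shows the analogous constants do work out), your argument is complete.
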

\begin{proof}
The proof is by induction on $i$.
For the induction basis $i=0$, observe that all vertices of $\pi(u,v)$ are clustered in $\chU_0$, and thus all edges of $\pi(u,v)$ are inserted  into the spanner on phase 0.
Hence $d_H(u,v) = d_G(u,v)$.

For the induction step, consider first a pair of vertices $x,y$ such that $|\pi(x,y)| \le \half (1/\epsilon)^i$,
and $V(\pi(x,y)) \subseteq \chU^{(i)}$.
 Let $z_1$ and $z_2$ be the leftmost and the rightmost $\chU_i$-clustered vertices in $\pi(x,y)$, if exist. (The case when both these vertices exist is the one where the largest stretch is incurred; cf. \cite{EP04}.)
Let $C_1,C_2 \in \chU_i$ be their respective clusters, i.e., $z_1 \in C_1$, $z_2 \in C_2$.
Let $w_1$ (respectively, $w_2$) be the neighbor of $z_1$ (resp., $z_2$) on the subpath $\pi(x,z_1)$ (resp., $\pi(z_2,y)$) of $\pi(x,y)$, and denote by $C'_1$ and $C'_2$
the respective clusters of $w_1$ and $w_2$. Observe that $C'_1,C'_2 \in \chU^{(i-1)}$.

Denote $r_1$ and $r_2$ the  cluster centers of $C_1$ and $C_2$, respectively. The spanner $H$ contains a path of length at most $d_G(r_1,r_2)$ between these cluster centers.
Also, by Lemma \ref{lm:stretch_aux}, since $C'_1$ and $C_1$ are neighboring clusters, the spanner $H$ contains a path of length at most
$3 R_j+ 1 + R_i \le 2R_i +1$
between $w_1$ and $r_1$, and a path of at most this length between $r_2$ and $w_2$.
(For $\epsilon < 1/10$, $3R_j \le R_i$, for all $j < i$.)
Observe also that the subpaths $\pi(x,w_1)$ and $\pi(w_2,y)$ of $\pi(x,y)$ have all their vertices clustered in $\chU^{(i-1)}$, and thus the induction hypothesis is applicable to these subpaths.

Hence
\begin{eqnarray*}
d_H(x,y)
 &\le&  d_H(x,w_1) + d_H(w_1,r_1) + d_H(r_1,r_2) + d_H(r_2,w_2) + d_H(w_2,y) \\
& \le & (1  + 16 c \cdot \epsilon (i-1)) d_G(x,w_1) +  4 \sum_{j=1}^{i-1} R_j \cdot 2^{i-1 - j} + 2R_i + 1 + (d_G(C_1,C_2) + 2R_i) \\&& + 2R_i + 1  +   (1 + 16c \cdot \epsilon(i-1)) \cdot d_G(w_2,y) + 4 \sum_{j=1}^{i-1} R_j \cdot 2^{i-1-j} \\
& = &  (1 + 16c \cdot \epsilon(i-1)) \cdot (d_G(x,w_1) + d_G(w_2,y)) + d_G(C_1,C_2) +   4R_i + 2 + 8\sum_{j=1}^{i-1}R_j \cdot 2^{i-1-j}.
\end{eqnarray*}
Note also that
\begin{eqnarray*}
d_G(x,y)& = & d_G(x,w_1) + 1 + d_G(z_1,z_2) +1 + d_G(w_2,y) \ge  d_G(x,w_1)  + d_G(C_1,C_2) + 2 + d_G(w_2,y).
\end{eqnarray*}
Hence
\begin{eqnarray}\nonumber
d_H(x,y)\nonumber &\le & (1 + 16c \cdot \epsilon(i-1)) d_G(x,y) + 4R_i  +8\sum_{j=1}^{i-1}R_j \cdot 2^{i-1-j} \\\nonumber
\label{eq:stretch}
& = & (1  + 16c \cdot \epsilon(i-1)) d_G(x,y) + 4 \sum_{j=1}^i R_j \cdot 2^{i-j}.
\end{eqnarray}
Now consider a pair of vertices $u,v$ such that all vertices of $\pi(u,v)$ are clustered in $\chU^{(i)}$, without any restriction on $|\pi(u,v)|$.
We partition $\pi(u,v)$ into segments $\pi(x,y)$ of length exactly $\lfloor \half (1/\epsilon)^i \rfloor$,  except maybe one segment of possibly smaller length. Inequality (\ref{eq:stretch}) applies to all these segments.
Hence
\begin{eqnarray*}
d_H(u,v) &\le & (1  + 16c \cdot \epsilon(i-1))d_G(u,v) + 4 \sum_{j=1}^i R_j \cdot 2^{i-j} \lfloor {{d_G(u,v)} \over {\half (1/\epsilon)^i - 1}} \rfloor + 4\sum_{j=1}^i R_j \cdot 2^{i-j} \\
& \le &  \left(1  + 16c \cdot \epsilon(i-1) + {{4\sum_{j=1}^{i} R_j \cdot 2^{i-j}} \over {\half (1/\epsilon)^i - 1}}\right) d_G(u,v) + 4\sum_{j=1}^i R_j \cdot 2^{i-j}.
\end{eqnarray*}
It remains to argue that ${{8 \sum_{j=1}^i R_j 2^{i-j}} \over {(1/\epsilon)^i - 2}} \le 16c\cdot \epsilon$.
Recall that for every $j$, we have $R_j \le c \cdot (1/\epsilon)^{j-1}$. Since $1/\epsilon \ge 10$,
the left-hand-side is at most
\begin{eqnarray*}
10 c  \cdot  \epsilon^i \sum_{j=1}^i  (1/\epsilon)^{j-1} \cdot 2^{i-j}= 10c \cdot \epsilon \sum_{j=1}^i (2\epsilon)^{i-j}=10 c\cdot \epsilon \sum_{h=0}^{i-1} (2\epsilon)^h ~\le 16c\cdot \epsilon~.
\end{eqnarray*}

\end{proof}

Observe that (as $\epsilon \le 1/10$), we have
\begin{eqnarray*}
\sum_{j=1}^i R_j \cdot 2^{i-j} &\le&  c \sum_{j=1}^i(1/\epsilon)^{j-1} \cdot 2^{i-j} ~=~ c \cdot 2^{i-1} \cdot \sum_{j=1}^i \left({1 \over {2\epsilon}}\right)^{j-1} \\
& = &
c \cdot 2^{i-1} \sum_{j=0}^{i-1} \left({1 \over {2\epsilon}} \right)^{j} ~=~ c \cdot 2^{i-1} {{ (1/2\epsilon)^{i} - 1} \over {(1/2\epsilon) - 1}} \\
& = & c \cdot \epsilon \cdot {{\half \cdot (1/\epsilon)^i - 2^{i-1}} \over { \half - \epsilon}} ~=~
O\left(\left({1 \over \epsilon}\right)^{i-1}\right)~.
\end{eqnarray*}
Note also that the condition of the last lemma holds with $i = \ell$ for every pair $u,v \in V$ of vertices. Hence

\begin{corollary}
\label{cor:stretch}
For every pair $u,v \in V$,
$$
d_H(u,v) ~\le~ (1+  16 c \cdot \ell\cdot \epsilon)d_G(u,v) + O((1 / \epsilon)^{\ell-1})~.
$$
\end{corollary}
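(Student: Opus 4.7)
The plan is to derive the corollary as a direct instantiation of Lemma \ref{lm:stretch} at the top level $i = \ell$, combined with the geometric-series bound on the additive term that was already computed immediately before the corollary statement. The key preliminary observation is that $\chU = \bigcup_{j=0}^\ell \chU_j$ forms a partition of $V$: every singleton cluster of the initial partition $\chP_0$ is associated (transitively) with exactly one surviving cluster of $\chU_j$ for some $j \le \ell$, and the concluding phase $\ell$ sweeps up all still-unclustered vertices by setting $\chU_\ell = \chP_\ell$. Consequently every vertex of $V$, and in particular every vertex on an arbitrary fixed shortest path $\pi(u,v)$ in $G$, lies in $\chU^{(\ell)}$.

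With this observation the hypothesis of Lemma \ref{lm:stretch} is satisfied with $i = \ell$ for an arbitrary pair $u,v \in V$, which yields immediately
\[
d_H(u,v) \;\le\; (1 + 16 c\cdot \epsilon \cdot \ell)\, d_G(u,v) \;+\; 4 \sum_{j=1}^\ell R_j \cdot 2^{\ell - j}.
\]
The multiplicative part of the bound already matches the stated form, so it only remains to bound the additive tail.

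For the additive tail I would simply re-use (or cite) the computation carried out right before the corollary. Recalling that $\epsilon \le 1/10$ and that Lemma \ref{lm:Ri} gives $R_j \le c \cdot (1/\epsilon)^{j-1}$ with $c = 2$, we obtain
\[
\sum_{j=1}^\ell R_j \cdot 2^{\ell - j}
\;\le\; c \cdot 2^{\ell - 1} \sum_{j=0}^{\ell - 1} \left(\tfrac{1}{2\epsilon}\right)^{j}
\;=\; c \cdot 2^{\ell - 1} \cdot \frac{(1/2\epsilon)^{\ell} - 1}{(1/2\epsilon) - 1}
\;=\; O\!\left((1/\epsilon)^{\ell - 1}\right),
\]
where the last equality uses that $1/(2\epsilon) \ge 5$, so the geometric series is dominated by its last term. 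Plugging this into the previous display gives exactly the claimed bound.

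There is essentially no obstacle here: once one notices that $\chU$ partitions $V$, the corollary is just Lemma \ref{lm:stretch} with $i = \ell$ combined with the already-established geometric bound on $\sum_{j=1}^\ell R_j \cdot 2^{\ell - j}$. The only minor care needed is the standing assumption $\epsilon \le 1/10$ used both to control $R_j$ via Lemma \ref{lm:Ri} and to sum the geometric series.
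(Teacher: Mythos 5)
Your proposal is correct and matches the paper's own derivation: the paper likewise observes that $\chU$ partitions $V$, so the hypothesis of Lemma \ref{lm:stretch} holds with $i=\ell$ for every pair, and then bounds $\sum_{j=1}^\ell R_j\cdot 2^{\ell-j}=O((1/\epsilon)^{\ell-1})$ via the same geometric-series computation using $R_j\le c(1/\epsilon)^{j-1}$ and $\epsilon\le 1/10$. No issues.
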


Recall that the spanner $H$ contains, whp, $|H| = O(n^{1+1/\kappa} \cdot \log n \cdot (1/\epsilon)^{\log (\kappa\rho)} + n \cdot \log n \cdot (1/\epsilon)^{\log (\kappa\rho) + 1/\rho})$ edges, and the expected running time required to construct it is $O(|E| \cdot n^\rho /\rho )$.
Recall also that $\ell = i_1 + 1 \le \log (\kappa\rho)  + 1/\rho +1$.
Set now $\epsilon' = 16c \cdot \ell \cdot \epsilon$. We obtain stretch $\left(1+\epsilon',O\left({{\log \kappa + 1/\rho} \over {\epsilon'}}\right)^{\log \kappa + 1/\rho}\right)$. The condition $\epsilon < 1/10$ translates now to $\epsilon' \le 1.6 c(\log (\kappa \rho)+ 1/\rho)$. We will replace it by a simpler stronger condition $\epsilon \le 1$.

\begin{corollary}
\label{cor:sp}
For any parameters $0 < \epsilon \le 1$,  $\kappa\ge 2$, and $\rho >0$,  and any $n$-vertex unweighted graph $G = (V,E)$, our algorithm computes a $(1+\epsilon,\beta)$-spanner with expected number of edges
$O\left({{\log \kappa + 1/\rho} \over \epsilon}\right)^{\log \kappa}\cdot n^{1+1/\kappa}   $, in expected time $O(|E|\cdot n^\rho/\rho)$, where
$$\beta = \left({{O(\log \kappa + 1/\rho)} \over \epsilon}\right)^{\log \kappa + 1/\rho}~.$$
\end{corollary}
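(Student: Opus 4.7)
The plan is to derive Corollary~\ref{cor:sp} by a direct rescaling of $\epsilon$ in Corollary~\ref{cor:stretch}, together with the edge-count and running-time bounds already accumulated during the analysis of the three stages. Specifically, Corollary~\ref{cor:stretch} delivers
$$d_H(u,v) \le (1 + 16c\ell\epsilon)\, d_G(u,v) + O\!\left((1/\epsilon)^{\ell-1}\right)$$
for every pair $u,v \in V$, with $c=2$ and $\ell \le \log(\kappa\rho) + 1/\rho + 1$, while the stage-by-stage accounting showed that the dominant (interconnection) contribution to the expected edge count is $O(n^{1+1/\kappa}(1/\epsilon)^{\log(\kappa\rho)})$, and that the expected running time is $O(|E|\cdot n^\rho/\rho)$, independent of $\epsilon$.

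First I would introduce $\epsilon' = 16c\ell\epsilon$, so $1/\epsilon = 16c\ell/\epsilon'$. Since $\ell = O(\log\kappa + 1/\rho)$, the stretch bound becomes
$$d_H(u,v) \le (1+\epsilon')\, d_G(u,v) + \left(\frac{O(\log\kappa + 1/\rho)}{\epsilon'}\right)^{\log\kappa + 1/\rho},$$
with the base of the exponent absorbing the $16c\ell$ factor and the exponent $\ell-1$ weakened to the upper bound $\log\kappa + 1/\rho$ at the cost of a hidden constant. Substituting the same relation into the expected edge bound yields
$$O\!\left(\frac{\log\kappa + 1/\rho}{\epsilon'}\right)^{\log\kappa}\!\cdot n^{1+1/\kappa},$$
where I use $\log(\kappa\rho) \le \log\kappa$, which is safe because the stage-2 analysis already restricts to $\rho < 1/2$, so $\kappa\rho < \kappa$.

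Finally I would verify that the standing hypothesis $\epsilon \le 1/10$ of Lemma~\ref{lm:stretch} is compatible with the clean hypothesis $\epsilon' \le 1$ stated in the corollary: the rescaling gives $\epsilon = \epsilon'/(16c\ell) \le 1/(32\ell) \le 1/32 < 1/10$, so the preceding analysis applies verbatim. The running time is unchanged by the rescaling since it depends only on $\rho$. There is no substantive obstacle at this step; the mathematical content of the corollary resides in Lemma~\ref{lm:stretch} and in the expected-edge accounting of the interconnection steps of stages~1 and~2, and the remaining task is purely a renormalization that presents the stretch/size tradeoff in a form directly comparable with prior constructions such as those of \cite{EP04,P10}.
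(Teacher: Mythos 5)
Your proposal is correct and follows essentially the same route as the paper: the paper likewise sets $\epsilon' = 16c\cdot\ell\cdot\epsilon$ in Corollary~\ref{cor:stretch}, notes $\ell \le \log(\kappa\rho)+1/\rho+1$, substitutes into the accumulated expected edge count $O(n^{1+1/\kappa}(1/\epsilon)^{\log(\kappa\rho)})$ and the running-time bound, and replaces the internal condition $\epsilon<1/10$ by the simpler hypothesis $\epsilon'\le 1$. Your explicit check that $\epsilon'\le 1$ forces $\epsilon\le 1/(32\ell)<1/10$ is exactly the verification the paper leaves implicit.
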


A particularly useful setting of parameters is $\rho = 1/\log \kappa$.
Then we get a spanner with expected $ O\left({{ \log \kappa} \over {\epsilon}}\right)^{\log\kappa}  \cdot n^{1+1/\kappa} $ edges,
in time $O(|E| \cdot n^{1 \over {\log\kappa}} \cdot \log \kappa )$, and
$\beta = O\left({{\log \kappa} \over \epsilon}\right)^{2\log \kappa}$.

We remark that it makes no sense to set $\rho < 1/\kappa$, as the resulting parameters will be strictly worse than when $\rho =1/\kappa$. Also our assumptions that $\rho>\log\log n/(2\log n)$ and $16c \cdot \ell /\epsilon<n^\rho/2$ are justified, as otherwise we get $\beta\ge n$, so a trivial spanner will do.

\subsection{An Improved Variant of the Algorithm}
\label{sec:sp_impr}

In this section we show that the leading coefficient $O((\log \kappa + 1/\rho)/\epsilon)^{\log \kappa}$ of $n^{1+1/\kappa}$ in the size of the spanner can be almost completely eliminated at essentially no price. We also devise  here  yet sparser constructions of emulators.



For $i = 0,1,\ldots,\ell$, denote by $N_i = |\chP_i|$ the expected number of clusters which take part in phase $i$.
Recall also that the interconnection step of the $i$th phase contributes $N_i \cdot \deg_i  \cdot \left({{c'\ell}  \over \epsilon}\right)^i$ edges in expectation, where $\ell$ is the total number of steps, and $c'$ is a universal constant. Note that the contribution of the interconnection step dominates the contribution of the superclustering step in the current variant of the algorithm, and it will still be the case after the modification that we will now introduce.
Hence we will now focus on decreasing the number of edges contributed by the interconnection steps.

We keep the structure of the algorithm intact, and have the values of distance thresholds $\delta_i$ unchanged. The only change is in the degree sequence $\deg_0,\deg_1,\ldots$ of degree parameters used in phases $0,1,\ldots$, respectively.
Next, we describe our new setting of these parameters for stage 1 of the algorithm (i.e., phases $i$, $1 \le i \le i_0$). In the case that $\kappa\ge 16$ let $a=\log\log\kappa$, otherwise, when $\kappa<16$, let $a=2$. Define $i_0=\min\{\lfloor\log (a\kappa \rho) \rfloor,\lfloor\kappa \rho \rfloor\}$, and for $i=0,1,\ldots,i_0$ let $\deg_i=n^{(2^i-1)/(a\kappa)+1/\kappa}$. We now have that for $i\le i_0+1$,
\[
N_i=n\prod_{j=0}^{i-1}1/\deg_j=n^{1-\frac{2^i-1-i}{a\kappa}-\frac{i}{\kappa}},
\]
and in particular, when $i_0=\lfloor\log (a\kappa \rho) \rfloor$ we have $N_{i_0+1}\le n^{1-\frac{a\kappa\rho-1-(i_0+1)}{a\kappa}-\frac{i_0+1}{\kappa}}\le n^{1-\rho}$ (since $a\ge 2$ and $i_0 \ge 1$). Whenever $i_0=\lfloor\kappa \rho \rfloor$ we also have $N_{i_0+1}\le n^{1-\frac{i_0+1}{\kappa}}\le n^{1-\rho}$. Additionally,
we always have
\[
N_i\cdot\deg_i = n^{1 - {{2^i - 1 - i} \over {a \kappa}} - { i \over \kappa}} \cdot n^{{{2^i - 1} \over {a \kappa}} + {1 \over \kappa}} =  n^{1+\frac{i}{a\kappa}-\frac{i-1}{\kappa}}~.
\]

We restrict ourselves to the case that
\begin{equation}\label{eq:rree}
\frac{c'\ell}{\epsilon}\le n^{\frac{1}{2\kappa}}/2~,
\end{equation}
which holds whenever $\kappa\le \frac{c_0\cdot\log n}{\log(\ell/\epsilon)}$, for a sufficiently small constant $c_0$. Now the expected number of edges inserted at phase $i\le i_0$ is at most
\begin{equation}\label{eq:rree1}
N_i\cdot\deg_i \cdot\left({{c'\ell}  \over \epsilon}\right)^i\le
n^{1 + {i \over {2\kappa}} - {{i-1} \over \kappa}} \cdot \left({{n^{1/(2\kappa)}} \over 2} \right)^i
= n^{1+\frac{1}{\kappa}}/2^i~.
\end{equation}
Thus the total expected number of edges inserted in the first stage is $O(n^{1+1/\kappa})$. The second stage proceeds by setting $\deg_{i_0+1}=n^{\rho/2}$, and in all subsequent phases $i_0 + i$, with $i=2,3,\ldots,i_1- i_0$, we have $\deg_i=n^\rho$ as before.
The "price" for reducing the degree in the first phase of stage two is that the number of phases $i_1$ may increase by an additive 1. It follows that $N_{i_0+1}\cdot\deg_{i_0+1}\le n^{1-\rho/2}$. For $i\ge 2$, at phase $i_0+i$ we have $N_{i_0+i}\le n^{1-3\rho/2-(i-2)\rho}$. We set $i_1=\lfloor 1/\rho\rfloor$, so that $N_{i_1+1}\le n^{\rho/2}$, and whp we have that $N_{i_1+1}\le 2n^{\rho/2}$. We calculate
\[
N_{i_0+i}\cdot\deg_{i_0+i}\le n^{1-\rho/2-(i-2)\rho}~.
\]

Note that $i_0/(2\kappa)\le\rho/2$, which holds since $i_0\le\lfloor\kappa \rho \rfloor$. Hence the condition \eqref{eq:rree} implies that $(c'\ell/\epsilon)^{i_0}\le n^{\frac{i_0}{2\kappa}}\le n^{\rho/2}$. The total expected number of edges  inserted at phase $i_0+1$ is at most
\[
N_{i_0+1}\cdot\deg_{i_0+1}\cdot \left({{c'\ell}  \over \epsilon}\right)^{i_0+1}\le n^{1-\rho/2}\cdot n^{\rho/2}\cdot{{c'\ell}  \over \epsilon}\le n^{1+1/\kappa}~.
\]
The expected contribution of phase $i_0+i$ for $i\ge 2$ is at most
\begin{eqnarray}\label{eq:rree2}
N_{i_0+i}\cdot\deg_{i_0+i}\cdot \left({{c'\ell}  \over \epsilon}\right)^{i_0+i}\le n^{1-\rho/2-(i-2)\rho}\cdot n^{\rho/2}\cdot n^{i/(2\kappa)}/2^i\le n^{1+1/\kappa}/2^i~,
\end{eqnarray}
where the last inequality uses that $\rho\ge 1/\kappa$ (which we may assume w.l.o.g). This implies that the expected number of edges in all these $\lfloor 1/\rho\rfloor$ phases is $O(n^{1+1/\kappa})$.

The upper bound on $\kappa$ under which this analysis was carried out is $\frac{c_0\cdot\log n}{\log(\ell/\epsilon)}\ge \frac{\Omega(\log n)}{\log(1/\epsilon)+\log(1/\rho)+\log^{(3)}n}$.\footnote{We denote $\log^{(k)}n$ as the iterated logarithm function, e.g. $\log^{(3)}n=\log\log\log n$.}  We summarize this discussion with the following theorem.

\begin{theorem}
\label{thm:sp}
For any unweighted graph $G = (V,E)$ with $n$ vertices , $ 0 < \epsilon < 1/10$, $2\le \kappa  \le\frac{c\cdot\log n}{\log(1/\epsilon)+\log(1/\rho)+\log^{(3)}n}$ for a constant $c$, and $1/\kappa\le\rho<1/2$, our algorithm computes a $(1 + \epsilon,\beta)$-spanner with
$\beta = O({{1} \over {\epsilon}} (\log \kappa  +1/\rho))^{\log \kappa + 1/\rho+\max\{1,\log^{(3)} \kappa\}}$
and expected number of edges $O(n^{1+1/\kappa})$. The expected running time is  $O(|E| \cdot n^\rho/\rho)$.
\end{theorem}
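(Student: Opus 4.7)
The plan is to assemble the result from three components already laid out in the preceding discussion: (i) the number of edges, (ii) the running time, and (iii) the stretch. Since only the degree sequence $\{\deg_i\}$ has been changed from the basic variant (the distance thresholds $\delta_i$ and the radii recursion $R_{i+1}=\delta_i + R_i$ are identical), I can reuse Lemma \ref{lm:stretch} and Corollary \ref{cor:stretch} verbatim; the only quantity that needs reinterpretation is the total number of phases $\ell$.

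First I would bound the expected number of edges. The superclustering steps contribute at most $O(n\cdot \ell)$ edges in total, which is absorbed. For the interconnection steps I would separately sum the contributions from stage 1 and stage 2. Under hypothesis \eqref{eq:rree}, which follows from the assumed upper bound on $\kappa$, inequality \eqref{eq:rree1} gives phase $i$ of stage 1 an expected contribution at most $n^{1+1/\kappa}/2^i$, summing to $O(n^{1+1/\kappa})$. For stage 2, the special phase $i_0+1$ uses the reduced degree $n^{\rho/2}$ (giving the bound $n^{1+1/\kappa}$ as shown), and inequality \eqref{eq:rree2} shows that the remaining phases $i_0+i$ with $i\ge 2$ contribute $n^{1+1/\kappa}/2^i$ each; again a geometric series summing to $O(n^{1+1/\kappa})$. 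The concluding phase contributes $O(N_{i_1+1}^2 \cdot (1/\epsilon)^\ell) = O(n^\rho)$ with high probability, which is sublinear. Adding everything yields the claimed $O(n^{1+1/\kappa})$ expected bound.

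For the running time, note that the interconnection step of phase $i$ takes expected time $O(|E|\cdot \deg_i)$ by Lemma \ref{lm:explorations}, and this dominates the superclustering step. In stage 1 the degrees form an increasing geometric sequence topped by $\deg_{i_0}\le n^\rho$, giving total expected cost $O(|E|\cdot n^\rho)$. Stage 2 has $O(1/\rho)$ phases, each with degree at most $n^\rho$, for a total of $O(|E|\cdot n^\rho/\rho)$. The concluding phase costs $O(|E|\cdot n^\rho)$. Summing gives $O(|E|\cdot n^\rho/\rho)$ as claimed.

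Finally, for the stretch, the cluster-radii analysis and Lemma \ref{lm:stretch_aux}–Lemma \ref{lm:stretch} depended only on the $\delta_i$'s and the radii recursion, both of which are unchanged. Hence Corollary \ref{cor:stretch} gives $d_H(u,v)\le (1+16c\cdot \ell\cdot \epsilon)d_G(u,v) + O((1/\epsilon)^{\ell-1})$ for the new value of $\ell$. Rescaling $\epsilon' = 16c\cdot \ell\cdot \epsilon$ yields a $(1+\epsilon',\beta)$-spanner with $\beta = O(\ell/\epsilon')^{\ell-1}$. The new total number of phases is $\ell = i_0 + \lfloor 1/\rho\rfloor + 1$, and with the choice $i_0 = \min\{\lfloor\log(a\kappa\rho)\rfloor,\lfloor\kappa\rho\rfloor\}$ and $a = \max\{2,\log\log\kappa\}$, we have $i_0 \le \log\kappa + \max\{1,\log^{(3)}\kappa\}+O(1)$, so $\ell \le \log\kappa + 1/\rho + \max\{1,\log^{(3)}\kappa\} + O(1)$, which yields exactly the exponent in the stated $\beta$.

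The main subtlety I would expect to face is verifying that the inequality \eqref{eq:rree} is indeed implied by the hypothesis on $\kappa$: one needs $\log(\ell/\epsilon)\le \Theta(\log n / \kappa)$ after substituting $\ell = O(\log\kappa + 1/\rho + \log^{(3)}\kappa)$, which matches the denominator $\log(1/\epsilon)+\log(1/\rho)+\log^{(3)}n$ up to the constant $c$ in the hypothesis. The other check is that the stage-2 bound \eqref{eq:rree2} really uses $\rho\ge 1/\kappa$ and benefits from the initial degree drop to $n^{\rho/2}$ at phase $i_0+1$, which is what absorbs the accumulated $(c'\ell/\epsilon)^{i_0}$ factor without inflating the per-phase edge count; this is the delicate balancing that distinguishes the improved variant from the basic one.
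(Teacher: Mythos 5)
Your proposal is correct and follows essentially the same route as the paper: Theorem~\ref{thm:sp} is proved there exactly by assembling the edge bounds \eqref{eq:rree1}--\eqref{eq:rree2} under condition \eqref{eq:rree}, carrying over the running-time and stretch analyses (Lemmas~\ref{lm:explorations}, \ref{lm:stretch_aux}, \ref{lm:stretch} and Corollary~\ref{cor:stretch}) unchanged since only the degree sequence was modified, and translating the new phase count $\ell = i_0 + 1/\rho + O(1)$ with $i_0 \le \log\kappa + \max\{1,\log^{(3)}\kappa\}$ into the stated exponent of $\beta$. Your two flagged subtleties (that \eqref{eq:rree} follows from the hypothesis on $\kappa$, and that the degree drop to $n^{\rho/2}$ at phase $i_0+1$ absorbs the accumulated $(c'\ell/\epsilon)^{i_0}$ factor) are precisely the points the paper's own derivation turns on.
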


Note that the sparsest this spanner can be is $O(n \log\log n)$, and at this level of sparsity its $\beta = O(\log\log n+1/\rho)^{\log\log n+1/\rho}$.
(To get this bound we set $\epsilon > 0$ to be an arbitrary small constant, and $\kappa = {{c_0\log n} \over {\log^{(3)} n}}$.)

This is sparser than the state-of-the-art efficiently-computable sparsest $(1+\epsilon,\beta)$-spanner
due to \cite{P10}, which has $O(n (\log\log n)^\phi)$ edges, where $\phi = {{1+ \sqrt{5} } \over 2}$ is the golden ratio.
Moreover, this spanner has a smaller $\beta$ than the one of \cite{P10} in its sparsest level. Denoting the latter as $\beta_{Pet}$, it holds that
  $\beta_{Pet} \approx O(\log \kappa)^{1.44 \log \kappa  +1/\rho}$, i.e.,  for every setting of the  time parameter $\rho$, the exponent of our $\beta$ is smaller than that of  $\beta_{Pet}$.

\subsubsection{Sparse Emulator}\label{sec:sparse-emul}

Finally, we note that if one allows an {\em emulator} instead of spanner, then we can decrease the size all the way to $O(n)$ when $\kappa = \log n$.
To achieve this, we insert single "virtual" edges instead of every path (of length $(c'\ell/\epsilon)^i$) between every pair of cluster centers that we choose to interconnect on phase $i$, for every $i$. Analogously, in the superclustering step we also form a supercluster around a center $r_C$ of a cluster $C$ by adding virtual edges $(r_C,r_{C'})$ for each  cluster $C'$ associated with $C$. The weight of each such edge is defined by $\omega(r_C,r_{C'}) = d_G(r_C,r_{C'})$. The condition  \eqref{eq:rree} was required to obtain converging sequences at \eqref{eq:rree1} and
\eqref{eq:rree2}, but without the $(c'\ell/\epsilon)^i$ terms, the number of edges already forms a converging sequence at each stage.

Moreover, one can also use for emulators a shorter degree sequence than the one we used for spanners, and as a result to save the additive term of $\log^{(3)} \kappa$ in the exponent of $\beta$. Specifically, one can  set $\deg_i = n^{{2^i} \over \kappa}/2^{2^i-1}$, for each $i = 0,1,\ldots,i_0=\lfloor\log(\kappa\rho)\rfloor$.
As a result we get $N_i=n\cdot\prod_{j=0}^{i-1}1/\deg_j=n^{1-\frac{2^i-1}{\kappa}}\cdot 2^{2^i-1-i}$, and thus the expected number of edges inserted at phase $i\le i_0$ is at most
\[
N_i\cdot\deg_i=n^{1+1/\kappa}/2^i~.
\]
As before, when the first stage concludes, we run one phase with $\deg_{i_0+1} = n^{\rho/2}$, and all subsequent phases with $\deg_i = n^\rho$. To bound the expected number of edges added at phase $i_0+1$ we need to note that $2^{2^{i_0+1}}\le 2^{2\kappa\rho}\le n^{\rho/2}$ as long as $\kappa\le (\log n)/4$. (The latter can be  assumed without affecting any of the parameters by more than a constant factor). It follows that $N_{i_0+1}\cdot\deg_{i_0+1}= n^{1-\frac{2^{i_0+1}-1}{\kappa}}\cdot 2^{2^{i_0+1}-1-(i_0+1)}\cdot n^{\rho/2}\le n^{1+1/\kappa}$. In the remaining phases $N_{i_0+i}\le n^{1+1/\kappa-(i-1)\rho}$ for $i\ge 2$, and the contribution of these phases is a converging sequence. We conclude the discussion with the following theorem.

\begin{theorem}
\label{thm:emul}
For any unweighted graph $G = (V,E)$ with $n$ vertices, and for any parameters $0 < \epsilon \le 1$, $2\le \kappa \le (\log n)/4$, $1/\kappa\le\rho<1/2$, our algorithm computes a $(1 + \epsilon,\beta)$-emulator with
$\beta = O\left({{\log \kappa + 1/\rho} \over {\epsilon}}\right)^{\log \kappa  + 1/\rho} $
and expected number of edges $O(n^{1+1/\kappa})$. The expected running time is $O(|E| \cdot n^\rho/\rho )$.
\end{theorem}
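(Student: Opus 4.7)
The plan is to adapt the spanner construction of Sections \ref{sec:sp_basic}--\ref{sec:sp_impr} into an emulator by replacing every actual path inserted into $H$ with a single weighted ``virtual'' edge. Specifically, whenever the interconnection step of phase $i$ would add a shortest path $\pi(r_C,r_{C'})$ between two cluster centers, I would instead add a single edge $(r_C,r_{C'})$ of weight $d_G(r_C,r_{C'})$; analogously, during the superclustering step, I would form a supercluster around $r_C$ by adding one virtual edge $(r_C,r_{C'})$ of weight $d_G(r_C,r_{C'})$ for each associated cluster $C'$. The resulting $H$ is no longer a subgraph of $G$, but by construction every inserted weighted edge realizes the true $G$-distance between its endpoints, so all distances $d_H(\cdot,\cdot)$ are at least $d_G(\cdot,\cdot)$ as required for an emulator.

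Next I would redo the edge-count analysis with the shortened degree schedule $\deg_i = n^{2^i/\kappa}/2^{2^i-1}$ for $i=0,1,\ldots,i_0=\lfloor\log(\kappa\rho)\rfloor$, $\deg_{i_0+1}=n^{\rho/2}$, and $\deg_i=n^\rho$ for $i>i_0+1$. The key simplification is that, since each interconnected pair contributes a single edge rather than a path of length up to $(c'\ell/\epsilon)^i$, the expected number of edges inserted in phase $i$ is only $N_i\cdot\deg_i$ rather than $N_i\cdot\deg_i\cdot(c'\ell/\epsilon)^i$. A routine computation yields $N_i = n^{1-(2^i-1)/\kappa}\cdot 2^{2^i-1-i}$ and hence $N_i\cdot\deg_i = n^{1+1/\kappa}/2^i$ for $i\le i_0$, giving a geometric sum bounded by $O(n^{1+1/\kappa})$ in stage 1. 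For the first phase of stage 2 I would use $2^{2^{i_0+1}}\le 2^{2\kappa\rho}\le n^{\rho/2}$ (which is where the hypothesis $\kappa\le (\log n)/4$ is used) to conclude $N_{i_0+1}\cdot\deg_{i_0+1}\le n^{1+1/\kappa}$. For subsequent phases $N_{i_0+i}\le n^{1+1/\kappa-(i-1)\rho}$, and their contributions form another converging geometric series.

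For the stretch analysis I would observe that the proofs of Lemma \ref{lm:stretch_aux}, Lemma \ref{lm:stretch}, and Corollary \ref{cor:stretch} rely only on two structural properties of the construction: (i) that for every pair of clusters $C,C'\in\chU_j$ at $G$-distance at most $\tfrac{1}{2}(1/\epsilon)^j$ from one another, their centers are connected in $H$ by a path of length $d_G(r_C,r_{C'})$; and (ii) that the center of any cluster $C\in\chU_j$ can reach any member of $C$ via an $H$-path of length at most $\Rad(\chU_j)\le R_j$. Property (i) holds because the virtual edge added by the interconnection step has weight $d_G(r_C,r_{C'})$. Property (ii) holds because the recursively inserted superclustering edges have weights equal to the true distances along the corresponding BFS paths, and telescoping these weights along the supercluster tree yields the same radius bound as in the spanner. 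The entire inductive stretch argument then carries over verbatim, and substituting $\ell\le\log\kappa+1/\rho+1$ into Corollary \ref{cor:stretch} followed by the rescaling $\epsilon' = 16c\cdot\ell\cdot\epsilon$ gives the claimed $\beta = O((\log\kappa+1/\rho)/\epsilon)^{\log\kappa+1/\rho}$. The running time is inherited from the spanner analysis, since storing virtual edges in place of paths can only decrease the work per phase.

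The main obstacle I anticipate is not conceptual but rather a careful bookkeeping of the transition between stages: specifically, verifying that the shortened degree schedule $\deg_i = n^{2^i/\kappa}/2^{2^i-1}$ keeps $N_{i_0+1}\cdot\deg_{i_0+1}$ within $O(n^{1+1/\kappa})$ despite the doubling factor $2^{2^{i_0+1}}$, which is precisely what forces the assumption $\kappa\le (\log n)/4$. Once this boundary case is verified, the remaining estimates are straightforward geometric summations, and the stretch analysis requires no new ideas beyond the observation that virtual-edge weights equal the underlying $G$-distances, so the per-phase ``radius'' accounting behaves identically to the spanner case.
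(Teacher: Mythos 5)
Your proposal is correct and follows essentially the same route as the paper's own argument in Section \ref{sec:sparse-emul}: the same replacement of paths by weighted virtual edges realizing $G$-distances, the same shortened degree sequence $\deg_i = n^{2^i/\kappa}/2^{2^i-1}$ yielding $N_i\cdot\deg_i = n^{1+1/\kappa}/2^i$, the same use of $2^{2^{i_0+1}}\le n^{\rho/2}$ (forcing $\kappa\le(\log n)/4$) at the stage transition, and the same observation that the stretch analysis of Lemmas \ref{lm:stretch_aux}--\ref{lm:stretch} carries over since virtual-edge weights equal the underlying distances. No substantive differences.
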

In particular, the algorithm produces a {\em linear-size} $(1+\epsilon,\beta)$-emulator with $\beta = O\left({{\log\log n+1/\rho} \over \epsilon}\right)^{\log\log n + 1/\rho}$ within this running time.

\subsection{Distributed and Streaming Implementations}
\label{sec:distr_sp}

In this section we provide  efficient distributed and streaming algorithms for constructing sparse $(1+\epsilon,\beta)$-spanners. The distributed algorithm works in the CONGEST model.

To implement phase 0, each vertex selects  itself into $S_0$ with probability $n^{-1/\kappa}$, i.a.r.. In distributed model vertices of $S_0$ send messages to their neighbors.
Each vertex $u$ that receives at least one message, picks an origin $v \in S_0$ of one of these messages, and joins the cluster centered at $v$. It also sends negative acknowledgements to all its other neighbors from $S_0$. All unclustered vertices $z$ insert all edges incident on them into the spanner.

It is also straightforward to implement this in $O(1)$ passes in the streaming model.

Each consecutive  phase is now also implemented in a straightforward manner, i.e., BFS explorations to depth $\delta_i$ in the superclustering steps are implemented via broadcasts and convergecasts in the distributed model, and by $\delta_i$ passes in the streaming model. In the interconnection steps, however, we need to implement many BFS explorations which may explore the same vertices.
However, by Lemma \ref{lm:property} every vertex is explored on phase $i$ by up to $O(deg_i \cdot \log n)$ explorations whp, it follows that in distributed setting this step requires, whp,  $O(\deg_i \cdot \log n \cdot \delta_i)$ time. Also note that $\sum_i\delta_i=O(\beta)$.

In the streaming model we have two possible tradeoffs. The first uses expected  $O(n \cdot \deg_i)$ space to maintain for each vertex $v$ the BFS parents and distance estimates, and requires just $\delta_i$ passes. To see that such space suffices, recall that the expected number of explorations which visit any vertex $v$ is at most $\deg_i$, by Lemma \ref{lm:explorations}. Whp, the space is $O(n \cdot \deg_i \cdot \log n) = O(n^{1+\rho} \cdot \log n)$.

The second option in the streaming algorithm is to divide the interconnection step of phase $i$ to $c \cdot \deg_i \cdot \log n$ subphases, for a sufficiently large constant $c$.
On each subphase each exploration source, which was not sampled on previous subphases, samples itself i.a.r. with probability $1/\deg_i$. Then the sampled exploration sources conduct BFS explorations to depth $\delta_i/2$.
For every vertex $v$, the expected number of explorations that traverse it on each subphase is $O(\log n)$. Moreover, by Chernoff's inequality, whp, no vertex $v$ is ever traversed by more than $O(\log n)$ explorations. (Here we take a union-bound on all vertices, all phases, and all subphases. The bad events are that some vertex is traversed by more than twice its expectation explorations on some subphase.)
Hence each subphase requires $O(\delta_i)$ passes, and whp, the space requirement is $O(n \log n)$, plus the size of the spanner.
After $c \cdot \deg_i \cdot  \log n$ subphases, whp, each exploration source is sampled on at least one of the subphases, and so the algorithm performs all the required explorations.

Finally, the stretch analysis of distributed and streaming variants of our algorithm remains the same as in the centralized case.

Hence we obtain the following distributed and streaming analogues of Theorem \ref{thm:sp}.

\begin{theorem}
\label{thm:distr_sp}
For any unweighted graph $G = (V,E)$ with $n$ vertices , $ 0 < \epsilon < 1/10$, $2\le \kappa  \le\frac{c\cdot\log n}{\log(1/\epsilon)+\log(1/\rho)+\log^{(3)}n}$ for a constant $c$, and $1/\kappa\le\rho<1/2$, our distributed  algorithm (CONGEST model) computes a $(1 + \epsilon,\beta)$-spanner with
$\beta = O({{1} \over {\epsilon}} (\log \kappa  +1/\rho))^{\log \kappa + 1/\rho+\max\{1,\log^{(3)} \kappa\}}$
and expected number of edges $O(n^{1+1/\kappa})$. The required number of rounds is whp $O(n^\rho/\rho\cdot \beta\cdot \log n)$.

Our streaming algorithm computes a spanner with the above properties, in either: $O(n \log n + n^{1+1/\kappa})$ expected space and $O(n^\rho/\rho\cdot\log n \cdot \beta)$ passes, or using $O(n^{1+\rho}\cdot \log n)$ space, whp, and $O(\beta)$ passes.
\end{theorem}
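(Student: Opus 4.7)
My plan is to simulate the centralized construction of Theorem~\ref{thm:sp} in each of the target models, so that the set of spanner edges (and hence the $(1+\epsilon,\beta)$ stretch and the $O(n^{1+1/\kappa})$ size bound) is identical to the centralized output. The overhead in each model is then controlled purely by the communication/pass cost of simulating two primitive steps: (a) multi-source BFS to depth $\delta_i$ from the sampled centers $S_i$ (superclustering), and (b) multi-source BFS to depth $\delta_i/2$ from all unclustered centers in $\chU_i$ (interconnection).

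Phase $0$ is implemented locally: each vertex samples its membership in $S_0$ with probability $n^{-1/\kappa}$ and announces this to its neighbors in one CONGEST round (respectively, in $O(1)$ streaming passes), after which every unsampled vertex with a sampled neighbor joins one such neighbor's cluster and every vertex remaining in $\chU_0$ inserts its incident edges into $H$. For the superclustering step of phase $i \ge 1$, each vertex keeps only the first broadcast it receives in the BFS wave and propagates it; edge congestion is therefore $O(1)$ per round in CONGEST and a single pass suffices per BFS layer, so the step costs $O(\delta_i)$ rounds or passes, and the resulting forest $F_i$ is added to $H$ locally.

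The main obstacle is the interconnection step, where many BFS explorations may contend for the same edges. In CONGEST, Lemma~\ref{lm:property} implies that, with probability at least $1 - n^{-(c-1)}$, every vertex is traversed by at most $O(\deg_i \log n)$ explorations on phase $i$, so the BFS waves can be pipelined to finish in $O(\deg_i \log n \cdot \delta_i)$ rounds. Summing over the $O(\log \kappa + 1/\rho)$ phases, using $\deg_i \le n^\rho$ throughout and $\sum_i \delta_i = O(\beta)$ (the $\delta_i$ form a geometric series dominated by $\delta_\ell$), yields the claimed $O((n^\rho/\rho)\cdot \beta \cdot \log n)$ round bound whp.

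For the streaming variant we give two tradeoffs using the same per-phase structure. In the low-pass variant we store, for each vertex $v$, the BFS parents and tentative distances of every exploration that visits it during phase $i$; by Lemma~\ref{lm:explorations} the expected size of this list is $\deg_i$, and by a Chernoff plus union-bound argument it is $O(\deg_i \log n)$ whp, giving total auxiliary space $O(n^{1+\rho} \log n)$ and exactly $\delta_i$ passes per phase, for $O(\beta)$ passes overall. In the low-space variant we split the interconnection step of phase $i$ into $O(\deg_i \log n)$ subphases, each of which independently subsamples the surviving sources at rate $1/\deg_i$; standard Chernoff estimates, union-bounded over vertices, phases and subphases, show that whp every vertex is visited by only $O(\log n)$ explorations within any single subphase, so $O(n \log n)$ auxiliary space suffices beyond the $O(n^{1+1/\kappa})$ spanner edges, at the cost of $O((n^\rho/\rho) \cdot \beta \cdot \log n)$ passes in total.
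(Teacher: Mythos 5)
Your proposal is correct and follows essentially the same route as the paper: simulate the centralized construction phase by phase, bound the congestion of the interconnection explorations via Lemma~\ref{lm:property} (whp $O(\deg_i\log n)$ explorations per vertex) together with $\sum_i\delta_i=O(\beta)$, and obtain the two streaming tradeoffs by either storing all exploration states or subsampling the sources into $O(\deg_i\log n)$ subphases. No gaps beyond those already present in the paper's own sketch.
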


The streaming algorithm described above can also be modified to provide a $(1+\epsilon,\beta)$-emulator as in Theorem \ref{thm:emul}, within the same pass and space complexities.

\section{Applications}
\label{sec:appls}

In this section we describe applications of our improved constructions of spanners and emulators to computing approximate shortest paths for a set $S \times V$ of vertex pairs, for a subset $S \subseteq V$ of designated sources.

\subsection{Centralized Setting}
\label{sec:appl_centr}

We start with the centralized setting. Here our input graph $G = (V,E)$ is unweighted, and we construct a $(1+\epsilon,\beta)$-emulator $H$ of $G$ with  $O(n^{1+1/\kappa})$ edges, in expected time
$O(|E| \cdot n^\rho/\rho)$, where
\begin{equation}
\label{eq:beta_centr}
\beta = O\left({{\log \kappa + 1/\rho} \over \epsilon}\right)^{\log \kappa + 1/\rho}~.
\end{equation}

Observe that all edge weights in $H$ are integers in the range $[1,\beta]$. We round all edge weights up to the closest power of $1+ \epsilon$. Let $H'$ be the resulting emulator.
Note that for any pair $u,v$ of vertices, we have
\begin{eqnarray*}
d_G(u,v) &\le& d_H(u,v) ~\le~ d_{H'}(u,v) ~\le (1+\epsilon)d_H(u,v)\\
&\le& (1+\epsilon)^2 d_G(u,v) + (1+\epsilon)\beta~.
\end{eqnarray*}
For a sufficiently small $\epsilon$, $(1+\epsilon)^2 \le 3\epsilon$, and we rescale $\epsilon' = 3\epsilon$. As a result the constant factor hidden by the $O$-notation in the basis of $\beta$'s exponent grows, but other than that $H'$ has all the properties of the emulator $H$. Also, it employs only
\begin{eqnarray*}
t = O\left({{\log \beta} \over \epsilon}\right) = O\left({{(\log 1/\epsilon + \log(\log \kappa + 1/\rho))\cdot (\log \kappa + 1/\rho)} \over \epsilon}\right)
\end{eqnarray*}
 different edge weights.
Hence a single-source shortest path computation in $H$ can be performed in $O(|H| + n \log t) = O(n^{1+1/\kappa} + n(\log (1/\epsilon) + \log (\log \kappa + 1/\rho)))$ time \cite{OMSW10}. (See also \cite{KMP11}, Section 5.)
Hence computing $S \times V$ $(1+\epsilon,\beta)$-approximate shortest distances requires $O(|E| \cdot n^\rho/\rho + |S|(n^{1+1/\kappa} + n(\log (1/\epsilon) + \log (\log \kappa + 1/\rho))))$ time.

\begin{theorem}
\label{thm:asd_unwtd}
For any $n$, and for any parameters $0 < \epsilon \le 1$, $2\le\kappa\le(\log n)/4$,
$1/\kappa\le\rho \le  1/2$, and any $n$-vertex unweighted graph $G = (V,E)$ with a set $S\subseteq V$, our algorithm computes
  $(1+\epsilon,\beta)$-approximate  $S \times V$ shortest distances in the centralized model in expected
$O(|E| \cdot n^\rho/\rho + |S|(n^{1+1/\kappa} + n(\log (1/\epsilon) + \log (\log \kappa + 1/\rho))))$ time,
where $\beta$ is given by (\ref{eq:beta_centr}).
\end{theorem}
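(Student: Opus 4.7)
The plan is to reduce the $S\times V$ approximate shortest-path problem on $G$ to running single-source shortest paths from each $s\in S$ on the emulator constructed by Theorem~\ref{thm:emul}, after mildly rounding its edge weights so that only few distinct weight values occur.

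First I would invoke Theorem~\ref{thm:emul} with the given parameters $\epsilon,\kappa,\rho$ to obtain, in expected $O(|E|\cdot n^\rho/\rho)$ time, a $(1+\epsilon,\beta)$-emulator $H$ with expected $O(n^{1+1/\kappa})$ edges, where $\beta$ is the quantity in~\eqref{eq:beta_centr}. Because $G$ is unweighted, every weight $\omega(r_C,r_{C'})=d_G(r_C,r_{C'})$ assigned by the emulator construction is an integer; moreover, only edges with $d_G \le \beta$ are ever inserted, so all weights lie in the integer interval $[1,\beta]$.

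Next I would round every edge weight up to the nearest power of $1+\epsilon$, obtaining $H'$. This increases each edge weight by a factor of at most $1+\epsilon$, so for every $u,v\in V$,
\[
d_G(u,v)\le d_H(u,v)\le d_{H'}(u,v)\le (1+\epsilon)d_H(u,v)\le (1+\epsilon)^2 d_G(u,v)+(1+\epsilon)\beta.
\]
Rescaling $\epsilon'=3\epsilon$ (and absorbing the resulting constant factor into the hidden constants in the base of $\beta$'s exponent), $H'$ is a $(1+\epsilon,\beta)$-emulator with the same sparsity as $H$, but whose edge weights take only $t=O(\log\beta/\epsilon)=O\!\left((\log(1/\epsilon)+\log(\log\kappa+1/\rho))(\log\kappa+1/\rho)/\epsilon\right)$ distinct values.

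Finally, I would compute $(1+\epsilon,\beta)$-approximate distances from each $s\in S$ by running the single-source shortest-path algorithm of Orlin et al.~\cite{OMSW10} on $H'$; because $H'$ has only $t$ distinct edge weights, each such computation takes $O(|H'|+n\log t)=O(n^{1+1/\kappa}+n(\log(1/\epsilon)+\log(\log\kappa+1/\rho)))$ time. Summing the emulator construction cost and $|S|$ single-source computations yields exactly the expected running time claimed in the theorem, with the distance estimate $d_{H'}(s,v)$ satisfying $d_G(s,v)\le d_{H'}(s,v)\le (1+\epsilon)d_G(s,v)+\beta$. No real obstacle arises: the only nontrivial ingredient is the observation that the emulator weights are small integers, which allows the $\log t$ savings in Dijkstra and makes the per-source cost drop below $n^{1+1/\kappa}$ up to lower-order terms.
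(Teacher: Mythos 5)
Your proposal matches the paper's own argument essentially step for step: build the emulator of Theorem~\ref{thm:emul}, observe the integer weights in $[1,\beta]$, round up to powers of $1+\epsilon$ and rescale $\epsilon$, then run the few-distinct-weights single-source shortest-path algorithm of \cite{OMSW10} from each source in $S$. The accounting of the running time and the final error bound are the same as in the paper, so the proof is correct as proposed.
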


If one is interested in actual paths rather than just in distances, then one can use our $(1+\epsilon,\beta)$-spanner with
\begin{equation}
\label{eq:beta_cent_asp}
\beta = O\left({{\log \kappa + 1/\rho} \over \epsilon}\right)^{\log \kappa + 1/\rho+\max\{1,\log^{(3)} \kappa\}}~,
\end{equation}
and  $O(n^{1+1/\kappa})$ edges, but restricting $\kappa\le\frac{O(\log n)}{\log(1/\epsilon)+\log(1/\rho)+\log^{(3)}n}$.
After computing the spanner $H$ with these properties, we conduct BFS explorations on $H$ originated at each vertex of $S$.
The overall running time becomes $O(|E| \cdot n^\rho /\rho+ |S|\cdot n^{1+1/\kappa})$.

\begin{corollary}
\label{cor:asp_unwtd}
For any $n$, and for any parameters $0 < \epsilon \le 1$, $\kappa\le\frac{O(\log n)}{\log(1/\epsilon)+\log(1/\rho)+\log^{(3)}n}$,
$1/\kappa\le\rho \le 1/2$,  and any $n$-vertex unweighted graph $G = (V,E)$, our algorithm computes
$(1+\epsilon,\beta)$-approximate $S \times V$ shortest {\em paths} in the centralized model in expected
 $O(|E| \cdot n^\rho /\rho+ |S|\cdot n^{1+1/\kappa})$ time, with $\beta$ given by (\ref{eq:beta_cent_asp}).
\end{corollary}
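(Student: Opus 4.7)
The plan is to reduce the approximate shortest paths problem to an approximate shortest paths computation on a sparse spanner, where the key advantage over the emulator-based approach of Theorem~\ref{thm:asd_unwtd} is that a spanner is an actual subgraph of $G$, so paths in the spanner are legitimate paths in $G$.

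First, I would invoke Theorem~\ref{thm:sp} with the given parameters $\epsilon,\kappa,\rho$, which fall within the allowed range $2 \le \kappa \le \frac{c \log n}{\log(1/\epsilon) + \log(1/\rho) + \log^{(3)} n}$. This produces a $(1+\epsilon,\beta)$-spanner $H = (V,E')$ of $G$ with $|E'| = O(n^{1+1/\kappa})$ edges and $\beta$ as in~(\ref{eq:beta_cent_asp}), in expected time $O(|E| \cdot n^\rho/\rho)$.

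Next, for each source $s \in S$, I would run a BFS in $H$ rooted at $s$. Since $H$ is unweighted and has $|E'| = O(n^{1+1/\kappa})$ edges, each BFS runs in $O(n + |E'|) = O(n^{1+1/\kappa})$ time (note $|E'| \ge n-1$ since $H$ must be connected to have finite stretch) and simultaneously outputs a BFS tree encoding an actual shortest path in $H$ from $s$ to every vertex $v \in V$. Because $H$ is a subgraph of $G$, this BFS tree path is a bona fide path in $G$ as well, of length exactly $d_H(s,v)$. Summing over all $s \in S$ yields the claimed $O(|E| \cdot n^\rho/\rho + |S| \cdot n^{1+1/\kappa})$ expected running time.

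For the stretch, I would argue that for every $s \in S$ and $v \in V$, the reported path has length $d_H(s,v)$, and by Theorem~\ref{thm:sp} we have the two-sided sandwich $d_G(s,v) \le d_H(s,v) \le (1+\epsilon) d_G(s,v) + \beta$, giving exactly a $(1+\epsilon,\beta)$-approximate shortest path. There is no real obstacle here: the only subtlety, compared to the emulator-based Theorem~\ref{thm:asd_unwtd}, is the need to work with the spanner (which is why we pay the more restrictive bound on $\kappa$ and the slightly larger $\beta$ of~(\ref{eq:beta_cent_asp}) instead of~(\ref{eq:beta_centr})); but in exchange we avoid the edge-weight rounding and the Dijkstra-with-few-weights machinery of \cite{OMSW10}, since unweighted BFS already suffices and produces explicit paths.
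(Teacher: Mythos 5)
Your proposal is correct and matches the paper's own argument exactly: construct the $(1+\epsilon,\beta)$-spanner of Theorem~\ref{thm:sp} in expected $O(|E|\cdot n^\rho/\rho)$ time, then run a BFS on the sparse subgraph $H$ from each source in $S$, paying $O(n^{1+1/\kappa})$ per source and inheriting the stretch guarantee since $H$ is a subgraph of $G$. Nothing is missing.
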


A useful setting of parameters is $\rho = {1 \over {\log \kappa}}$. Then the running time of our algorithms from Theorem \ref{thm:asd_unwtd} and Corollary \ref{cor:asp_unwtd}  become respectively $O(|E|\cdot n^{1/\log \kappa} \cdot \log \kappa + |S|(n^{1+1/\kappa} + n(\log 1/\epsilon + \log\log \kappa)))$
and $O(|E| \cdot n^{1/\log \kappa} \cdot \log \kappa + |S| \cdot n^{1+1/\kappa} )$ (we note that the former has smaller $\beta$ given by \eqref{eq:beta_centr}, while the latter has slightly larger $\beta$ given by \eqref{eq:beta_cent_asp}).

The algorithm of Corollary \ref{cor:asp_unwtd} always outperforms the algorithm which can be derived by using the spanner of \cite{P10}  within the same scheme.
Specifically, the running time of that algorithm is at least $\tO(|E| n^\rho) + O(|S| \cdot (n^{1+1/\kappa} + n ({{\log \kappa} \over \epsilon})^\phi))$,
and the additive error $\beta_{Pet}$ there is given by
$$\beta_{Pet} = O\left({{\log \kappa  + 1/\rho} \over \epsilon}\right)^{\log_\phi  \kappa + 1/\rho}~,$$
where $\phi = {{1+\sqrt{5}} \over 2}$ is the golden ratio.
So $\beta_{Pet}$ is typically polynomially larger than the additive error $\beta$ in our algorithm, e.g., for $\rho = 1/\log \kappa$ we have $\beta_{Pet} \approx \beta^{1.22}$.
(Setting $\rho$ to be smaller than $1/\log \kappa$ makes less sense, because then the additive errors $\beta$ and $\beta_{Pet}$ deteriorate. At any rate, as $\rho$ tends to 0, the two estimates approach each other.)
Also, in the bound of \cite{P10} there is a term of $|S|\cdot n \cdot ({{\log \kappa} \over \epsilon})^\phi))$, 
which does not occur in our construction.

\subsection{Streaming Setting}
\label{sec:str_unw}

In this section we show how efficient constructions of spanners and emulators for an unweighted graph $G$  in the streaming setting can be used for efficient computation of approximate shortest paths.

First, one can use $O(n^\rho /\rho\cdot \log n \cdot \beta)$ passes  and  expected space $O(n^{1+1/\kappa} + n\cdot \log n)$ to construct an $(1+\epsilon,\beta)$-emulator $H$ with $\beta = (O(\log \kappa+ 1/\rho)/\epsilon)^{\log \kappa +1/\rho}$. One can now compute $V \times V$ $(1+\epsilon,\beta)$-approximate shortest distances in $G$ by computing exact $V\times V$ shortest distances in $H$, using the same space and without additional passes. (Observe that we do not store the output, as its size is larger than the size of $H$.) In particular, one can use here space of $O(n \cdot \log n)$, and have $\beta = (O(\log\log n + 1/\rho)/\epsilon)^{\log\log n + 1/\rho}$.
It is  also possible to  set here $\rho = {1 \over {\log\log n}}$, and obtain $2^{O({{\log n} \over {\log\log n}})}$ passes,
$\beta = O({{\log\log n} \over \epsilon})^{2\log\log n}$.

Another option is to use space $O(n^{1+\rho} \cdot \log n)$, whp, and $O(\beta)$ passes for constructing the same emulator $H$.
Again given the emulator we can compute $V \times V$ approximate shortest distances in $G$ by computing shortest distances in $H$ offline.

\begin{corollary}
\label{cor:str_unw}
For any $n$ and any parameters $0 < \epsilon \le 1$, $2\le\kappa\le(\log n)/4$, $1/\kappa\le\rho\le 1/2$, and any $n$-vertex unweighted graph $G$, our streaming algorithm computes $(1+\epsilon,\beta)$-approximate shortest {\em distances} for $V\times V$ with $\beta$ given by (\ref{eq:beta_centr}). It uses in expectation either $O(n^\rho/\rho\cdot \log n \cdot \beta)$ passes and expected space $O( n^{1+1/\kappa} + n \cdot \log n)$ or $O(\beta)$ passes
and  space $O(n^{1+\rho} \cdot \log n)$, whp.
\end{corollary}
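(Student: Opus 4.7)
The plan is to invoke the streaming emulator construction implicit in \theoremref{thm:distr_sp} (together with the emulator modification described in \sectionref{sec:sparse-emul} and at the end of \sectionref{sec:distr_sp}), store the resulting $(1+\epsilon,\beta)$-emulator $H$ entirely in memory, and then compute $V \times V$ shortest distances in $H$ offline, without any further passes over the input stream. Since $H$ is a $(1+\epsilon,\beta)$-emulator of $G$, exact distances in $H$ are $(1+\epsilon,\beta)$-approximations to distances in $G$, so the correctness of the approximation is immediate from the definition of an emulator and the value of $\beta$ in \eqnref{eq:beta_centr}.

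For the first tradeoff, I would use the streaming variant that performs, at each phase $i$, one subphase consisting of $\delta_i/2$ passes while maintaining, for every vertex, its current BFS parent and distance estimate for each exploration that reaches it. By \lemmaref{lm:explorations} the expected number of explorations touching any fixed vertex at phase $i$ is at most $\deg_i$, so the expected per-phase space is $O(n \cdot \deg_i \cdot \log n)$, which is dominated by $O(n^{1+1/\kappa} + n \log n)$ across all phases (the $n^{1+1/\kappa}$ term also accommodates storing $H$ itself, whose expected size is $O(n^{1+1/\kappa})$ by the emulator analysis). The total number of passes is then $\sum_i \delta_i = O(\beta)$ per subphase, multiplied by the $O(n^\rho/\rho \cdot \log n)$ subphase factor coming from the fact that in the later (fixed-degree) phases we need $\deg_i = n^\rho$ many source-sampling rounds to guarantee whp that every source participates, yielding $O(n^\rho/\rho \cdot \log n \cdot \beta)$ passes.

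For the second tradeoff I would run all explorations of a phase in parallel, storing for each vertex the $O(\log n)$ explorations that traverse it whp. Here the space is governed by the worst-case bound from \lemmaref{lm:property}: whp every vertex is visited by $O(\deg_i \log n) = O(n^\rho \log n)$ explorations across all phases, giving total space $O(n^{1+\rho} \log n)$, while the total number of passes is just $\sum_i \delta_i = O(\beta)$. Offline APSP on $H$ adds no additional passes and fits within the claimed space since $|H| = O(n^{1+1/\kappa}) \le O(n^{1+\rho}\log n)$.

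The main obstacle in writing this up cleanly is the bookkeeping for the pass counts across the two stages of the algorithm: one has to verify that $\sum_i \delta_i = O(\beta)$ (which follows from the geometric growth of $\delta_i = (1/\epsilon)^i + 4R_i$ and the bound $R_i = O((1/\epsilon)^{i-1})$ from \lemmaref{lm:Ri}), and that the per-phase streaming cost telescopes into the claimed totals without an extra factor blowing up $\beta$. Beyond that, the argument is essentially a direct citation of \theoremref{thm:distr_sp} adapted to emulators, followed by the observation that once $H$ fits in memory we can compute distances in it for free in terms of pass complexity.
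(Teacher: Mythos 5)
Your overall strategy is exactly the paper's: build the $(1+\epsilon,\beta)$-emulator in the streaming model (via the emulator modification of the construction behind Theorem~\ref{thm:distr_sp}), hold it in memory, and solve $V\times V$ distances on it offline with no further passes; the correctness and the bound $\sum_i\delta_i=O(\beta)$ are handled as in Section~\ref{sec:distr_sp}. Your treatment of the second tradeoff ($O(\beta)$ passes, $O(n^{1+\rho}\log n)$ space via running all explorations of a phase concurrently and invoking Lemma~\ref{lm:property}) is correct and matches the paper.

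However, your first tradeoff is internally inconsistent and its space bound is wrong as stated. You describe running each phase as a \emph{single} subphase in which every vertex stores the state of \emph{every} exploration that reaches it, and you bound the space by $O(n\cdot\deg_i\cdot\log n)$; but with $\deg_i=n^{\rho}$ in stage~2 this is $O(n^{1+\rho}\log n)$, which is \emph{not} dominated by $O(n^{1+1/\kappa}+n\log n)$ when $\rho>1/\kappa$ --- that is precisely the space bound of the \emph{other} tradeoff. The mechanism the paper uses to achieve the small space is different: the interconnection step of phase $i$ is split into $c\cdot\deg_i\cdot\log n$ subphases, and in each subphase every not-yet-processed exploration source samples itself with probability $1/\deg_i$, so that each vertex is traversed by only $O(\log n)$ explorations per subphase whp (by Chernoff with a union bound over vertices, phases and subphases). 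This is what brings the working space down to $O(n\log n)$ plus the $O(n^{1+1/\kappa})$ size of the emulator, and it is also the genuine source of the $O(\deg_i\log n)=O(n^{\rho}\log n)$ multiplicative blow-up in the pass count --- not a need to ``guarantee whp that every source participates'' on top of a single parallel pass, as you wrote. You cannot take the pass count from the subsampled variant and the space analysis from the all-at-once variant simultaneously; you must commit to the subsampling scheme for the first tradeoff and rederive the space bound from the $O(\log n)$ per-vertex load it guarantees.
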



If the actual paths rather than just distances are needed, then  we compute a $(1+\epsilon,\beta)$-spanner $H$ with $\beta$ given by (\ref{eq:beta_cent_asp}) and with expected $O(n^{1+1/\kappa})$ edges (with the restriction on $\kappa$ as above). Then we compute $V \times V$ $(1+\epsilon,\beta)$-approximate shortest paths in $H$ offline, using space $O(H|)$.

\begin{corollary}
\label{cor:str_unw_asp}
For any $n, \kappa,\epsilon, \rho$ and $G$ as in Corollary \ref{cor:asp_unwtd}, a variant of our streaming algorithm computes $(1+\epsilon,\beta)$-approximate shortest {\em paths}
for $V \times V$ with $\beta$ given by (\ref{eq:beta_cent_asp}).
It uses  either $O(n^\rho/\rho \cdot \log n \cdot \beta)$ passes  and  expected space $O(n^{1+1/\kappa}  + n \cdot \log n)$ or $O(\beta)$ passes and space $O(n^{1+\rho} \cdot \log n)$, whp.
\end{corollary}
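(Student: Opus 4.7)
The plan is to invoke the streaming spanner construction from \theoremref{thm:distr_sp} to build, within the stated pass/space budgets, a $(1+\epsilon,\beta)$-spanner $H$ of $G$ with $\beta$ given by \eqref{eq:beta_cent_asp} and $O(n^{1+1/\kappa})$ edges in expectation, and then observe that once $H$ is stored in memory, approximate shortest paths for all pairs in $V \times V$ can be recovered \emph{offline}, i.e., without any additional passes over the input stream. Thus the pass complexities are exactly those of the spanner construction, and the only thing to verify is that the space required by the offline shortest-paths computation on $H$ fits inside the space budgets stated in the corollary.

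First, I would invoke the first variant of \theoremref{thm:distr_sp}: using $O(n^\rho/\rho \cdot \log n \cdot \beta)$ passes and expected space $O(n^{1+1/\kappa} + n \log n)$, construct $H$. Since the space suffices to store $H$ explicitly (its expected size is $O(n^{1+1/\kappa})$), at the end of the construction the spanner itself is held in memory. Running a BFS from each source in $V$ on $H$ is then a purely internal (RAM-style) computation: it uses space $O(|H| + n \log n)$, which is absorbed into the stated budget, and, crucially, requires no further access to the stream. Because $H$ has stretch $(1+\epsilon,\beta)$, every BFS tree in $H$ yields a path from its root to every other vertex of length at most $(1+\epsilon) d_G(u,v) + \beta$, which is the desired approximation. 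Note that we do not store the $V\times V$ output; we output each approximate path on the fly as it is computed, so the $n^2$ output size does not affect the working space.

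For the second option, I would invoke the alternative variant of \theoremref{thm:distr_sp}, which constructs the same spanner $H$ in $O(\beta)$ passes using space $O(n^{1+\rho} \log n)$ whp. Since $\rho \ge 1/\kappa$, the spanner of size $O(n^{1+1/\kappa})$ fits comfortably inside this space, and the offline BFS computation from each source in $V$ again proceeds inside memory with no additional passes. The stretch and $\beta$ bounds follow directly from the spanner's guarantee (Theorem~\ref{thm:sp}/\ref{thm:distr_sp}).

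The only subtle point, rather than a real obstacle, is ensuring that we are using the \emph{spanner} (a subgraph with real paths) and not the emulator (which has virtual weighted edges), so that the computed BFS paths in $H$ correspond to actual walks in $G$; this is why we invoke \theoremref{thm:distr_sp} and inherit the restriction on $\kappa$ from \corollaryref{cor:asp_unwtd} and thus the $\beta$ given by \eqref{eq:beta_cent_asp} rather than the smaller \eqref{eq:beta_centr}. Everything else is a direct transfer of the centralized argument of \corollaryref{cor:asp_unwtd} to the streaming model via the streaming spanner.
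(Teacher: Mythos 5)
Your proposal is correct and follows exactly the paper's argument: build the $(1+\epsilon,\beta)$-spanner via the streaming construction of \theoremref{thm:distr_sp} (inheriting both pass/space tradeoffs), then compute the $V\times V$ approximate shortest paths offline on the stored spanner. The remark distinguishing the spanner from the emulator, and hence the need for the $\beta$ of \eqref{eq:beta_cent_asp} and the restriction on $\kappa$ from \corollaryref{cor:asp_unwtd}, matches the paper's reasoning as well.
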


If one is interested only in $S \times V$ paths or distances, then it is possible to eliminate the additive term of $\beta$ by using $O(|S| \cdot \beta/\epsilon)$ additional passes. These passes are used to compute exactly distances between pairs $(s,v) \in S \times V$, with $d_G(s,v) \le \beta/\epsilon$.
The overall number of passes becomes $O(|S| \cdot \beta/\epsilon + n^\rho /\rho \cdot \log n \cdot \beta)$, and space   $O( n^{1+1/\kappa} + n \cdot \log n)$.
Whenever $|S|\ge n^{1/\kappa}\log n$, we can set $\rho = {{\log |S|} \over {\log n}} - {{ \log\log n} \over {\log n}}$, and obtain the following corollary.

\begin{corollary}
\label{cor:str_unw_eps}
For any $n, \epsilon, \rho, \kappa$ and $G$ as in Corollary \ref{cor:str_unw}, and any set $S\subseteq V$ of size at least $|S|\ge n^{1/\kappa}\log n$, a variant of our streaming algorithm computes
 $(1+\epsilon)$-approximate shortest distances for $ S \times V$ in expected
\begin{equation}
\label{eq:passes}
\frac{|S|}{\epsilon} \cdot O\left(\frac{1}{\epsilon}\left(\log\kappa  + {{\log n} \over {\log |S|\! -\! \log\log n}}\right)\right)^{\log \kappa \! +\! {{\log n} \over {\log |S| \!-\! \log\log n}}}
\end{equation}
 passes, and  space $O( n^{1+1/\kappa} + n \cdot \log n)$. To compute actual paths we have similar complexities\footnote{Though there is an additional additive term of $\log^{(3)} \kappa$ in the exponent in (\ref{eq:passes}).}, but one needs to restrict $\kappa$ as in Corollary \ref{cor:asp_unwtd}.
\end{corollary}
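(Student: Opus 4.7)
The plan is to use a $(1+\epsilon,\beta)$-emulator plus local exact-distance computation around each source in $S$ to eliminate the additive term $\beta$. The underlying observation is that $(1+\epsilon)d_G(s,v)+\beta\le(1+2\epsilon)d_G(s,v)$ whenever $d_G(s,v)>\beta/\epsilon$; so if we replace the emulator distance by the exact distance on pairs with $d_G(s,v)\le\beta/\epsilon$ and rescale $\epsilon$ by a constant factor, we obtain a purely multiplicative $(1+\epsilon)$-approximation on $S\times V$. This is exactly the rationale sketched in the paragraph immediately preceding the corollary.

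Concretely, I would first invoke \corollaryref{cor:str_unw} with the choice $\rho=\log|S|/\log n-\log\log n/\log n$, so that $n^\rho=|S|/\log n$. The assumption $|S|\ge n^{1/\kappa}\log n$ is precisely what guarantees $\rho\ge 1/\kappa$, so the preconditions of \corollaryref{cor:str_unw} hold and we obtain a $(1+\epsilon,\beta)$-emulator $H$ with $\beta=O((\log\kappa+1/\rho)/\epsilon)^{\log\kappa+1/\rho}$, constructed in $O(n^\rho/\rho\cdot\log n\cdot\beta)=O(|S|\beta/\rho)$ passes and expected space $O(n^{1+1/\kappa}+n\log n)$. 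Next I would run, sequentially over $s\in S$, a truncated BFS in $G$ from $s$ to depth $\lceil\beta/\epsilon\rceil$ using the standard streaming BFS that advances one layer per pass; each BFS uses $O(n\log n)$ scratch space for the current frontier and distance labels and $O(\beta/\epsilon)$ passes, contributing $O(|S|\beta/\epsilon)$ passes overall without exceeding the space budget. For each pair $(s,v)$ I would then report the exact BFS distance when $v$ was reached and $d_H(s,v)$ otherwise, which is a $(1+O(\epsilon))$-approximation by the observation above.

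Summing, the pass count is $O(|S|\beta/\epsilon+|S|\beta/\rho)$. The emulator term is absorbed into the BFS term after noting that $1/\rho\le\log\kappa+1/\rho$, which inflates $\beta/\rho$ by at most a factor of $\log\kappa+1/\rho$; this factor is absorbed into the constant hidden by the $O$ inside the base of $\beta$ (since raising the base to the power $\log\kappa+1/\rho$ multiplies $\beta$ by that factor), yielding the bound \eqref{eq:passes}. For the variant that returns actual paths, I would repeat the argument with the $(1+\epsilon,\beta)$-spanner of \theoremref{thm:sp} (via \corollaryref{cor:str_unw_asp}) replacing the emulator; this forces $\kappa$ into the smaller range of \corollaryref{cor:asp_unwtd} and contributes the extra additive $\log^{(3)}\kappa$ in the exponent, as stated in the footnote.

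The main obstacle is purely arithmetic: checking that $|S|\beta/\rho$ really does fit cleanly inside the single expression \eqref{eq:passes}, since $\rho$ and $\epsilon$ both appear in the base and the exponent of $\beta$ and one must be careful not to introduce spurious factors when swallowing $1/\rho$ into $1/\epsilon$. The streaming primitives — emulator construction via \corollaryref{cor:str_unw} and sequential ball-BFS — are already in hand, and the approximation-quality argument is the standard textbook trick, so no conceptual difficulties remain beyond this bookkeeping.
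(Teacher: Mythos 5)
Your proposal is correct and follows essentially the same route as the paper: construct the streaming $(1+\epsilon,\beta)$-emulator of Corollary~\ref{cor:str_unw}, spend $O(|S|\cdot\beta/\epsilon)$ extra passes on truncated BFS to handle pairs with $d_G(s,v)\le\beta/\epsilon$, and set $\rho=\frac{\log|S|}{\log n}-\frac{\log\log n}{\log n}$ so that $n^\rho\log n=|S|$, after which $1/\rho$ becomes the $\frac{\log n}{\log|S|-\log\log n}$ term in the exponent. Your bookkeeping for absorbing the $|S|\beta/\rho$ term into \eqref{eq:passes} (via enlarging the constant in the base of $\beta$) is a valid filling-in of a step the paper leaves implicit.
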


Observe that for a constant $\epsilon>0$ and $|S| = n^{\Omega(1)}$, we can take a constant $\kappa$, so the number of passes is $O(|S|)$. One can also get for $|S| = 2^{\Omega(\log n/\log\log n)}$,
a streaming algorithm for computing $(1+\epsilon)$-approximate shortest paths for $S \times V$ by setting $\kappa = c\log n/\log\log\log n$, and obtaining
$O(|S|/\epsilon) \cdot O({{\log\log n} \over \epsilon})^{2\log\log n}$ passes and space $O(n \log n)$.

\bibliographystyle{alpha}
\bibliography{spanner}

\end{document}